\newtheorem{theorem}{Theorem}
\newtheorem{proposition}{Proposition}
\newtheorem{lemma}{Lemma}
\newtheorem{corollary}{Corollary}
\theoremstyle{definition}
\newtheorem{definition}{Definition}
\newcommand{\Real}{\mathbb R}
\newcommand{\ceil}[1]{\left\lceil #1 \right\rceil}
\newcommand{\MM}[1]{\mathsf{MM}(#1)}
\newcommand{\Rank}[1]{\mathsf{Rank}(#1)}
\newcommand{\INV}[1]{\mathsf{INV}(#1)}
\newcommand{\DET}[1]{\mathsf{DET}(#1)}
\newcommand{\SYS}[1]{\mathsf{SYS}(#1)}
\newcommand{\DIS}[1]{\mathsf{DIST}(#1)}
\newcommand{\poly}{\mathrm{poly}}
\newcommand{\dist}{\ast}
\newcommand{\rk}{\mathrm{rank}}
\newcommand{\charpoly}{\mathbf{charpol}}
\newcommand{\minpoly}{\mathbf{minpol}}
\newcommand{\mybar}[1]{\lambda}
\providecommand{\field}{\mathbb{F}}
\providecommand{\Int}{\mathbb{Z}}
\title{Further Algebraic Algorithms in the Congested Clique Model and Applications to Graph-Theoretic Problems}
\author{
Fran{\c c}ois Le Gall\\
Graduate School of Informatics\\
Kyoto University\\
\url{legall@i.kyoto-u.ac.jp}}
\begin{document}
\date{}
\maketitle
\thispagestyle{empty}
\setcounter{page}{1}

\begin{abstract}
Censor-Hillel et al.~[PODC'15] recently showed how to efficiently implement centralized algebraic algorithms for matrix multiplication in the congested clique model, a model of distributed computing that has received increasing attention in the past few years. This paper develops further algebraic techniques for designing algorithms in this model.  We present deterministic and randomized algorithms,  in the congested clique model, for efficiently computing multiple independent instances of matrix products, computing the determinant, the rank and the inverse of a matrix, and solving systems of linear equations. As applications of these techniques, we obtain more efficient algorithms for the computation, again in the congested clique model, of the all-pairs shortest paths and the diameter in directed and undirected graphs with small weights, improving over Censor-Hillel et al.'s work. We also obtain algorithms for several other graph-theoretic problems such as computing the number of edges in a maximum matching and the Gallai-Edmonds decomposition of a simple graph, and computing a minimum vertex cover of a bipartite graph.
\end{abstract}

\newpage

\section{Introduction}
\paragraph{Background.}
The congested clique model is a model in distributed computing that has recently received increasing attention \cite{Censor-Hillel+15,Dolev+DISC12,Drucker+PODC14,Hegeman+PODC15,Hegeman+SIROCCO14,Hegeman+DISC14,Henzinger+15,LenzenPODC13,Lenzen+STOC11,Lotker+SPAA03,NanongkaiSTOC14,PattShamir+PODC11}. In this model $n$ nodes communicate with each other over a fully-connected network (i.e., a clique) by exchanging messages of size $O(\log n)$ in synchronous rounds. Compared with the more traditional congested model \cite{Peleg00}, the congested clique model removes the effect of distances in the computation and thus focuses solely on understanding the role of congestion in distributed computing.

Typical computational tasks studied in the congested clique model are graph-theoretic problems \cite{Censor-Hillel+15,Dolev+DISC12,Drucker+PODC14,Hegeman+PODC15,Henzinger+15,NanongkaiSTOC14}, where a graph $G$ on $n$ vertices is initially distributed among the $n$ nodes of the network (the $\ell$-th node of the network knows the set of vertices adjacent to the $\ell$-th vertex of the graph, and the weights of the corresponding edges if the graph is weighted) and the nodes want to compute properties of $G$. Besides their theoretical interest and potential applications, such problems have the following natural interpretation in the congested clique model: the graph~$G$ represents the actual topology of the network, each node knows only its neighbors but can communicate to all the nodes of the network, and the nodes want to learn information about the topology of the network. 

Censor-Hillel et al.~\cite{Censor-Hillel+15} recently developed algorithms for several graph-theoretic problems in the congested clique model by showing how to implement centralized algebraic algorithms for matrix multiplication in this model. More precisely, they constructed a $O(n^{1-2/\omega})$-round algorithm for matrix multiplication, where $\omega$ denotes the exponent of matrix multiplication (the best known upper bound on $\omega$ is $\omega<2.3729$, obtained in \cite{LeGallISSAC14,WilliamsSTOC12}, which gives exponent $1-2/\omega<0.1572$ in the congested clique model), improving over the $O(n^{2-\omega})$ algorithm mentioned in~\cite{Drucker+PODC14}, in the following setting: given two $n\times n$ matrices $A$ and $B$ over a field,  the $\ell$-th node of the network initially owns the $\ell$-th row of $A$ and the $\ell$-column of $B$, and needs to output the $\ell$-th row and the $\ell$-column of the product $AB$. 
Censor-Hillel et al.~consequently obtained $O(n^{1-2/\omega})$-round algorithms for several graph-theoretic tasks that reduce to computing the powers of (some variant of) the adjacency matrix of the graph, such as counting the number of triangles in a graph (which lead to an improvement over the prior best algorithms for this task \cite{Dolev+DISC12,Drucker+PODC14}), detecting the existence of a constant-length cycle and approximating the all-pairs shortest paths in the input graph (improving the round complexity obtained in \cite{NanongkaiSTOC14}). 
One of the main advantages of such an algebraic approach in the congested clique model is its versatility: it makes possible to construct fast algorithms for graph-theoretic problems, and especially for problems for which the best non-algebraic centralized algorithm is highly sequential and does not seem to be implementable efficiently in the congested clique model, simply by showing a reduction to matrix multiplication (and naturally also showing that this reduction can be implemented efficiently in the congested clique model).\vspace{-3mm} 

\paragraph{Our results.}
In this paper we develop additional algebraic tools for the congested clique model.

We first consider the task of computing in the congested clique model not only one matrix product, but multiple independent matrix products. More precisely, given $k$ matrices $A_1,\ldots,A_k$ each of size $n\times m$ and $k$ matrices $B_1,\ldots,B_k$ each of size $m\times m$, initially evenly distributed among the $n$ nodes of the network, the nodes want to compute the $k$ matrix products $A_1B_1,\ldots,A_kB_k$. Prior works \cite{Censor-Hillel+15,Drucker+PODC14} considered only the case $k=1$ and $m=n$, i.e., one product of two square matrices. Our contribution is thus twofold: we consider the rectangular case, and the case of several matrix products as well. Let us first discuss our results for square matrices ($m=n$). By using sequentially $k$ times the matrix multiplication algorithm from~\cite{Censor-Hillel+15}, $k$ matrix products can naturally be computed in $O(k n^{1-2/\omega})$ rounds. In this work we show that we can actually do better.
\begin{theorem}[Simplified version]\label{theorem:main}
In the congested clique model $k$ independent products of pairs of $n\times n$ matrices can be computed with round complexity
\[
\left\{
\begin{array}{ll}
O(k^{2/\omega}n^{1-2/\omega})&\textrm{ if }\:\: 1\le k< n,\\
O(k) &\textrm{ if }\:\:k\ge n.
\end{array}
\right.
\]
\end{theorem}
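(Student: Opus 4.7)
The plan is to generalize the matrix multiplication algorithm of Censor-Hillel et al.\ by tuning the granularity of the bilinear decomposition to match $k$. I would fix a bilinear algorithm that multiplies two $\tau \times \tau$ matrices via $R = O(\tau^\omega)$ bilinear operations, with coefficient tensors $u, v, w$. Viewing each of the $k$ instances $A_j B_j$ as a $\tau \times \tau$ array of $(n/\tau) \times (n/\tau)$ blocks and applying the bilinear algorithm instance by instance yields $kR$ independent sub-products of $(n/\tau) \times (n/\tau)$ matrices, each of whose operands is a prescribed linear combination of blocks drawn from a single $A_j$ or $B_j$.

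For the regime $1 \le k < n$ I would set $\tau = \lceil (n/k)^{1/\omega} \rceil$, so that $kR = \Theta(n)$ and each of the $n$ nodes is assigned exactly one sub-product. Each node then has to obtain its two $(n/\tau) \times (n/\tau)$ operand matrices, multiply them locally (for free in the round model), and push the resulting block back. The per-node operand size is $O((n/\tau)^2) = O(k^{2/\omega} n^{2-2/\omega})$, which after dividing by the per-node bandwidth $\Theta(n)$ matches the claimed bound of $O(k^{2/\omega} n^{1-2/\omega})$ rounds. For $k \ge n$, the block decomposition is unnecessary: I would process the products in $\lceil k/n \rceil$ batches of $n$, where each batch gathers one full product at one node via Lenzen's routing in $O(n)$ rounds and redistributes the outputs in another $O(n)$ rounds, giving $O(k)$ rounds overall.

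The main technical step will be the distributed construction of the linear combinations required by the bilinear algorithm, since the inputs are initially in row/column form rather than in block form. Following Censor-Hillel et al., I would have each node compute the partial contribution of its locally-held row/column to each summed operand it is responsible for preparing, and then route each contribution to the node owning the corresponding sub-product, where contributions from different sources are added together to assemble the final operand. The hard part is a balance calculation showing that every node's total outgoing and incoming traffic across all phases stays within $O(n \cdot T)$ for $T = O(k^{2/\omega} n^{1-2/\omega})$, so that Lenzen's routing lemma \cite{LenzenPODC13} delivers all messages in $T$ rounds; this amounts to verifying that the $u$ and $v$ tensors of the chosen bilinear algorithm distribute their weight evenly enough across source blocks. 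Integrality of $\tau$ and $n/\tau$ is handled by padding each $A_j$ and $B_j$ to a size that is an integer multiple of the chosen block size, which affects all bounds only by constant factors.
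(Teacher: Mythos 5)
Your top-level plan coincides with the paper's (Proposition~\ref{prop:medium-l} specialized to $m=n$, where $\gamma=1$): choose $\tau=\Theta((n/k)^{1/\omega})$, partition each $A_j,B_j$ into a $\tau\times\tau$ array of $(n/\tau)\times(n/\tau)$ blocks, apply a bilinear algorithm of rank $\Theta(n/k)$ per instance so that the $k$ products together yield $\Theta(n)$ block sub-products, assign one sub-product per node, and read off the bound $O(k^{2/\omega}n^{1-2/\omega})$ from the per-node operand size divided by the per-round bandwidth $n$. The $k\ge n$ branch (iterate an $\MM{n,n,n,\field}$ routine $\lceil k/n\rceil$ times, each batch shipping one full operand pair to each node in $O(n)$ rounds) is also what the paper does.

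Where your sketch goes astray is the load-balanced assembly of the operands $S^{(\mu)},T^{(\mu)}$. First, no evenness property of the bilinear tensors is needed or used; looking for one is a red herring. Second, the literal scheme you describe---each input-holding node computes the partial contribution of its row/column to every $S^{(\mu)}$ and routes it directly to the node responsible for the sub-product $S^{(\mu)}T^{(\mu)}$, which then sums---overshoots quantitatively. Since each of the $\tau$ block-rows of $A_j$ is held by a disjoint set of source nodes, the node assigned $S^{(\mu)}T^{(\mu)}$ receives about $\tau$ partial contributions per operand entry, so its incoming traffic is $\Theta\big(\tau\cdot(n/\tau)^2\big)=\Theta(n^2/\tau)$ rather than $\Theta\big((n/\tau)^2\big)$; the resulting round bound is $O(k^{1/\omega}n^{1-1/\omega})$, far above the target $O(k^{2/\omega}n^{1-2/\omega})$. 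The paper (mirroring Censor-Hillel et al.\ for $k=1$) inserts an intermediate aggregation stage to remove this factor: each $(n/\tau)\times(n/\tau)$ block is cut further into a $\sqrt{n/k}\times\sqrt{n/k}$ grid of sub-blocks, and every node is given an extra label $(s,u,v)\in[k]\times[\sqrt{n/k}]^2$. Node $(s,u,v)$ first gathers the raw entries of the $(u,v)$-sub-block of every block of $A_s$ and $B_s$ (a one-to-one data re-layout of size $O(kn)$ per node), then locally forms the $(u,v)$-sub-block of all $n/k$ operands $S_s^{(\mu)},T_s^{(\mu)}$; only afterwards is each full $S_s^{(\mu)},T_s^{(\mu)}$ collected at node $(s,\mu)$ with incoming traffic exactly $2(n/\tau)^2$. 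This two-stage routing is balanced for \emph{any} choice of bilinear algorithm with no structural hypothesis on the tensors; you should adopt it, and drop the tensor-balance step from your plan.
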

This generalization of the results from~\cite{Censor-Hillel+15} follows from a simple strategy: divide the $n$ nodes of the network into $k$ blocks (when $k\le n$), each containing roughly $n/k$ nodes, compute one of the $k$ matrix products per block by using an approach similar to \cite{Censor-Hillel+15} (i.e., a distributed version of the best centralized algorithm computing one instance of square matrix multiplication), and finally distribute the relevant part of the $k$ output matrices to all the nodes of the network.  Analyzing the resulting protocol shows that the dependence in $k$ in the overall round complexity is reduced to $k^{2/\omega}$. This sublinear dependence in $k$ has a significant number of implications (see below). 

The complete version of Theorem \ref{theorem:main}, given in Section \ref{sec:mm}, also considers the general case where the matrices may not be square (i.e., the case $m\neq n$), which will be crucial for some of our applications to the All-Pairs Shortest Path problem. The proof becomes more technical than for the square case, but is conceptually very similar: the main modification is simply to now implement a distributed version of the best centralized algorithm for rectangular matrix multiplication. The upper bounds obtained on the round complexity depend on the complexity of the best centralized algorithms for rectangular matrix multiplication (in particular the upper bounds given in \cite{LeGallFOCS12}). Figure \ref{fig2} depicts the upper bounds we obtain for the case $k=1$. While the major open problem is still whether the product of two square matrices can be computed in a constant (or nearly constant) number of rounds, our results show that for $m=O(n^{0.651\ldots})$, the product of an $n\times m$ matrix by an $m\times n$ matrix can indeed be computed in $O(n^\epsilon)$ rounds for any $\epsilon>0$. We also show lower bounds on the round complexity of the general case (Proposition \ref{prop:LB} in Section \ref{sec:mm}), which are tight for most values of $k$ and $m$, based on simple arguments from communication complexity.
 
\begin{figure}[ht]
\centering
\begin{tikzpicture}
\begin{axis}[
legend cell align=left,
width=11.5cm, 
height=5.5cm,
xmin=-0.1, xmax=2.1,
thick,
 scale only axis,
xmajorgrids,
ymajorgrids,
y label style={at={(-0.05,0.5)}},
 xtick={0,0.5,0.65149,1,2,3},
 xticklabels={$n^0$,$n^{0.5}$,$n^{0.651\ldots}\!\!\!\!\!\!\!\!\!\!\!$,$n^1$,$n^2$,$n^3$},
 ytick={0,0.15806,0.5,1,2},
 yticklabels={$n^0$,$n^{0.1572}$,$n^{0.5}$,$n^1$,$n^2$},
xlabel={Value of $m$},
ylabel={Round complexity},
legend pos=south east]

\addplot [solid, every mark/.append style={solid, fill=gray}] coordinates {
(0,0)
(0.5,0)
(0.65149,0)
(0.655010867,3.1499E-05)
(0.660063058,0.000185466)
(0.665157209,0.00046928)
(0.670291956,0.000884717)
(0.675465707,0.001432944)
(0.7018152,0.006050667)
(0.728676702,0.013369825)
(0.755702037,0.022808146)
(0.771985579,0.029312812)
(0.782615312,0.03384583)
(0.80921228,0.046061402)
(0.835346497,0.059122839)
(0.860915298,0.072768653)
(0.885849284,0.086794269)
(0.910103875,0.10103875)
(0.933652685,0.115369134)
(0.956484808,0.12969616)
(0.978598694,0.143947753)
(1,0.158063833)
(1.040714109,0.185717816)
(1.07875901,0.2124099)
(1.114298646,0.238009024)
(1.147508562,0.26245719)
(1.178564031,0.285743876)
(1.247859649,0.339040935)
(1.30706033,0.385879339)
(1.358068003,0.427091196)
(1.402381239,0.463491682)
(1.475356113,0.524643887)
(1.532691149,0.573847081)
(1.57907065,0.6139529)
(1.648641356,0.675679322)
(2,1)
(3,2)
(3.1,2.1)
};

\addplot [only marks, every mark/.append style={solid, fill=gray}, mark=otimes*] coordinates {
(0,0)
(0.5,0)
(0.65149,0)
(1,0.158063)
(2,1)
(3,2)
};

\end{axis}
\end{tikzpicture}
\vspace{-4mm}
\caption{\label{fig2}Our upper bounds on the round complexity of the computation of the product of an $n\times m$ matrix by an $m\times n$ matrix in the congested clique model.}
\end{figure}
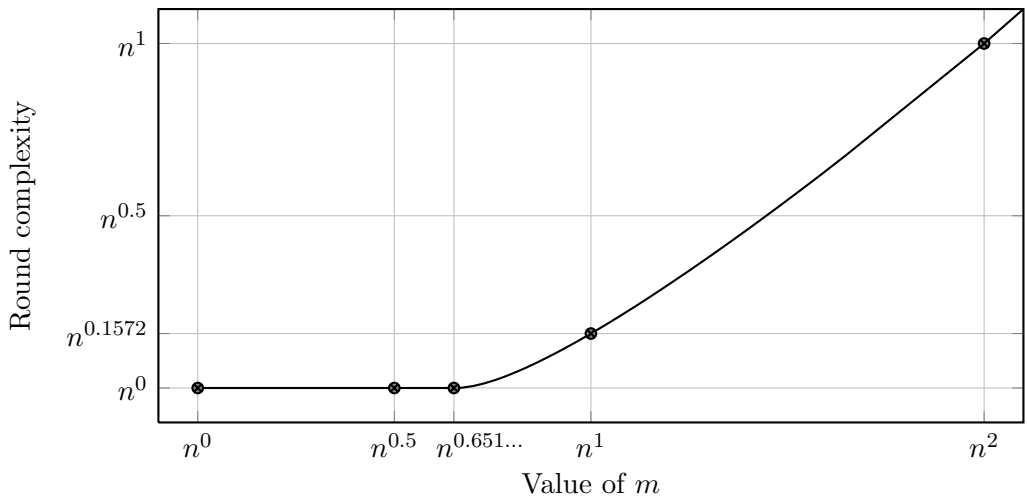

We then study the following basic problems in linear algebra: computing the determinant, the rank or the inverse of an $n\times n$ matrix over a finite field $\field$ of order upper bounded by a polynomial of $n$, and solving a system of $n$ linear equations and $n$ variables. We call these problems $\DET{n,\field}$, $\Rank{n,\field}$, $\INV{n,\field}$ and $\SYS{n,\field}$, respectively (the formal definitions are given in Section \ref{sec:prelim}).
While it is known that in the centralized setting these problems can be solved with essentially the same time complexity as matrix multiplication \cite{Burgisser+97}, these reductions are typically sequential and do not work in a parallel setting. In this paper we design fast deterministic and randomized algorithm for these four basis tasks, and obtain the following results. 
\begin{theorem}\label{theorem:det-inv}
Assume that $\field$ has characteristic greater than $n$. In the congested clique model, the deterministic round complexity of $\DET{n,\field}$ and $\INV{n,\field}$ is $O(n^{1-1/\omega})$.
\end{theorem}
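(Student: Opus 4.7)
The plan is to combine a baby-step/giant-step computation of the powers of $M$ with the multi-product routine of Theorem~\ref{theorem:main}, use Newton's identities to extract the characteristic polynomial $\charpoly(M)$, read off $\DET{n,\field}$ directly, and assemble $\INV{n,\field}$ via Cayley--Hamilton. The hypothesis $\mathrm{char}(\field)>n$ is used precisely to invert the integers $1,\ldots,n$ that appear in Newton's recursion.

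Fix $s=\lceil\sqrt{n}\rceil$. First I would compute the baby powers $M,M^{2},\ldots,M^{s}$ by a doubling schedule: at stage $t$, having already produced $M^{1},\ldots,M^{2^{t-1}}$, invoke Theorem~\ref{theorem:main} on the $2^{t-1}$ parallel independent products $M^{i}\cdot M^{2^{t-1}}$ ($i=1,\ldots,2^{t-1}$), at cost $O\bigl((2^{t-1})^{2/\omega}\,n^{1-2/\omega}\bigr)$ rounds. Summing this geometric series over $t=1,\ldots,\lceil\log s\rceil$ gives $O(s^{2/\omega}\,n^{1-2/\omega})$ rounds for the baby stage. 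Setting $N:=M^{s}$ and rerunning the same doubling scheme on $N$ yields the giant powers $N,N^{2},\ldots,N^{\lceil n/s\rceil}$ in $O((n/s)^{2/\omega}\,n^{1-2/\omega})$ rounds. With $s=\sqrt{n}$ both bounds collapse to $O(n^{1/\omega}\cdot n^{1-2/\omega})=O(n^{1-1/\omega})$.

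Every exponent $k\in\{1,\ldots,n\}$ decomposes uniquely as $k=is+j$ with $0\le j<s$, so $M^{k}=N^{i}M^{j}$. Under the standard row/column distribution, node $\ell$ holds the $\ell$-th row of $N^{i}$ and the $\ell$-th column of $M^{j}$, and so computes the diagonal entry $(N^{i}M^{j})_{\ell\ell}=\sum_{p}(N^{i})_{\ell p}(M^{j})_{p\ell}$ with no communication. By assigning one sink node per trace, all $n$ values $s_k=\mathrm{tr}(M^{k})$ are aggregated in $O(1)$ rounds, then gathered at a single node, which plugs them into Newton's identities (a triangular recursion of $O(n^{2})$ local operations, well-defined because $\mathrm{char}(\field)>n$ makes $1,\ldots,n$ units) to produce the coefficients $c_1,\ldots,c_n$ of $\charpoly(M)=\lambda^{n}+c_1\lambda^{n-1}+\cdots+c_n$, and broadcasts them to all nodes.

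For $\DET{n,\field}$ the answer is $(-1)^{n}c_n$. For $\INV{n,\field}$, Cayley--Hamilton yields $M^{-1}=-c_n^{-1}\sum_{k=0}^{n-1}c_{n-1-k}M^{k}$ with the convention $c_0:=1$. Regrouping by the baby-step/giant-step decomposition,
\[
M^{-1}=-c_n^{-1}\sum_{i=0}^{\lceil n/s\rceil-1}N^{i}\,P_i,\qquad P_i:=\sum_{j=0}^{s-1}c_{n-1-(is+j)}\,M^{j},
\]
where each $P_i$ is a communication-free linear combination of the already-distributed baby powers (formed by each node applying the same scalar combination to its stored rows and columns). The $\lceil n/s\rceil$ independent products $N^{i}P_i$ are then executed by one more call to Theorem~\ref{theorem:main} in $O((n/s)^{2/\omega}\,n^{1-2/\omega})=O(n^{1-1/\omega})$ rounds, and $M^{-1}$ is obtained by a final local summation of the resulting matrices.

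The principal technical point to watch is distributional bookkeeping: every intermediate matrix produced by an invocation of Theorem~\ref{theorem:main} must exit in the standard row/column layout, so that the next stage of the doubling schedule, the trace extraction, the broadcast of the $c_i$, and the final linear combinations all remain communication-free. This relies on the specific output placement established in the proof of Theorem~\ref{theorem:main} and is the only place where the argument interacts non-trivially with the underlying distributed matrix-multiplication primitive.
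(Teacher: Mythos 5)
Your proposal is correct and follows essentially the same route as the paper: compute $p=\lceil\sqrt n\rceil$ baby powers and $\lceil n/p\rceil$ giant powers of $A$ via repeated squaring plus the multi-product routine of Theorem~\ref{theorem:main}; extract all traces $s_k=\mathrm{tr}(A^k)$ locally from the stored rows of the giant powers and columns of the baby powers; recover the coefficients of $\charpoly(A)$ from $s_1,\ldots,s_n$ (possible because $\mathrm{char}(\field)>n$); read off the determinant and assemble $A^{-1}$ from Cayley--Hamilton with a second baby-step/giant-step product. The only divergence is that where you have one designated node carry out the Newton recursion locally in $O(n^2)$ scalar operations, the paper writes the same relations as the linear system $S\vec c=-\vec s$ and inverts the lower-triangular $n\times n$ matrix $S$ with a distributed recursive algorithm (its Proposition~\ref{prop:trig-inv}); since $S$ is determined by $n$ scalars that fit in a single message round and local work is free in the congested clique, your version is a mild simplification that dispenses with that subroutine. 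Your grouping $A^{-1}=-c_n^{-1}\sum_i N^i P_i$ with $P_i$ a scalar combination of baby powers is the transpose of the paper's grouping $-c_n^{-1}\sum_{a_2} E_{a_2}A^{a_2-1}$ with $E_{a_2}$ a scalar combination of giant powers; both reduce to one call to $\MM{n,n,p,\field}$ and give the same $O(p^{2/\omega}n^{1-2/\omega})=O(n^{1-1/\omega})$ bound.
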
\vspace{-5mm}

\begin{theorem}\label{th:rand}
Assume that $\field$ has order $|\field|=\Omega(n^2\log n)$. In the congested clique model, the randomized round complexity of $\DET{n,\field}$, $\SYS{n,\field}$ and $\Rank{n,\field}$ is $O(n^{1-2/\omega}\log n)$.
\end{theorem}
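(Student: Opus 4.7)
The plan is to implement a Wiedemann-style method in the congested clique. For an $n\times n$ matrix $A$ and random vectors $u,v\in\field^n$, running Berlekamp--Massey on the scalar sequence $(u^TA^iv)_{i=0}^{2n-1}$ recovers the minimum polynomial of $A$, or of a preconditioned version of it, with high probability over $u,v$ when the field is large. From that minimum polynomial one obtains the three desired quantities after suitable randomized preconditioning: for $\DET{n,\field}$, multiply $A$ on the left by a random diagonal $D$ so that the minimum and characteristic polynomial of $DA$ coincide with high probability, and then $\det A=\chi_{DA}(0)/((-1)^n\det D)$; for $\Rank{n,\field}$, apply Kaltofen--Saunders random triangular preconditioners so that $\rk(A)$ equals the degree of the minimum polynomial (up to a trivial offset); for $\SYS{n,\field}$, note that $f(A)=0$ rewrites $A^{-1}$ as a polynomial in $A$ of degree $<n$, so $x=A^{-1}b$ is a linear combination, with explicit coefficients coming from $f$, of the Krylov vectors $b,Ab,\ldots,A^{n-1}b$. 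A standard Schwartz--Zippel analysis of these preconditioners and of the projections $u,v$ bounds the failure probability by $O(1/n)$ under $|\field|=\Omega(n^2\log n)$.

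The computational core is the distributed construction of the Krylov matrix $K=[\,v\mid Av\mid\cdots\mid A^{2n-1}v\,]$ of size $n\times 2n$. The plan is repeated squaring: maintain $A^{2^j}$ together with the partial Krylov matrix $K_j=[\,v\mid Av\mid\cdots\mid A^{2^j-1}v\,]$ of size $n\times 2^j$, and advance via
\[
A^{2^{j+1}}=A^{2^j}\cdot A^{2^j},\qquad K_{j+1}=\bigl[\,K_j\mid A^{2^j}K_j\,\bigr].
\]
After $\lceil\log_2 2n\rceil=O(\log n)$ doubling steps the full matrix $K$ is obtained. Each step consists of two matrix multiplications of dimensions at most $n\times n$ by $n\times 2n$; each reduces to a constant number of $n\times n$ square products and therefore takes $O(n^{1-2/\omega})$ rounds by Theorem \ref{theorem:main}. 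The total cost of producing $K$ is thus $O(n^{1-2/\omega}\log n)$ rounds. Once $K$ is distributed one row per node, the $2n$-long sequence $u^TK$ is assembled and broadcast to every node in $O(1)$ further rounds; every node then runs Berlekamp--Massey locally to recover the minimum polynomial and reads off the determinant, the rank, or the coefficients needed to assemble its own entry of $x$ as a local linear combination of the entries of $K$ it owns. All of these post-processing steps are polynomial-time local computations and contribute no communication.

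The main obstacle is to keep each doubling step within $O(n^{1-2/\omega})$ rounds end to end: this requires that after every multiplication the intermediate matrices $A^{2^j}$ and $K_j$ remain in exactly the row/column block layout that the matrix-multiplication protocol of Theorem \ref{theorem:main} consumes on input; fortunately that protocol also \emph{produces} its output in the same layout, so the doubling iterates without any rearrangement overhead. A secondary point is to verify the randomized correctness analyses of the diagonal and Kaltofen--Saunders preconditioners in our field-size regime; should the per-trial success probability be insufficient for a single run, it can be amplified by running $O(\log n)$ independent trials in parallel via the multiple-products version of Theorem \ref{theorem:main}, still within the $O(n^{1-2/\omega}\log n)$ round budget.
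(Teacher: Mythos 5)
Your proposal follows essentially the same route as the paper: Wiedemann's method with a Krylov sequence computed by $O(\log n)$ rounds of repeated squaring (each an $O(n^{1-2/\omega})$-round square matrix product via Theorem~\ref{theorem:main}), combined with a random diagonal preconditioner (Chen et al.) for the determinant, Kaltofen--Saunders triangular/diagonal preconditioners for the rank, and the expression of $A^{-1}b$ as an explicit polynomial in $A$ applied to $b$ for the linear system. Your bookkeeping of the doubling via $K_{j+1}=[K_j\mid A^{2^j}K_j]$ is a cosmetic reformulation of the paper's $M^{(i+1)}=M^{(i)}+\tilde A^{2^i}N^{(i)}$ recurrence, and the $O(\log n)$-trial amplification you offer as a fallback is unnecessary since $|\field|=\Omega(n^2\log n)$ already drives each preconditioner's single-trial failure probability to $o(1)$; otherwise the argument matches.
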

The upper bounds of Theorems \ref{theorem:det-inv} and \ref{th:rand} are $O(n^{0.5786})$ and $O(n^{0.1572})$, respectively, by basing our implementation on the asymptotically fastest (but impractical) centralized algorithm for matrix multiplication corresponding to the upper bound $\omega<2.3729$. These bounds are $O(n^{2/3})$ and $ O(n^{1/3}\log n)$, respectively, by basing our implementation on the trivial (but practical) centralized algorithm for matrix multiplication (corresponding to the bound $\omega\le 3$). 
These algorithms are obtained by carefully adapting to the congested clique model the relevant known parallel algorithms \cite{Csanky+FOCS75,Kaltofen+SPAA91,Kaltofen+FOCS92,Kaltofen+91,Prerarata+IPL78} for linear algebra, and using our efficient algorithm for computing multiple matrix products (Theorem \ref{theorem:main}) as a subroutine. An interesting open question is whether $\INV{n,\field}$ can be solved with the same (randomized) round complexity as the other tasks. This problem may very well be more difficult; in the parallel setting in particular, to the best of our knowledge, whether matrix inversion can be done with the same complexity as these other tasks is also an open problem. \vspace{-6mm}

\paragraph{Applications of our results.}
The above results give new algorithms for many graph-theoretic problems in the congested clique model, as described below and summarized in Table \ref{table}. 

Our main key tool to derive these applications is Theorem \ref{th:dist} in Section \ref{sec:mm}, which gives an algorithm computing efficiently the distance product (defined in Section \ref{sec:prelim}) of two matrices with small integer entries based on our algorithm for multiple matrix multiplication of Theorem~\ref{theorem:main}. Computing the distance product is a fundamental graph-theoretic task deeply related to the All-Pairs Shortest Path (APSP) problem \cite{SeidelJCSS95,Shoshan+FOCS99,ZwickJACM02}. Combining this result with techniques from \cite{Shoshan+FOCS99}, and observing that these techniques can be implemented efficiently in the congested clique model, we then almost immediately obtain the following result. 
\begin{theorem}\label{th:APSPu}
In the congested clique model, the deterministic round complexity of the all-pairs shortest paths problem in an undirected graph of $n$ vertices with integer weights in $\{0,\ldots,M\}$, where~$M$ is an integer such that $M\le n$, is $\tilde O(M^{2/\omega}n^{1-2/\omega})$.
\end{theorem}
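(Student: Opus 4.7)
The plan is to port the classical Shoshan-Zwick reduction \cite{Shoshan+FOCS99} from APSP in undirected graphs to small-entry distance products over to the congested clique model, using our Theorem \ref{th:dist} as the distance-product subroutine.

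To set up, Theorem \ref{th:dist} furnishes a congested-clique algorithm that computes the distance product of two $n\times n$ integer matrices whose finite entries have absolute value at most $W$ in $\tilde O(W^{2/\omega}n^{1-2/\omega})$ rounds; internally it reduces to $\Theta(W)$ instances of ordinary matrix multiplication and then calls Theorem \ref{theorem:main}. Plugging $W=O(M)$ already meets the target complexity \emph{per} distance product, so the remaining task is to reduce APSP to $\tilde O(1)$ such products while keeping all entries $O(M)$-bounded.

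Next I invoke the Shoshan-Zwick algorithm itself. It performs $O(\log(Mn))=O(\log n)$ iterations; each iteration uses one distance product, together with a handful of elementwise corrections. Crucially -- and this is the property specific to undirected graphs -- every distance product they feed into an iteration can be arranged so that its two input matrices have entries of magnitude only $O(M)$, even though the true distances can be as large as $Mn$. They achieve this by maintaining a weighted Seidel-style ``halving'' of the distance matrix, combined with a parity/threshold correction that reconstructs the true distances at each level from a mod-$O(M)$ residue plus one extra multiplication against the weighted adjacency matrix $\tilde A$. It is precisely this invariant that keeps the exponent of $M$ at $2/\omega$; the naive bound of $O(Mn)$ on intermediate entries would otherwise inflate the complexity to $\tilde O(M^{2/\omega}n)$ and defeat the statement.

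Finally I check the congested-clique cost of each non-product step. Under the standard row/column distribution (node $\ell$ holds row~$\ell$ and column~$\ell$ of every $n\times n$ matrix that appears), elementwise rounding, thresholding, residue reduction modulo an $O(M)$-sized integer, elementwise $\min$/$\max$ of two matrices, and reading $\tilde A$ off the input graph are all purely local, costing zero additional communication. Summing $O(\log n)$ iterations at $\tilde O(M^{2/\omega}n^{1-2/\omega})$ rounds each yields the claimed bound. The main step I would take care to verify carefully is the $O(M)$-bounded invariant on the inputs to every distance product: this is the heart of the Shoshan-Zwick argument, the unique place where the undirected hypothesis is exercised, and the only place where a mistake would change the final exponent.
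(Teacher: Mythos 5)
Your proposal matches the paper's proof essentially verbatim: the paper also reduces APSP on undirected $\{0,\ldots,M\}$-weighted graphs to $O(\log n)$ distance products of $n\times n$ matrices with $O(M)$-bounded entries via the Shoshan--Zwick reduction, invokes Theorem~\ref{th:dist} for each product, and observes that the non-product (elementwise) steps of the reduction are purely local under the standard row/column distribution. The only difference is that the paper dispenses with these details in a few sentences, while you supply more of the justification -- in particular, correctly isolating the $O(M)$-entry invariant as the crux of the argument.
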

Since computing the diameter of a graph reduces to solving the all-pairs shortest paths, we obtain the same round complexity for diameter computation in the same class of graphs. This  improves over the $\tilde O(Mn^{1-2/\omega})$-round algorithm for these tasks (implicitly) given in \cite{Censor-Hillel+15}. The main application of our results nevertheless concerns the all-pair shortest paths problem over directed graphs (for which the approach based on \cite{Shoshan+FOCS99} does not work) with constant weights. We obtain the following result by combining our algorithm for distance product computation with Zwick's approach \cite{ZwickJACM02}.
\begin{theorem}\label{th:APSPd}
In the congested clique model, the randomized round complexity of the all-pairs shortest paths problem in a directed graph of $n$ vertices with integer weights in $\{-M,\ldots,0,\ldots,M\}$, where~$M=O(1)$, is $O(n^{0.2096})$.
\end{theorem}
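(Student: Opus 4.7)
The plan is to port Zwick's centralized algorithm \cite{ZwickJACM02} for directed APSP with small integer weights to the congested clique model, using Theorem~\ref{th:dist} as the primitive for the rectangular distance products that arise. Recall that Zwick's approach maintains, for a geometrically increasing sequence of ``hop bounds'' $r_1 < r_2 < \cdots < r_t$ with $r_{i+1} = \lceil (1+\epsilon) r_i \rceil$ and $t = O(\log n)$, an $n \times n$ matrix $D^{(r_i)}$ whose $(u,v)$-entry equals the length of a shortest $u \to v$ path using at most $r_i$ edges. To pass from $D^{(r_i)}$ to $D^{(r_{i+1})}$, a bridging set $B_i \subseteq V$ of size $s_i = \tilde{\Theta}(n/r_i)$ is sampled uniformly at random, and one sets
\[
D^{(r_{i+1})} \;=\; \min\!\bigl(\,D^{(r_i)},\ D^{(r_i)}[\cdot,B_i]\dist D^{(r_i)}[B_i,\cdot]\,\bigr),
\]
which yields the correct matrix with high probability by a standard hitting-set argument.

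To run this scheme in the congested clique, I would implement each iteration as follows. Since $M=O(1)$, all entries of every $D^{(r_i)}$ stay in $\{-Mn,\ldots,Mn\}$, so the ``small integer entries'' hypothesis of Theorem~\ref{th:dist} is satisfied and it computes the required distance product of the $n \times s_i$ and $s_i \times n$ matrices in a number of rounds depending only on $n$ and $s_i$. Between iterations, the standard row/column input-output distribution expected by Theorem~\ref{th:dist} is restored via Lenzen's routing protocol \cite{LenzenPODC13} in $O(1)$ additional rounds; the random sets $B_i$ are made consistent across the network by broadcasting a short random seed (a single node broadcasts $O(\log^2 n)$ bits once, from which every node recomputes the same $B_i$); and negative weights are handled, exactly as in Zwick, by a preliminary additive translation of the weight matrix, which is asymptotically free because $M=O(1)$.

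The total round complexity is then a geometric sum of the per-iteration costs. As $i$ grows, $r_i$ increases geometrically, $s_i = \tilde{\Theta}(n/r_i)$ decreases geometrically, and the cost given by Theorem~\ref{th:dist} for the distance product of an $n\times s_i$ with an $s_i\times n$ matrix decreases accordingly; consequently the sum is dominated, up to polylogarithmic factors, by the iterations whose ``middle dimension'' is an appropriate $n^\mu$. The final quantitative step, and essentially the only place where real work is required beyond transcribing Zwick's algorithm, is the optimization of $\mu$: plugging the best-known rectangular matrix-multiplication exponents from \cite{LeGallFOCS12} into the round-complexity bound of Theorem~\ref{th:dist} and minimizing the resulting expression over $\mu \in (0,1)$ gives the stated exponent $0.2096$.

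The main obstacle is precisely this optimization: one must verify that the minimizing value of $\mu$ is realized (up to constants) by one of the $O(\log n)$ geometric choices of $s_i$, and that the polylogarithmic overhead from the sampling success probability and from iterating $O(\log n)$ times is absorbed into the exponent $0.2096$ (which is immediate because $\log^{O(1)} n = O(n^\epsilon)$ for every $\epsilon > 0$, whereas $0.2096$ is strictly positive). Everything else---the correctness of the bridging-set construction, the distributed sampling, the redistribution between iterations, and the translation that eliminates negative weights---is routine once Theorem~\ref{th:dist} is in hand.
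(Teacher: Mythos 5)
Your high-level plan (port Zwick's algorithm to the congested clique, using Theorem~\ref{th:dist} as the distance-product primitive) matches the paper's approach, but the proposal has a real gap: you rely \emph{only} on Theorem~\ref{th:dist} for every iteration, and that is not enough to reach the exponent $0.2096$. In Zwick's algorithm the $i$-th iteration caps the relevant entries at $s_i M$ (the cap must grow with $s_i$ for the hitting-set argument), so the effective magnitude parameter that you feed into $\DIS{n,m,M}$ is $\Theta(s_i)$, not $O(1)$ and not $O(n)$ independently of the round; your phrase ``a number of rounds depending only on $n$ and $s_i$'' silently drops this dependence, which is exactly where the difficulty lies. When $s_i$ is large (so $m_i = \tilde O(n/s_i)$ is small), Theorem~\ref{th:dist} places you in its first regime and gives round complexity $\tilde O(s_i M)$, which near $s_i = n$ is $\tilde O(n)$---far worse than the target. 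The paper handles this by using a \emph{second} distance-product primitive, Proposition~\ref{prop:dist}, obtained from the trivial semiring algorithm and costing only $O(m^{2/3}n^{-1/3}\log M)$, i.e., with logarithmic rather than linear dependence on the entry magnitude. The final bound in the paper comes from Equation~(\ref{eq:opt}), a $\min$ over these two primitives evaluated at the worst iteration: the trivial-algorithm bound is decreasing in $s$, the Theorem~\ref{th:dist} bound is increasing in $s$, and the crossover at $s \approx n^{0.1856\ldots}$ gives $O(n^{0.2095\ldots})$. Without the second algorithm, the ``optimization over $\mu$'' you describe would be dominated by the expensive late iterations and would not yield a sublinear exponent.

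A secondary, smaller point: negative weights do not require any additive translation; the distance product is defined directly over $\{-M,\ldots,M\}\cup\{\infty\}$, and Zwick's framework (and Theorem~\ref{th:dist}) handles signed entries natively. The translation you propose is unnecessary and, if done naively, would distort path lengths because different paths have different numbers of edges.
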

\noindent Prior to this work, the upper bound for the round complexity of this problem was $\tilde O(n^{1/3})$, obtained by directly computing the distance product (as done in \cite{Censor-Hillel+15}) in the congested clique model. Again, Theorem \ref{th:APSPd} follows easily from Theorem \ref{th:dist} and the observation that the reduction to distance product computation given in \cite{ZwickJACM02} can be implemented efficiently in the congested clique model. The exponent $0.2096$ in the statement of Theorem \ref{th:APSPd} is derived from the current best upper bounds on the complexity of rectangular matrix multiplication in the centralized setting \cite{LeGallFOCS12}.

Theorems \ref{theorem:det-inv} and \ref{th:rand} also enable us to solve a multitude of graph-theoretic problems in the congested clique model with a sublinear number of rounds. Examples described in this paper are computing the number of edges in a maximum matching of a simple graph with $O(n^{1-2/\omega}\log n)$ rounds, computing the set of allowed edges in a perfect matching, the Gallai-Edmonds decomposition of a simple graph, and a minimum vertex cover in a bipartite graph with $O(n^{1-1/\omega})$ rounds. These results are obtained almost immediately from the appropriate reductions to matrix inversion and similar problems known the centralized setting \cite{CheriyanSICOMP97,Lovasz79,Rabin+89} --- indeed it is not hard to adapt all these reductions so that they can be implemented efficiently in the congested clique model. 
Note that while non-algebraic centralized algorithms solving these problems also exist (see, e.g., \cite{Lovasz+09}), they are typically sequential and do not appear to be efficiently implementable in the congested clique model. The algebraic approach developed in this paper, made possible by our algorithms for the computation of the determinant, the rank and the inverse of matrix, appears to be currently  the only way of obtaining fast algorithms for these problems in the congested clique model.

\begin{table}[tb]
\label{table}
\begin{center}
\caption{Summary of the applications of our algebraic techniques to graph-theoretic problems in the congested clique model. Here $n$ both represents the number of vertices in the input graph and the number of nodes in the network.}
\vspace{2mm}
\begin{tabular}{|l|l|l|}\hline
Problem&Round complexity&Previously\\ \hline
APSP (undirected graphs, weights in $\{0,1,\ldots,M\}$)&$\tilde O\!\left(M^{\frac{2}{\omega}}n^{1-\frac{2}{\omega}}\right)$ \hspace{1mm}Th.~\ref{th:APSPu}\bigstrut&$\tilde O\left(Mn^{1-\frac{2}{\omega}}\right)$$\!\!\!$\\
APSP (directed graphs, constant weights)&$O(n^{0.2096})$ \hspace{7mm}$\,$Th.~\ref{th:APSPd}\bigstrut&$\tilde O(n^{1/3})$\\
Diameter (undirected graphs, weights in $\{0,1,\ldots,M\}$)$\!\!\!$&$\tilde O\!\left(M^{\frac{2}{\omega}}n^{1-\frac{2}{\omega}}\right)$ \hspace{1mm}Cor.~\ref{cor:diameter}\bigstrut&$\tilde O\left(Mn^{1-\frac{2}{\omega}}\right)$$\!\!\!$\\
Computing the size of a maximum matching &$O\big(n^{1-\frac{2}{\omega}}\log n\big)$ \hspace{1mm}Th.~\ref{th:mm}\bigstrut&$\:\:\:\:$---\\
Computing allowed edges in a perfect matching&$O(n^{1-1/\omega})$ \hspace{7mm}Sec.~\ref{sub:match}&$\:\:\:\:$---\\
Gallai-Edmonds decomposition&$O(n^{1-1/\omega})$ \hspace{7mm}Th.~\ref{th:GE}&$\:\:\:\:$---\\
Minimum vertex cover in bipartite graphs&$O(n^{1-1/\omega})$ \hspace{7mm}Sec.~\ref{sub:GE}&$\:\:\:\:$---\\\hline
\end{tabular}
\end{center}
\end{table}

\section{Preliminaries}\label{sec:prelim}
\paragraph{Notations.}Through this paper we will use $n$ to denote the number of nodes in the network. The~$n$ nodes will be denoted $1,2,\dots,n$. The symbol $\field$ will always denote a finite field of order upper bounded by a polynomial in $n$ (which means that each field element can be encoded with $O(\log n)$ bits and thus sent using one message in the congested clique model). Given any positive integer $p$, we use the notation $[p]$ to represent the set $\{1,2,\ldots,p\}$. Given any $p\times p'$ matrix $A$, we will write its entries as $A[i,j]$ for $(i,j)\in[p]\times [p']$, and use the notation $A[i,\ast]$ to represent its $i$-th row and $A[\ast,j]$ to represent its $j$-th column.\vspace{-3mm}
\paragraph{Graph-theoretic problems in the congested clique model.}
As mentioned in the introduction, typically the main tasks that we want to solve in the congested clique model are graph-theoretical problems. In all the applications given in this paper the number of vertices of the graph will be $n$, the same as the number of nodes of the network. The input will be given as follows: initially each node $\ell\in[n]$ has the $\ell$-th row and the $\ell$-th column of the adjacency matrix of the graph. Note that this distribution of the input, while being the most natural, is not essential; the only important assumption is that the entries are evenly distributed among the~$n$ nodes since they can then be redistributed in a constant number of rounds as shown in the following Lemma by Dolev et al.~\cite{Dolev+DISC12}, which we will use many times in this paper. 
\begin{lemma}{\cite{Dolev+DISC12}}\label{lemma:Dolev}
In the congested clique model a set of messages in which no node is the source of more than $n$ messages and no node is the destination of more than $n$ messages can be delivered within two rounds if the source and destination of each message is known in advance to all nodes.  
\end{lemma}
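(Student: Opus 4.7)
The plan is to reduce the routing problem to an edge-coloring of a bipartite multigraph and then use the $n$ nodes of the network themselves as intermediaries, one per color class. Specifically, I would model the set of messages to be delivered as a bipartite multigraph $H$ whose left vertex set and right vertex set are both copies of $[n]$, where each message from $i$ to $j$ contributes one edge between the left copy of $i$ and the right copy of $j$. By hypothesis, no node originates more than $n$ messages and no node receives more than $n$ messages, so $H$ has maximum degree at most $n$.

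By König's edge-coloring theorem for bipartite multigraphs, the edges of $H$ can be properly colored with at most $n$ colors, i.e., partitioned into $n$ matchings $M_1,\dots,M_n$ (some possibly empty). Because the source and destination of every message are known in advance to all nodes, every node can compute the same decomposition $M_1,\dots,M_n$ locally, without any communication. This lets us assign matching $M_k$ to intermediary node $k\in[n]$ in a way that is consistent across the network.

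The two rounds then work as follows. In the first round, for every message corresponding to an edge $(i,j)\in M_k$, node $i$ sends the message directly to node $k$. Since $M_k$ is a matching, node $i$ is incident to at most one edge of $M_k$, so across the round node $i$ sends at most one message to each intermediary $k$; the $O(\log n)$ bandwidth of each pairwise link suffices. In the second round, for every $(i,j)\in M_k$, intermediary $k$ forwards the message to node $j$. Again the matching property implies that $k$ sends at most one message to each destination $j$, so a single round is enough. Every message reaches its destination by the end of round two.

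The only nontrivial ingredient is the use of König's edge-coloring theorem in the multigraph setting, together with the observation that with $\Delta\le n$ color classes we have exactly enough intermediaries, one per class. The bandwidth check in each round is then immediate from the matching property, so I would not expect any real obstacle beyond stating the coloring step carefully.
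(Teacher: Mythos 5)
Your proof is correct, but note that the paper itself gives no proof of this lemma --- it is cited from Dolev, Lenzen, and Peled [Dolev+DISC12], so there is no in-paper argument to compare against. Your reduction is the standard one: model the messages as a bipartite multigraph on two copies of $[n]$, observe that the degree bound is $n$, invoke K\H{o}nig's edge-coloring theorem to decompose the edge set into $n$ matchings, and route matching $M_k$ through intermediary $k$ in two hops. The matching property gives exactly the per-link bandwidth bound you need in each round (at most one message on each ordered pair in round one, and at most one on each ordered pair in round two), and the global knowledge of the message pattern lets all nodes agree on a canonical coloring with no communication. One small thing worth stating explicitly, which you have implicitly: because all nodes know the routing plan, the intermediary $k$ does not need $i$ to attach the destination $j$ to the payload --- $k$ can infer $j$ from the agreed-upon matching $M_k$ --- so the $O(\log n)$ message size is never exceeded by header overhead. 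This argument is essentially the one found in the cited source and in Lenzen's later routing work, so there is no gap and no genuine divergence from the literature's approach.
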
\vspace{-5mm}

\paragraph{Algebraic problems in the congested clique model.}
The five main algebraic problems that we consider in this paper are defined as follows. \vspace{3mm}

\noindent$\MM{n,m,k,\field}$ --- Multiple Rectangular Matrix Multiplications  \\\vspace{-4mm}

\noindent\hspace{3mm} Input: matrices $A_{1},\ldots, A_{k}\in\field^{n\times m}$ and $B_{1},\ldots, B_{k}\in \field^{m\times n}$ distributed among the $n$ nodes\\
\vspace{-5mm}

\noindent\hspace{16mm}(Node $\ell\in[n]$ has $A_{1}[\ell,\ast],\ldots, A_{k}[\ell,\ast]$ and $B_{1}[\ast,\ell],\ldots, B_{k}[\ast,\ell]$)\\
\vspace{-4mm}

\noindent\hspace{3mm} Output: the matrices $A_{1}B_{1},\ldots,A_{k}B_{k}$ distributed among the $n$ nodes\\\vspace{-5mm}

\noindent\hspace{19mm}(Node $\ell\in[n]$ has $A_{1}B_{1}[\ell,\ast],\ldots, A_{k}B_{k}[\ell,\ast]$ and $A_{1}B_{1}[\ast,\ell],\ldots, A_{k}B_{k}[\ast,\ell]$)\\\vspace{-2mm}


\noindent$\DET{n,\field}$ --- Determinant  \\\vspace{-4mm}

\noindent\hspace{3mm} Input: matrix $A\in\field^{n\times n}$ distributed among the $n$ nodes (Node $\ell\in[n]$ has $A[\ell,\ast]$ and $A[\ast,\ell]$)\\
\vspace{-5mm}

\noindent\hspace{3mm} Output: $\det(A)$ \hspace{3mm} (Each node of the network has $\det(A)$)\\\vspace{-2mm}

\noindent$\Rank{n,\field}$ --- Rank  \\\vspace{-4mm}

\noindent\hspace{3mm} Input: matrix $A\in\field^{n\times n}$ distributed among the $n$ nodes (Node $\ell\in[n]$ has $A[\ell,\ast]$ and $A[\ast,\ell]$)\\
\vspace{-5mm}

\noindent\hspace{3mm} Output: $\rk(A)$\hspace{3mm} (Each node of the network has $\rk(A)$)\\\vspace{-2mm}

\noindent$\INV{n,\field}$ --- Inversion  \\\vspace{-4mm}

\noindent\hspace{3mm} Input: invertible matrix $A\in\field^{n\times n}$ distributed among the $n$ nodes \\\vspace{-5mm}

\noindent\hspace{16mm}
 (Node $\ell\in[n]$ has $A[\ell,\ast]$ and $A[\ast,\ell]$)\\
\vspace{-5mm}

\noindent\hspace{3mm} Output: matrix $A^{-1}$ distributed among the $n$ nodes  (Node $\ell\in[n]$ has $A^{-1}[\ell,\ast]$ and $A^{-1}[\ast,\ell]$)\\\vspace{-5mm}

\noindent\hspace{18mm}
 (Node $\ell\in[n]$ has $A^{-1}[\ell,\ast]$ and $A^{-1}[\ast,\ell]$)\\\vspace{-2mm}

\noindent$\SYS{n,\field}$ --- Solution of a linear system  \\\vspace{-4mm}

\noindent\hspace{3mm} Input: invertible matrix $A\in\field^{n\times n}$ and vector $b\in\field^{n\times 1}$, distributed among the $n$ nodes\\\vspace{-5mm}

\noindent\hspace{16mm}
 (Node $\ell\in[n]$ has $A[\ell,\ast]$, $A[\ast,\ell]$ and $b$)\\
\vspace{-5mm}

\noindent\hspace{3mm} Output: the vector $x\in\field^{n\times 1}$ such that $Ax=b$\hspace{3mm} (Node $\ell\in[n]$ has $x[\ell]$)\\

Note that the distribution of the inputs and the outputs assumed in the above five problems is mostly chosen for convenience. For instance, if needed the whole vector $x$ in the output of $\SYS{n,\field}$ can be sent to all the nodes of the network in two rounds using  Lemma \ref{lemma:Dolev}. The only important assumption is that when dealing with matrices, the entries of the matrices must be evenly distributed among the~$n$ nodes.

We will also in this paper consider the distance product of two matrices, defined as follows. 

\begin{definition}
Let $m$ and $n$ be two positive integers. Let $A$ be an $n\times m$ matrix and $B$ be an $m\times n$ matrix, both with entries in $\Real\cup\{\infty\}$. The distance product of $A$ and $B$, denoted $A\dist B$, is the $n\times n$ matrix $C$ such that $C[i,j]=\min_{s\in[m]}\{A[i,s]+B[s,j]\}$ for all $(i,j)\in [n]\times [n]$. 
\end{definition}

We will be mainly interested in the case when the matrices have integer entries. More precisely, we will consider the following problem. \vspace{3mm}

\noindent$\DIS{n,m,M}$ --- Computation of the distance product  \\\vspace{-4mm}

\noindent\hspace{3mm} Input: an $n\times m$ matrix $A$ and an $m\times n$ matrix $B$, with entries in $\{-M,\ldots,-1,0,1,\ldots,M\}\cup\{\infty\}$\\\vspace{-5mm}

\noindent\hspace{15mm}
(Node $\ell\in[n]$ has $A[\ell,\ast]$ and $B[\ast,\ell]$)\\
\vspace{-5mm}

\noindent\hspace{3mm} Output: the matrix $C=A\ast B$ distributed among the $n$ nodes\\\vspace{-5mm}

\noindent\hspace{18mm}
 (Node $\ell\in[n]$ has $C[\ell,\ast]$ and $C[\ast,\ell]$)\\
\vspace{-6mm}

\paragraph{Centralized algebraic algorithms for matrix multiplication.}
We now briefly describe algebraic algorithms for matrix multiplication and known results about the complexity of rectangular matrix multiplication. We refer to \cite{Burgisser+97} for a detailed exposition of these concepts.

Let $\field$ be a field and $m,n$ be two positive integer. Consider the problem of computing the product of an $n\times m$ matrix by an $m\times n$ matrix over $\field$.
An algebraic algorithm for this problem is described by three sets $\{\alpha_{ij\mu}\}$, $\{\beta_{ij\mu}\}$ and $\{\lambda_{ij\mu}\}$ of coefficients from $\field$ such that, for any $n\times m$ matrix $A$ and any $m\times n$ matrix $B$, the equality
\[
C[i,j]=\sum_{\mu=1}^t\lambda_{ij\mu}S^{(\mu)}T^{(\mu)}
\]
holds
for all $(i,j)\in[n]\times[n]$,
where $C=AB$ and 
\[
S^{(\mu)}=\sum_{i=1}^n\sum_{j=1}^m\alpha_{ij\mu} A[i,j],
\hspace{10mm}
T^{(\mu)}=\sum_{i=1}^n\sum_{j=1}^m\beta_{ij\mu} B[j,i],
\]
for each $s\in[t]$. Note that each $S^{(\mu)}$ and each $T^{(\mu)}$ is an element of $\field$. The integer $t$ is called the rank of the algorithm, and corresponds to the complexity of the algorithm. 

For instance, consider the trivial algorithm computing this matrix product using the formula
\[
C[i,j]=\sum_{s=1}^m A[i,s]B[s,j].
\]
This algorithm can be described in the above formalism by taking $t=n^2m$, writing each $\mu\in[n^2m]$ as a triple $\mu=(i',j',s')\in[n]\times [n]\times [m]$, and choosing
\begin{align*}
\lambda_{ij(i',j',s')}&=\left\{
\begin{array}{cl}
1&\textrm{ if $i=i'$ and $j=j'$}, \\ 
0& \textrm{ otherwise},
\end{array}
\right.\\
\alpha_{ij(i',j',s')}&=\left\{
\begin{array}{cl}
1&\textrm{ if $i=i'$ and $j=s'$}, \\
0& \textrm{ otherwise},
\end{array}
\right.
\hspace{15mm}
\beta_{ij(i',j',s')}=\left\{
\begin{array}{cl}
1&\textrm{ if $i=j'$ and $j=s'$}, \\ 
0& \textrm{ otherwise}.
\end{array}
\right.
\end{align*}
Note that this trivial algorithm, and the description we just gave, also works over any semiring. 

\paragraph{The exponent of matrix multiplication.}
For any non-negative real number $\gamma$, let $\omega(\gamma)$ denote the minimal value $\tau$
such that the product of an $n\times \ceil{n^\gamma}$ matrix over $\field$ by an $\ceil{n^\gamma}\times n$ matrix over $\field$ can be computed by an algebraic algorithm of rank $n^{\tau+o(1)}$ (i.e., can be computed with complexity $O(n^{\tau+\epsilon})$ for any $\epsilon>0$). As usual in the literature, we typically abuse notation and simply write that such a product can be done with complexity $O(n^{\omega(\gamma)})$, i.e., ignoring the $o(1)$ in the exponent. The value $\omega(1)$ is denoted by $\omega$, and often called the exponent of square matrix multiplication.
Another important quantity is 
the value $\alpha=\sup\{\gamma\:|\:\omega(\gamma)=2\}$. 

The trivial algorithm for matrix multiplication gives the upper bound $\omega(\gamma)\le 2+\gamma$, and thus  $\omega\le 3$ and $\alpha\ge 0$.  The current best upper bound on $\omega$ is $\omega<2.3729$, see \cite{LeGallISSAC14,WilliamsSTOC12}.
The current best  bound on $\alpha$ is $\alpha>0.3029$, see~\cite{LeGallFOCS12}. 
The best bounds on $\omega(\gamma)$ for $\gamma>\alpha$ can also be found in \cite{LeGallFOCS12}.

\section{Matrix Multiplication in the Congested Clique Model}\label{sec:mm}
In this section we present our results on the round complexity of $\MM{n,m,k,\field}$ and $\DIS{n,m,M}$. 

We first give the complete statement of our main result concerning $\MM{n,m,k,\field}$ that was stated in a simplified form in the introduction. 
\addtocounter{theorem}{-5}
\begin{theorem}[Complete version]
For any positive integer $k\le n$, the deterministic round complexity of $\MM{n,m,k,\field}$ is 
\[
\left\{
\begin{array}{ll}
O(k)&\textrm{ if }\:\: 0\le m\le \sqrt{kn},\\
O(k^{2/\omega(\gamma)}n^{1-2/\omega(\gamma)})&\textrm{ if }\:\:\sqrt{kn}\le m< n^2/k,\\
O(km/n) &\textrm{ if }\:\:m\ge n^2/k,
\end{array}
\right.
\]
where $\gamma$ is the solution of the equation
\begin{equation}\label{eq:maincond}
\left(1-\frac{\log k}{\log n}\right)\gamma=1-\frac{\log k}{\log n}+\left(\frac{\log m}{\log n}-1\right)\omega(\gamma).
\end{equation}
For any $k\ge n$, the deterministic round complexity of $\MM{n,m,k,\field}$ is
\[
\left\{
\begin{array}{ll}
O(k)&\textrm{ if }\:\: 1\le m\le n,\\
O(km/n) &\textrm{ if }\:\:m\ge n.
\end{array}
\right.
\]
\end{theorem}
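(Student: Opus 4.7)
The proof shares a single algorithmic template, specialized by the choice of a block size. Assuming $k\le n$, the template is: partition the $n$ nodes into $k$ groups $G_1,\ldots,G_k$ of size $\lfloor n/k\rfloor$, dedicate group $G_j$ to the product $A_jB_j$, redistribute inputs so that each group holds its two operand matrices, have all groups compute in parallel while sharing the clique's $\Theta(n^2)$ per-round bandwidth, and redistribute outputs. By Lemma~\ref{lemma:Dolev}, input redistribution costs $O(\lceil km/n\rceil)$ rounds and output redistribution costs $O(k)$ rounds, since each node initially holds $O(km)$ scalars and must ultimately hold $O(kn)$ scalars.

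Several regimes then reduce to this template with an essentially trivial within-group routine. When $m\ge n^2/k$ (Case~3), or when $k\ge n$ and $m\ge n$ (Case~5), the target $O(km/n)$ bound is already forced by input redistribution, so the schoolbook algorithm suffices within each group. When $k\ge n$ and $m\le n$ (Case~4), I process the products in $\lceil k/n\rceil$ batches of at most $n$ and invoke the $k=n$ instance per batch. When $m\le\sqrt{kn}$ (Case~1), the $\Omega(k)$ output-redistribution lower bound already matches the claim, and padding $A_j,B_j$ with zero rows and columns up to inner dimension $\lceil\sqrt{kn}\rceil$ reduces the problem to the main case at its left boundary, where the solution of \eqref{eq:maincond} degenerates to $\gamma=0$ with $\omega(\gamma)=2$ and the formula collapses to $O(k)$.

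The core case is $k\le n$ with $\sqrt{kn}\le m<n^2/k$. Within each group of $q=n/k$ nodes I compute one $n\times m$ by $m\times n$ product via the block strategy of Censor-Hillel et al.~\cite{Censor-Hillel+15}: choose a block size $b$, partition $A_j$ and $B_j$ into $b\times b$ blocks, and apply the best centralized algebraic algorithm to the resulting $(n/b)\times(m/b)$ by $(m/b)\times(n/b)$ block-matrix product. The bilinear algorithm has rank $\approx(n/b)^{\omega(\gamma')}$ where $\gamma'=\log(m/b)/\log(n/b)$, and each scalar block-multiplication is itself a $b\times b$ by $b\times b$ product; I assign each such block-multiplication to a single clique node, deliver its two block operands and retrieve its block result via Lemma~\ref{lemma:Dolev}, and aggregate the $\lambda_{ij\mu}$-combinations at the final destinations. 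The per-group-node transport cost is $\Theta\bigl((n/b)^{\omega(\gamma')}b^2/(qn)\bigr)$ rounds, and the largest $b$ that still spreads block-multiplications over all $q$ group-nodes is $b\approx n/q^{1/\omega(\gamma')}$; substituting gives a round count of $O(n/q^{2/\omega(\gamma')})=O(k^{2/\omega(\gamma')}n^{1-2/\omega(\gamma')})$, and reading off $\gamma'=\log(m/b)/\log(n/b)$ at this $b$ recovers precisely equation~\eqref{eq:maincond}.

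The main obstacle is verifying that this optimization is globally self-consistent in the main regime, in particular that the implicit equation~\eqref{eq:maincond} admits a unique solution $\gamma'$ lying in the range for which the known centralized bounds on $\omega(\gamma')$ apply, and that the $k$ within-group protocols can be scheduled concurrently inside Lemma~\ref{lemma:Dolev}'s regime: each group consumes $qn=n^2/k$ messages per round, so the $k$ groups together use $\Theta(n^2)$ messages, and one must check that no single clique node is the source or destination of more than $n$ messages in any round. All remaining corner-case and redistribution steps are then routine applications of Lemma~\ref{lemma:Dolev}.
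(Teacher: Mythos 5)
Your high-level template --- partition the $n$ nodes into $k$ groups, dedicate a group to each product, run a block-decomposed bilinear algorithm inside each group, and redistribute --- is essentially the paper's algorithm, and your square-block parameterization does recover Equation~\eqref{eq:maincond}: equating your constraint $m/n = d^{\gamma-1}$ with $d = (n/k)^{1/\omega(\gamma)}$ gives exactly $(1-\log k/\log n)\gamma = 1 - \log k/\log n + (\log m/\log n - 1)\omega(\gamma)$, which is also what the paper obtains by balancing the costs of its Steps~2 and~3. However, there is a real gap in the middle regime that the proposal passes over in a single clause (``deliver its two block operands\ldots via Lemma~\ref{lemma:Dolev}''). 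The block operands $S^{(\mu)}$ and $T^{(\mu)}$ do not exist anywhere in the input: each one is a linear combination of \emph{all} blocks of $A_j$ (resp.\ $B_j$), so delivering $S^{(\mu)}$ to a node presupposes some node can first form it, which naively means gathering $\Theta(nm)$ entries of $A_j$ somewhere --- far over the target budget. The paper overcomes this with a further level of decomposition: each row/column index is split into a coarse part $(u,v)$ and a fine part $(x',y',z')$, and in its Step~1 each node $(s,u,v)$ gathers exactly the $\Theta(km)$ entries of $A_s,B_s$ needed to compute the $(u,v)$-sub-block of every $S_s^{(\mu)}$ and $T_s^{(\mu)}$ \emph{locally}; only afterwards are these pieces re-gathered at the $(s,\mu)$ destinations (Step~2) and scattered back (Step~3). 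Your ``redistribute inputs so that each group holds its two operand matrices'' would only yield this if the redistribution target were chosen to be precisely this $(u,v)$-tiling, and this choice is the crux of the construction; a generic redistribution (e.g.\ contiguous rows) makes the subsequent formation of $S^{(\mu)}$ impossible locally. Relatedly, the proposal never checks that its input/output redistribution costs ($\Theta(km)$ and $\Theta(kn)$ field elements per node) are dominated by the middle steps; the paper needs and uses the inequalities $\omega(\gamma)\ge 1+\gamma$ and $\omega(\gamma)\ge 2$ for this.

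Two smaller points. First, for the regime $m\ge n^2/k$, ``the schoolbook algorithm suffices within each group'' is too loose: if, say, row-partitioning is used, each group node should compute entire rows of $C_j$, which requires receiving all of $B_j$ ($\Theta(nm)$ elements, too many). The version that works is the outer-product (column-chunk) split of $A_j$ and row-chunk split of $B_j$, with one rank-$(km/n)$ outer product per node, which is exactly the content of the paper's Proposition~\ref{prop:large-l}; this should be stated. Second, you correctly flag that the solvability of Equation~\eqref{eq:maincond} and the consistency of the concurrent Dolev scheduling must be verified; the paper settles solvability via a monotonicity argument for the two sides of $f(\gamma)=g(\gamma)$ over the assumed range $\sqrt{kn}\le m < n^2/k$, and settles the scheduling by never isolating groups at all --- all four steps are single global applications of Lemma~\ref{lemma:Dolev} over the full $n$-node clique, so the per-node load bound is immediate from counting.
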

\addtocounter{theorem}{+5}
The proof of Theorem \ref{theorem:main}, which will also show that Equation~(\ref{eq:maincond}) always has a solution when $k\le n$ and $\sqrt{kn}\le m< n^2/k$, is given in Section \ref{sub:proof} (a short discussion of the proof ideas was presented in the introduction). The upper bounds we obtain for the case $k=1$ are depicted in Figure~\ref{fig2}, where Equation (\ref{eq:maincond}) is solved using the best known upper bound on $\omega(\gamma)$ from \cite{LeGallFOCS12}. As briefly mentioned in the introduction, the round complexity is constant for any $k\le \sqrt{n}$, and we further have round complexity $O(n^{\epsilon})$, for any $\epsilon>0$, for all values $k\le n^{(1+\alpha)/2}$ (the bound $\alpha>0.3029$ implies $(1+\alpha)/2> 0.6514$). For the case $m=n$ the solution of Equation (\ref{eq:maincond}) is $\gamma=1$, which gives the bounds of the simplified version of Theorem \ref{theorem:main} presented in the introduction.

We now give lower bounds on the round complexity of $\MM{n,m,k,\field}$ that show that the upper bounds of Theorem \ref{theorem:main} are tight, except possibly in the case $\sqrt{kn}\le m < n^2/k$ when $k\le n$.
\begin{proposition}\label{prop:LB}
The randomized round complexity of $\MM{n,m,k,\field}$ is 
\[
\left\{
\begin{array}{ll}
\Omega(k)&\textrm{ if }\:\: 1\le m\le n,\\
\Omega(km/n) &\textrm{ if }\:\:m\ge n.
\end{array}
\right.
\]
\end{proposition}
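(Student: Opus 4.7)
The plan is to prove both bounds by conditional-entropy arguments: in each regime, identify an input distribution under which either a single node or a whole side of a balanced cut is forced to receive many bits, then compare with the available per-round bandwidth, and finally invoke Yao's minimax principle to transfer the bound to randomized protocols on worst-case inputs.

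For the $\Omega(k)$ bound when $1\le m\le n$, I would draw $A_1,\ldots,A_k$ and $B_1,\ldots,B_k$ independently and uniformly from $\field^{n\times m}$ and $\field^{m\times n}$, fix an arbitrary node $\ell$, and condition on its local input $\{A_i[\ell,\ast],B_i[\ast,\ell]\}_{i\in[k]}$. For every $j\ne\ell$ and every $i\in[k]$, the row entry $A_iB_i[\ell,j]=A_i[\ell,\ast]\cdot B_i[\ast,j]$ depends on $B_i[\ast,j]$, which is independent of $\ell$'s input and uniform on $\field^m$; conditional on $A_i[\ell,\ast]\ne 0$ (which fails only with probability $|\field|^{-m}$) this inner product is uniform on $\field$, contributing $\log|\field|$ bits of new conditional entropy, and the symmetric statement holds for every column entry $A_iB_i[j,\ell]$. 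Summed over all $j\ne\ell$ and $i\in[k]$, node $\ell$'s output has at least $\Omega(kn\log|\field|)$ bits of entropy given its input, while a round delivers at most $O(n\log n)$ bits to $\ell$. Under the paper's convention that $\log|\field|=\Theta(\log n)$ (so that a message transports $\Theta(1)$ field elements), this forces $\Omega(k)$ rounds.

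For the $\Omega(km/n)$ bound when $m\ge n$, the per-node argument saturates at $\Omega(k)$, so I would switch to a global bisection-bandwidth argument. Fix a balanced partition $(S,\bar S)$ of $[n]$; at most $|S|\cdot|\bar S|=\Theta(n^2)$ field elements cross the cut per round. The goal is to exhibit an adversarial input family on which $\Omega(kmn)$ field elements of information must cross the cut for correctness, which then yields $\Omega(km/n)$ rounds after dividing by bandwidth. My plan is to embed, for each product $i\in[k]$, independent copies of the field inner product of dimension $m$ indexed by the $(n/2)^2$ cross-cut pairs $(\ell,j)\in S\times\bar S$, namely the output entries $A_iB_i[\ell,j]=A_i[\ell,\ast]\cdot B_i[\ast,j]$, and use a fooling set over $\field$ of size $|\field|^{\Omega(m)}$ to force $\Omega(m\log|\field|)$ cross-cut bits per copy even when the protocol is allowed to reuse $A_i[\ell,\ast]$ across the different $j\in\bar S$: the key enabling fact for $m\ge n$ is that the demanded outputs on each side have too few coordinates to compress the opposite side's input down to dimension $o(m)$.

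The main obstacle is precisely this last step. Under the uniform distribution the cross-cut entropy is only $\Theta(kn^2)$ field elements (since each output entry is a linear form and reveals $O(1)$ new field elements per cross-cut pair), and the straightforward information-theoretic bound again saturates at $\Omega(k)$. Establishing that no clever sharing of work can drive the cross-cut communication below $\Omega(kmn)$ elements -- i.e.\ that the factor of $m$ (not merely $n$) is truly unavoidable in the regime $m\ge n$ -- is the technical crux of the proof; once such an adversarial distribution or fooling set is in place, the round bound $\Omega(km/n)$ follows by the routine division of total cross-cut information by per-round cut bandwidth.
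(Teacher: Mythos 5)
Your $\Omega(k)$ argument is sound and differs from the paper's in flavor: you lower-bound the conditional entropy of node $\ell$'s output given its own input under the uniform distribution, whereas the paper exhibits an explicit family in which the output literally replays the other side's input to a single node. Both routes yield $\Omega(k)$; yours is information-theoretic, theirs is combinatorial, and neither is hard.

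For the $\Omega(km/n)$ bound your plan has a genuine gap, and it is exactly the one you flag at the end. You choose a balanced cut and observe, correctly, that under any reasonable distribution the conditional entropy of the $\Theta(n^2)$ cross-cut output entries is only $\Theta(kn^2)$ field elements, so the entropy-over-bandwidth argument saturates at $\Omega(k)$. Your proposed fix is a fooling set across the $\Theta(n^2)$ cross-cut inner products to push the cost up to $\Omega(kmn)$ elements, but this is precisely a direct-sum statement for inner product and you haven't supplied it; a protocol can resolve all $n/2$ entries $A_i[\ell,\ast]\cdot B_i[\ast,j]$ sharing a fixed $j$ at once by shipping $B_i[\ast,j]$, so the naive fooling-set count is wrong and the direct-sum structure has to be invoked and proved, which is nontrivial. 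The paper avoids this entirely by choosing an unbalanced cut $\{1\}$ vs. $\{2,\ldots,n\}$ and a hard instance supported on one row of each $A_s$ and one column of each $B_s$, so that node 1 is forced to evaluate the single field inner product $\sum_{s=1}^k A_s[1,\ast]\cdot B_s[\ast,2]$ of dimension $km$. The two-party lower bound of Chu and Schnitger gives $\Omega(km\log|\field|)$ bits across that cut, and since the single node has bandwidth only $O(n\log n)$ bits per round, the bound $\Omega(km/n)$ follows immediately, with no direct-sum theorem needed. The moral difference is therefore: you stacked many low-dimensional inner products across a wide cut and would need to show they cannot be batched, while the paper packs one $km$-dimensional inner product through a one-node bottleneck, where bandwidth alone closes the argument. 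Your plan as written does not reach the bound without substantial additional work.
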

\begin{proof}
We first prove the lower bound $\Omega(km/n)$ for any $m\ge n$.
Let us consider instances of $\MM{n,m,k,\field}$ of the following form: for each $s\in[k]$ all the rows of $A_{s}$ are zero except the first row; for each $s\in[k]$ all the columns of $B_{s}$ are zero except the second column. Let us write $C_s=A_sB_s$ for each $s\in[k]$. We prove the lower bound by partitioning the $n$ nodes of the network into the two sets $\{1\}$ and $\{2,\ldots,n\}$, and considering the following two-party communication problem. Alice (corresponding to the set $\{1\}$) has for input $A_{s}[1,j]$ for all $j\in[m]$ and all $s\in[k]$. Bob (corresponding to the set $\{2,\ldots,n\}$) has for input $B_{s}[i,2]$ for all $i\in[m]$ and all $s\in[k]$. The goal is for Alice to output $C_{s}[1,2]$ for all $s\in[k]$. Note that $C_{s}[1,2]$ is the inner product (over $\field$) of the first row of $A_{s}$ and the second column of $B_s$. Thus $\sum_{s=1}^k C_{s}[1,2]$ is the inner product of two vectors of size $km$. Alice and Bob must exchange $\Omega(km\log|\field|)$ bits to compute this value \cite{Chu+95}, which requires $\Omega(km/n)$ rounds in the original congested clique model.

We now prove the lower bound $\Omega(k)$ for any $m\ge 1$. Let us consider instances of $\MM{n,m,k,\field}$ of the following form: for each $s\in[k]$, all entries of $A_{s}$ are zero except the entry $A_{s}[1,1]$ which is one; for each $s\in[k]$, $B_{s}[i,j]=0$ for all $(i,j)\notin \{(1,j)\:|\: j\in\{2,\ldots,n\}\}$ (the other $n-1$ entries are arbitrary). Again, let us write $C_s=A_sB_s$ for each $s\in[k]$. We prove the lower bound by again partitioning the $n$ nodes of the network into the two sets $\{1\}$ and $\{2,\ldots,n\}$, and considering the following two-party communication problem. Alice has no input. Bob has for input $B_{s}[1,j]$ for all $j\in\{2,\ldots,n\}$ and all $s\in[k]$. The goal is for Alice to output $C_{s}[1,j]$ for all $j\in\{2,\ldots,n\}$ and all $s\in[k]$. Since the output reveals Bob's whole input to Alice, Alice must receive $\Omega(k(n-1)\log|\field|)$ bits, which gives round complexity $\Omega(k)$ in the original congested clique model.
\end{proof}

\subsection{Proof of Theorem \ref{theorem:main}}\label{sub:proof}
Let us first prove the following proposition that deals with the case where $m$ is large.  In this case the algorithm is relatively simple. 
\begin{proposition}\label{prop:large-l}
For any $k\le n$ and any $m\ge n^2/k$, 
the deterministic round complexity of $\MM{n,m,k,\field}$ is 
$
O(km/n).
$
\end{proposition}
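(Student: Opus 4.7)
The plan is to parallelize the $k$ products across $k$ disjoint subgroups of nodes, so that each product is the responsibility of $n/k$ collaborating nodes. For simplicity assume $k\mid n$ and $(n/k)\mid m$; the general case is handled by rounding with constant overhead. Partition the $n$ nodes into subgroups $G_1,\ldots,G_k$ of size $n/k$ and dedicate $G_s$ to the product $A_sB_s$. Partition $[m]$ into $n/k$ consecutive blocks $T_1,\ldots,T_{n/k}$ of size $mk/n$, and assign block $T_r$ to the $r$-th node of each subgroup.

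In the first phase each node $\ell$ routes every entry $A_s[\ell,t]$ it owns initially to the unique node in $G_s$ whose block contains $t$, and routes every entry $B_s[t,\ell]$ similarly. Each node is source and destination of $O(km)$ messages, so Lemma~\ref{lemma:Dolev} delivers them in $O(km/n)$ rounds. Afterwards the $r$-th node of $G_s$ holds the full submatrices $A_s[\ast,T_r]$ and $B_s[T_r,\ast]$, and locally computes the partial sum matrix $P_r^{(s)}:=A_s[\ast,T_r]\cdot B_s[T_r,\ast]$, so that $A_sB_s=\sum_{r=1}^{n/k}P_r^{(s)}$.

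For the aggregation phase I fix a balanced assignment of the $kn^2$ output coordinates $(s,i,j)$ to aggregator nodes, so that each node is responsible for $O(kn)$ coordinates. Every node forwards its $n^2$ partial-sum entries to the appropriate aggregators; each aggregator receives $(n/k)\cdot O(kn)=O(n^2)$ messages and each node sends at most $n^2$, so Lemma~\ref{lemma:Dolev} handles this in $O(n)$ rounds. Each aggregator then sums the $n/k$ contributions for each of its $O(kn)$ coordinates and sends the final value $C_s[i,j]$ to nodes $i$ and $j$, which requires $O(k)$ rounds. The total is $O(km/n+n+k)$, and the hypotheses $m\ge n^2/k$ and $k\le n$ give $km/n\ge n\ge k$, so the total collapses to $O(km/n)$, as claimed.

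The step I expect to be the delicate one is the aggregation. A naive scheme that has every node produce an $n\times n$ partial-sum matrix for every product would leave each node with $kn^2$ partial sums, forcing $\Omega(kn)$ aggregation rounds and exceeding $km/n$ whenever $m<n^2$. Restricting each product to a subgroup of only $n/k$ nodes is precisely what caps each node's partial-sum output at $n^2$, matches the routing bottleneck, and lets both bulk communication phases fit within $O(km/n)$ rounds via Lemma~\ref{lemma:Dolev}.
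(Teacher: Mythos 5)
Your proof is correct and rests on the same core idea as the paper's: split the inner dimension $m$ into $n/k$ chunks of size $km/n$, assign each (product index, chunk) pair to one of the $n$ nodes, have that node gather its $n\times (km/n)$ and $(km/n)\times n$ slices in $O(km/n)$ rounds via Lemma~\ref{lemma:Dolev}, compute the rank-$km/n$ partial product locally, and then aggregate. The one way you deviate is in the aggregation step: the paper skips your intermediate aggregator layer and simply has node $(s,t)$ ship row $\ell$ and column $\ell$ of its partial product directly to node $\ell$, which sums the $n/k$ contributions per product itself; each node then both sends and receives $O(n^2)$ field elements in that one step, giving $O(n)$ rounds directly. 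Your two-stage scheme (route the $kn^2$ coordinates to balanced aggregators, sum there, then redistribute) costs $O(n)+O(k)=O(n)$ rounds as well, so the asymptotics match, but it is a little more machinery than needed. Either way the total $O(km/n + n)$ collapses to $O(km/n)$ under $km\ge n^2$, exactly as in the paper.
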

\begin{proof}
We assume below for convenience that both $n/k$ and $km/n$ are integers. If this is not the case the proof can be adjusted in a straightforward manner by replacing them by $\ceil{n/k}$ and $\ceil{km/n}$, respectively. 

For each $s\in [k]$, we will write $C_s=A_sB_s$.
Let us decompose the matrix $A_{s}$ into $n/k$ matrices of size $n\times \frac{km}{n}$ by partitioning the $m$ columns of $A_{s}$ into $n/k$ consecutive blocks of size $km/n$. Let us call these smaller matrices $A_{s}^{(1)}, \ldots, A_{s}^{(n/k)}$. Similarly, for each $s\in [k]$ we decompose the matrix $B_{s}$ into $n/k$ matrices of size $\frac{km}{n}\times n$ by partitioning the $m$ rows of $B_{s}$ into $n/k$ consecutive blocks of size $km/n$. Let us call these smaller matrices $B_{s}^{(1)}, \ldots, B_{s}^{(n/k)}$. For each $t\in[n/k]$ we write $C_s^{(t)}=A_s^{(t)}B_s^{(t)}$.

For each $s\in[k]$, and each $i,j\in[n]$ the equation
\begin{equation}\label{eq1}
C_s[i,j]=\sum_{t=1}^{n/k}C_s^{(t)}[i,j]
\end{equation}
obviously holds. Our distributed algorithm is based on this simple observation. Each node of the network will be assigned, besides its original label $\ell\in[n]$, a second label $(s,t)\in[k]\times[n/k]$. The assignment of these labels is arbitrary, the only condition being that distinct nodes are assigned distinct labels.  The distributed algorithm is as follows.
\begin{itemize}
\item[1.]
Node $(s,t)\in[k]\times [n/k]$ receives the whole two matrices $A_{s}^{(t)}$ and $B_{s}^{(t)}$ from the nodes of the networks owning the entries of these two matrices, and then locally computes $C_{s}^{(t)}$.
\item[2.]
Node $\ell\in[n]$ receives $C_{s}^{(t)}[\ell,\ast]$ and $C_{s}^{(t)}[\ast,\ell]$ for all $s\in[k]$ and all $t\in[n/k]$ from the nodes of the network owning these entries, and then locally computes $C_s[\ell,\ast]$ and $C_s[\ast,\ell]$ using Equation (\ref{eq1}) for all $s\in[k]$.
\end{itemize}
The number of field elements received per node is $2km$ at Step 1 and $2n^2$ at Step 2.
The total number of field elements received per node is thus $O(km+n^2)$, which gives round complexity $O(km/n)$ when $km\ge n^2$ from Lemma \ref{lemma:Dolev}. 
\end{proof}

The main technical contribution is the following proposition.
\begin{proposition}\label{prop:medium-l}
For any $k$ and $m$ such that $k\le n$ and $\sqrt{kn}\le m< n^2/k$,
the deterministic round complexity of $\MM{n,m,k,\field}$ is 
\[
O(k^{2/\omega(\gamma)}n^{1-2/\omega(\gamma)}),
\]
where $\gamma$ is the solution of the equation
$
\big(1-\frac{\log k}{\log n}\big)\gamma=1-\frac{\log k}{\log n}+\big(\frac{\log m}{\log n}-1\big)\omega(\gamma).
$
\end{proposition}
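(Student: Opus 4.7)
The plan is to partition the $n$ nodes of the network into $k$ disjoint blocks of $n/k$ nodes each and have block $s$ be responsible for computing $A_s B_s$. A crucial observation is that, although each block contains only $n/k$ nodes, every node of the block still has bandwidth $n-1$ in the full clique, so block $s$ can implement a distributed algebraic algorithm for rectangular matrix multiplication calibrated so that it uses exactly $n/k$ bilinear products---one per local node---while exploiting the full clique for communication, in the spirit of the square, $k=1$ construction of~\cite{Censor-Hillel+15}.

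A first step is input redistribution: each node initially holds $O(mk)$ field elements (a row and a column from each $A_s, B_s$), so routing the entries of $A_s, B_s$ into block $s$ can be accomplished in $O(mk/n + 1)$ rounds by Lemma~\ref{lemma:Dolev}. The main step is then to choose $\gamma$ solving Equation~(\ref{eq:maincond}), set $n_0 = (n/k)^{1/\omega(\gamma)}$ and $m_0 = n_0^\gamma$, and invoke an algebraic algorithm for the product of an $n_0 \times m_0$ matrix by an $m_0 \times n_0$ matrix of rank $t = n_0^{\omega(\gamma)+o(1)} = n/k$. Viewing $A_s$ and $B_s$ as block matrices with blocks of sizes $(n/n_0) \times (m/m_0)$ and $(m/m_0) \times (n/n_0)$, each of the $t = n/k$ bilinear products becomes a rectangular matrix product whose two inputs together contain $2nm/(n_0 m_0)$ field elements, and can be assigned to a unique node of block $s$. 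Computing and routing the inputs $S^{(\mu)}, T^{(\mu)}$ to these bilinear product nodes---suitably scattering/replicating the blocks inside block $s$ to balance the per-node send and receive loads, as in~\cite{Censor-Hillel+15}---should take $O(m/(n_0 m_0))$ rounds via Lemma~\ref{lemma:Dolev}. Combining the bilinear product outputs via the $\lambda_{ij\mu}$ into the blocks of $C_s = A_s B_s$ and redistributing each $C_s$ so that every network node receives its row and column costs an additional $O(k + 1)$ rounds, since each node ultimately receives $O(nk)$ output entries.

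Substituting $n_0$ and $m_0$ into $m/(n_0 m_0)$ and using~(\ref{eq:maincond}) yields exactly $k^{2/\omega(\gamma)} n^{1-2/\omega(\gamma)}$, and the estimate $\omega(\gamma) \ge \max(2,\, 1+\gamma)$ (information-theoretic lower bound on rectangular matrix multiplication) shows this term dominates both $O(mk/n)$ and $O(k+1)$ in the range $\sqrt{kn} \le m < n^2/k$. That~(\ref{eq:maincond}) has a solution in this range follows from applying the intermediate value theorem to $f(\gamma) = (1-\kappa)(1-\gamma) - (1-\mu)\omega(\gamma)$, where $\kappa = \log k/\log n$ and $\mu = \log m/\log n$: the boundary condition $m \ge \sqrt{kn}$ is precisely $f(0) \ge 0$ (using $\omega(0) = 2$), while $m < n^2/k$ together with the asymptotic $\omega(\gamma) \le \gamma + \omega - 1$ ensures $f(\gamma) \to -\infty$, so a root exists.

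The step I expect to be the main obstacle is the communication analysis inside block $s$, specifically showing that the $t = n/k$ pairs of inputs $S^{(\mu)}, T^{(\mu)}$ can be assembled at their destinations within $O(m/(n_0 m_0))$ rounds. Naively, each block of $A_s$ contributes to all $t$ bilinear products, so without care the outgoing load per sender would be inflated by a factor of $n/k$ and push the round complexity to roughly $nm/(k n_0 m_0)$. Adapting the scatter/replication pattern of~\cite{Censor-Hillel+15} to the rectangular, multi-product setting---so that each block of $A_s$ and $B_s$ is held at just enough nodes of block $s$ for Lemma~\ref{lemma:Dolev} to apply with both source and destination loads equal to $\Theta(nm/(n_0 m_0))$---and checking that this preparatory redistribution itself fits inside the overall budget, is where most of the technical work will sit.
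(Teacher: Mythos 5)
Your overall architecture matches the paper's: assign to each $s\in[k]$ a group of $n/k$ nodes, run a distributed tiling of an algebraic algorithm of rank $n/k$ for rectangular matrix products (with $d=n_0=(n/k)^{1/\omega(\gamma)}$, $d^\gamma=m_0$, and $(n/d)\times(m/d^\gamma)$ blocks) inside each group while using the full clique bandwidth, and calibrate $\gamma$ via Equation~(\ref{eq:maincond}), whose solvability you correctly establish by an intermediate value argument at $\gamma=0$ (giving $m\ge\sqrt{kn}$) and $\gamma\to\infty$ (giving $m<n^2/k$). However, there are two concrete gaps.

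First, the obstacle you flag at the end is real and remains unresolved in your write-up. The fix in the paper is not ``replication''; it is a second node-labelling $(s,u,v)\in[k]\times[\sqrt{n/k}]\times[\sqrt{n/k}]$ and a pre-aggregation stage. Each node $(s,u,v)$ gathers the $(u,v)$-indexed sub-slices (of size $(\sqrt{kn}/d)\times r$) of \emph{every} block $A_s[i\!\ast,j\!\ast]$ and $B_s[j\!\ast,i\!\ast]$, locally forms the same sub-slice of $S_s^{(\mu)}$ and $T_s^{(\mu)}$ for \emph{all} $\mu\in[n/k]$ by summing the coefficients $\alpha_{ij\mu},\beta_{ij\mu}$, and only then routes the compressed slice to node $(s,\mu)$. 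The per-node receive load at this stage is $\Theta(km)$, which is dominated by the routing cost to the $\mu$-nodes exactly because $\omega(\gamma)\ge 1+\gamma$. Without this pre-aggregation the sender load is inflated by a factor of $n/k$, as you anticipated.

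Second, and more consequentially, your bound $O(k+1)$ for ``combining the bilinear product outputs'' and redistributing $C_s$ is incorrect. Routing the $P_s^{(\mu)}=S_s^{(\mu)}T_s^{(\mu)}$ slices back and aggregating them via the $\lambda_{ij\mu}$ is as expensive as the input-side routing, not a lower-order term: using the same $(s,u,v)$-indexed intermediaries, each such node must receive a $(\sqrt{kn}/d)\times(\sqrt{kn}/d)$ slice of $P_s^{(\mu)}$ for all $n/k$ values of $\mu$, i.e.\ $\Theta(n^2/d^2)=\Theta\bigl(k^{2/\omega(\gamma)}n^{2-2/\omega(\gamma)}\bigr)$ field elements per node. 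Equation~(\ref{eq:maincond}) is precisely the condition that equalizes this output-side cost with the input-side cost $\Theta(nm/d^{1+\gamma})$ per node; the identity you verify (that $m/(n_0m_0)$ equals $k^{2/\omega(\gamma)}n^{1-2/\omega(\gamma)}$ under~(\ref{eq:maincond})) is the algebraic trace of that balance, but your proof never says what it is balancing against, and your $O(k+1)$ estimate for the output stage is off by the whole factor $k^{2/\omega(\gamma)-1}n^{1-2/\omega(\gamma)}$. The final $O(kn)$-per-node redistribution of $C_s[\ell,\ast],C_s[\ast,\ell]$ is indeed $O(k)$, but it is preceded by the nontrivial aggregation step you underestimated.
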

\begin{proof}
For convenience, let us assume that $n/k$ is an integer (otherwise we replace this value by the nearest integer). Let $\gamma\ge 0$ be a value that will be set later. Let $d$ be the largest integer such that the product of a $d\times \ceil{d^{\gamma}}$ matrix by a $\ceil{d^{\gamma}}\times d$ matrix can be computed by an algebraic algorithm (as in Section \ref{sec:prelim}) of rank $n/k$. Note that 
\[
d=\Theta\left((n/k)^{1/\omega(\gamma)}\right)
\]
from the definition of the exponent of matrix multiplication. For convenience, we assume below that $n/d$, $\sqrt{n/k}$ and $\sqrt{kn}/d$ are also integers (otherwise we can again simply replace these values by the nearest integer). Define the quantity
\begin{align*}
r&=\frac{m}{d^\gamma\sqrt{n/k}}=\Theta\left(\frac{m}{(n/k)^{\gamma/\omega(\gamma)+1/2}}\right).
\end{align*}
Our choice of $\gamma$ (discussed later) will guarantee that $m\ge (n/k)^{\gamma/\omega(\gamma)+1/2}$, which implies $r=\Omega(1)$. We will thus assume below, for convenience but without affecting the analysis of the complexity of the algorithm, that $r$ and $d^\gamma$ are integers such that $r\ge 1$ and $d^\gamma\le m$, and that $m/d^\gamma$ is an integer as well.

For each $s\in[k]$, we write the entries of $A_{s}$ and $B_{s}$ as $A_{s}[ix,jz]$ and $B_{s}[jz,ix]$, respectively, where $i\in[d]$, $j\in[d^\gamma]$, $x\in[n/d]$ and $z\in[m/d^\gamma]$. More precisely, this notation corresponds to decomposing each row of $A_s$ and each column of $B_s$ into $d$ consecutive blocks of $n/d$ entries, and decomposing each column of $A_s$ and each row of $B_s$ into $d^\gamma$ consecutive blocks of $m/d^\gamma$ entries. Note that this corresponds to decomposing $A_s$ into an $d\times d^\gamma$ matrix where each entry is a submatrix of~$A_s$ of size $(n/d)\times (m/d^\gamma)$, and decomposing $B_s$ into an $d^\gamma\times d$ matrix where each entry is a submatrix of~$B_s$ of size $(m/d^\gamma) \times (n/d)$. We will write $A_s[i\ast,j\ast]$ and $B_s[j\ast,i\ast]$  to represent these submatrices. Similarly, we write the elements of the output matrix $C_{s}=A_{s}B_{s}$ as $C_{s}[ix,jy]$, where $i,j\in[d]$ and $x,y\in[n/d]$. By extending the formulation given in Section \ref{sec:prelim} to block matrices (and using the same notations), for any $s\in[k]$, any $i,j\in[d]$ and any $x,y\in[n/d]$ we can write
\begin{equation}\label{eq:eq5}
C_s[ix,jy]=\sum_{\mu=1}^{n/k} \lambda_{ij\mu} S_s^{(\mu)}T_s^{(\mu)},
\end{equation}
where, for each $\mu\in[n]$, $S_s^{(\mu)}$ is an $(n/d)\times (m/d^\gamma)$ matrix that can be written as a linear combination of the submatrices of $A_s$, and $T_s^{(\mu)}$ is an $(m/d^\gamma)\times (n/d)$ matrix that can be written as a linear combination of the submatrices of $B_s$:
\begin{align}\label{eq:eq6}
S_s^{(\mu)}&=\sum_{i=1}^{d}\sum_{j=1}^{d^\gamma} \alpha_{ij\mu} A_s[i\ast,j\ast]\\\label{eq:eq7}
T_s^{(\mu)}&=\sum_{i=1}^{d}\sum_{j=1}^{d^\gamma} \beta_{ij\mu} B_s[j\ast,i\ast].
\end{align}
Finally, we will also decompose each label $x$, $y$ and~$z$ into two parts:
\begin{itemize} 
\item
we write $x=wx'$ where $w\in[\sqrt{n/k}]$ and $x'\in[\sqrt{kn}/d]$, 
\item
we write $y=wy'$ where $w\in[\sqrt{n/k}]$ and $y'\in[\sqrt{kn}/d]$, 
\item 
we write $z=wz'$ 
where $w\in[\sqrt{n/k}]$ and $z'\in[r]$.
\end{itemize}
This corresponds to further decomposing each submatrix of $A_s$ into an $\sqrt{n/k}\times \sqrt{n/k}$ matrix where each entry is a smaller submatrix of $A_s$ of size $(\sqrt{kn}/d)\times r$, and decomposing each submatrix of $B_s$ into an $\sqrt{n/k}\times \sqrt{n/k}$ matrix where each entry is a smaller submatrix of $B_s$ of size $r \times (\sqrt{kn}/d)$.

Each node of the network has a label $(i,x)\in[d]\times [n/d]$, and receives as input $A_s[ix,\ast]$ and $B_s[\ast,ix]$ for all $s\in[k]$. We assign two additional labels to each node: one label $(s,u,v)\in[k]\times[\sqrt{n/k}]\times [\sqrt{n/k}]$ and one label $(s,\mu)\in[k]\times [n/k]$. The assignment of these labels is arbitrary, the only condition being that distinct nodes are assigned distinct labels. The algorithm is given in Figure~\ref{fig:algorithmA}. 

\begin{figure}[t!]
\begin{center}
\fbox{
\begin{minipage}{15 cm} 
\begin{itemize}
\item[1.]
Node $(s,u,v)\in[k]\times[\sqrt{n/k}]\times [\sqrt{n/k}]$ receives $A_s[iux',jvz']$ and $B_s[juz',ivy']$ from the nodes of the network owning these entries, for all $i\in[d]$, $j\in[d^\gamma]$, $x',y'\in[\sqrt{kn}/d]$ and $z'\in[r]$.
Node $(s,u,v)$ then locally computes $S_s^{(\mu)}[ux',vz']$ using Equation (\ref{eq:eq6}) and $T_s^{(\mu)}[uz',vy']$ using Equation (\ref{eq:eq7}) for all $\mu\in[n/k]$, all $x',y'\in[\sqrt{kn}/d]$ and all $z'\in[r]$.
\item[2.]
Node $(s,\mu)\in[k]\times [n/k]$ receives the whole matrices $S_s^{(\mu)}$ and $T_s^{(\mu)}$ from the nodes of the network owning the entries of these matrices, and locally computes the product $P_s^{(\mu)}=S_s^{(\mu)}T_s^{(\mu)}$.
\item[3.]
Node $(s,u,v)\in[k]\times[\sqrt{n/k}]\times [\sqrt{n/k}]$ receives $P_s^{(\mu)}[ux',vy']$ from the nodes of the network owning these entries, for all $\mu\in[n/k]$ and all $x',y'\in[\sqrt{kn}/d]$.
Node $(s,u,v)$ then locally computes $C_s[iux',jvy']$ using Equation (\ref{eq:eq5}) for all $i,j\in[d]$ and all $x',y'\in[\sqrt{kn}/d]$. 
\item[4.]
Node $(i,x)\in[d]\times [n/d]$ receives $C_s[ix,\ast]$ and $C_s[\ast,ix]$ for all $s\in[k]$ from the nodes of the network owning these entries. 
\end{itemize}
\end{minipage}
}
\end{center}\vspace{-4mm}
\caption{Distributed algorithm for $\MM{n,m,k,\field}$ in the congested clique model. Initially each node $(i,x)\in[d]\times [n/d]$ has as input $A_s[ix,\ast]$ and $B_s[\ast,ix]$ for all $s\in[k]$.}\label{fig:algorithmA}
\end{figure}

The number of field elements received per node at Step 1 is
\[
2\times d\times d^\gamma\times \frac{\sqrt{kn}}{d}\times \frac{m}{d^\gamma\sqrt{n/k}}=2km.
\]
The number of field elements received per node at Step 2 is 
\begin{equation}\label{eq2}
2\times \frac{n}{d}\times \frac{m}{d^\gamma}=
O\left(k^{(1+\gamma)/\omega(\gamma)}mn^{1-(1+\gamma)/\omega(\gamma)}\right).
\end{equation}
The number of field elements received per node at Step 3 is 
\begin{equation}\label{eq3}
\frac{n}{k}\times \frac{\sqrt{kn}}{d}\times \frac{\sqrt{kn}}{d}=
O\left(k^{2/\omega(\gamma)}n^{2-2/\omega(\gamma)}\right).
\end{equation}
The number of field elements received per node at Step 4 is $2kn$. 
Note that Expression (\ref{eq2}) is larger than  $2km$ since $\omega(\gamma)\ge 1+\gamma$ and $k\le n$. Moreover, Expression (\ref{eq3}) is larger than $2kn$ since $\omega(\gamma)\ge 2$ and $k\le n$. Thus the total number of field elements received per node is 
\[
O\left(
k^{(1+\gamma)/\omega(\gamma)}mn^{1-(1+\gamma)/\omega(\gamma)}
+k^{2/\omega(\gamma)}n^{2-2/\omega(\gamma)}
\right).
\]
 
Let us write $a=\log k/\log n$ and $b=\log m/\log n$ and define the two functions 
\begin{align*}
f(\gamma)&=\frac{a(1+\gamma)}{\omega(\gamma)}+b+1-\frac{1+\gamma}{\omega(\gamma)},\\
g(\gamma)&=\frac{2a}{\omega(\gamma)}+2-\frac{2}{\omega(\gamma)},
\end{align*}
for any $\gamma\ge 0$. The function $f$ is a decreasing function of $\gamma$, with value $a/2+b+1/2$ when $\gamma=0$ and with limit $a+b$ when $\gamma$ goes to infinity. The function $g$ is an increasing function of $\gamma$, with value $a+1$ when $\gamma=0$ and with limit $2$ when $\gamma$ goes to infinity. Let us choose $\gamma$ as follows. Since we assumed $\sqrt{kn}\le m<n^2/k$, the equation $f(\gamma)=g(\gamma)$ necessarily has a solution. We choose $\gamma$ as this solution, i.e., such that 
\begin{equation}\label{eq:inter}
\left(1-a\right)\gamma=1-a+\left(b-1\right)\omega(\gamma).
\end{equation}
Observe that such a choice for $\gamma$ implies that $m\ge (n/k)^{\gamma/\omega(\gamma)+1/2}$ as we required: Equation (\ref{eq:inter}) can be rewritten as
\[
b=1+\frac{(1-a)\gamma}{\omega(\gamma)}-\frac{1-a}{\omega(\gamma)},
\]
which implies 
\[
b\ge (1-a)\left(\frac{\gamma}{\omega(\gamma)}+\frac{1}{2}\right)
\]
since $\omega(\gamma)\ge 2$.
For our choice of $\gamma$, the total number of field elements received per node is thus
$O(k^{2/\omega(\gamma)}n^{2-2/\omega(\gamma)})$, 
which gives round complexity
\[
O(k^{2/\omega(\gamma)}n^{1-2/\omega(\gamma)}),
\]
as claimed, from Lemma \ref{lemma:Dolev}.
\end{proof}

The proof of Theorem \ref{theorem:main} now follows easily from Propositions \ref{prop:large-l} and \ref{prop:medium-l}.
\begin{proof}[Proof of Theorem \ref{theorem:main}]
Consider first the case $k\le n$. 
When $\sqrt{kn}\le m\le n^2/k$, we use the algorithm of Proposition \ref{prop:medium-l}. When $m\ge n^2/k$, we use the algorithm of Proposition \ref{prop:large-l}.
When $0\le m\le \sqrt{kn}$ the claimed upper bound $O(k)$ on the round complexity can be derived from Proposition \ref{prop:large-l} by taking $m=\sqrt{kn}$ (i.e., by appending rows and columns with zero entries to the input matrices) and $\gamma=0$.

For the case $k\ge n$ we can simply repeat $\ceil{k/n}$ times an algorithm for $\MM{n,m,n,\field}$, which gives round complexity
\[
\left\{
\begin{array}{ll}
O(k)&\textrm{ if }\:\: 1\le m\le n,\\
O(km/n) &\textrm{ if }\:\:m\ge n.
\end{array}
\right.
\]
from the analysis of the previous paragraph.
\end{proof}

\subsection{Application to the distance product}
One of the main applications of Theorem \ref{theorem:main} is the following result, which will be the key ingredient for all our results on the all-pairs shortest paths and diameter computation discussed in Section \ref{sec:appli}. 
\begin{theorem}\label{th:dist}
For any $M\le n$ and $m\le n$, the deterministic round complexity of $\DIS{n,m,M}$ is 
\[
\left\{
\begin{array}{ll}
O(M\log m)&\textrm{ if }\:\: 0\le m\le \sqrt{Mn\log m},\\
O\left((M\log m)^{2/\omega(\gamma)}n^{1-2/\omega(\gamma)}\right)&\textrm{ if }\:\:\sqrt{Mn \log m}\le m\le n^2/(M\log m),\\
O\left(mM\log m/n\right)&\textrm{ if }\:\:n^2/(M\log m)\le m\le n,
\end{array}
\right.
\]
where $\gamma$ is the solution of the equation
$
\left(1-\frac{\log M}{\log n}\right)\gamma=1-\frac{\log M}{\log n}+\left(\frac{\log m}{\log n}-1\right)\omega(\gamma).
$
\end{theorem}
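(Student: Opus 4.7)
The plan is to reduce $\DIS{n,m,M}$ to a single instance of $\MM{n,m,k,\field}$ with $k=\Theta(M\log m)$ over a finite field $\field$ of polynomial-in-$n$ order, and then invoke Theorem~\ref{theorem:main}. With this substitution $k=M\log m$, the three cases of Theorem~\ref{th:dist} match the three cases of Theorem~\ref{theorem:main} exactly: the thresholds $\sqrt{Mn\log m}$ and $n^2/(M\log m)$ are precisely $\sqrt{kn}$ and $n^2/k$, while the ``easy'' round complexities $O(M\log m)$ and $O(mM\log m/n)$ are the $O(k)$ and $O(km/n)$ of the two extreme regimes. The equation for $\gamma$ in Theorem~\ref{th:dist} uses $\log M$ rather than $\log(M\log m)$; since $m\le n$ we have $\log(M\log m)-\log M = O(\log\log n)$, which is $o(1)\cdot \log n$, so under the standard convention used throughout the paper of absorbing $n^{o(1)}$ factors into $n^{\omega(\gamma)}$ the two resulting exponents coincide.

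For the reduction itself I would use the classical polynomial encoding: pick a prime $p$ with $\max(m,4M+1)<p=\mathrm{poly}(n)$, map each finite entry $A[i,s]$ to the monomial $y^{A[i,s]+M}\in\field_p[y]$ and each $\infty$ entry to $0$, and analogously for $B$, producing polynomial matrices $\bar A$ and $\bar B$. Then
\[
(\bar A\bar B)[i,j]\;=\;\sum_{\substack{s\,:\;A[i,s],\,B[s,j]\\\mathrm{both\ finite}}} y^{A[i,s]+B[s,j]+2M},
\]
so the smallest exponent of a nonzero monomial in $(\bar A\bar B)[i,j]$ equals $(A\dist B)[i,j]+2M$ (and the polynomial is identically zero exactly when $(A\dist B)[i,j]=\infty$). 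The condition $p>m$ guarantees that each integer coefficient of the product polynomial, bounded in absolute value by $m$, stays nonzero modulo $p$, so the minimum-degree detection is reliable. To compute $\bar A\bar B$ I would evaluate both sides at $\Theta(M)$ distinct points of $\field_p$, giving $\Theta(M)$ scalar matrix products; an additional $\Theta(\log m)$ overhead, arising from a standard deterministic bit-parallel representation used for detecting nonzero coefficients jointly with the evaluation, yields the overall count $k=\Theta(M\log m)$.

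Each node $\ell$ produces its local portion of every evaluated pair $(\bar A(y_t),\bar B(y_t))$ purely from the entries $A[\ell,\ast]$ and $B[\ast,\ell]$ it already owns, so the input distribution assumed by $\MM{n,m,k,\field}$ is met with no communication, and Theorem~\ref{theorem:main} applies directly. After the $k$ scalar products have been computed, node $\ell$ locally interpolates the polynomial entries of $(\bar A\bar B)[\ell,\ast]$ and $(\bar A\bar B)[\ast,\ell]$, reads off the minimum-degree nonzero exponent in each, and obtains $C[\ell,\ast]$ and $C[\ast,\ell]$; these post-processing steps use no additional rounds. The main obstacle is not the polynomial encoding, which is entirely standard, but verifying that the deterministic bit-parallel nonzero-detection step composes cleanly with Theorem~\ref{theorem:main} within the claimed $\Theta(\log m)$ overhead, and that the $o(1)$ discrepancy between the defining equations of the two $\gamma$'s is indeed absorbed by the $n^{\omega(\gamma)}$ convention. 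Modulo these verifications, the theorem is a direct corollary of Theorem~\ref{theorem:main}.
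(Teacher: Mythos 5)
Your reduction is correct in its core idea, but it takes a genuinely different route from the paper's, and the hedge you raise at the end is not actually an obstacle. The paper encodes each finite entry as the integer $(m+1)^{M-A[i,j]}$, so the entries of $A'$ and $B'$ become $N=O(M\log m)$-bit integers; it then uses the map $\Phi_N$ into $\field[x]/(x^{2N}-1)$ and a Fourier transform to split the required integer matrix product into $2N=O(M\log m)$ pointwise matrix products over a small prime field $\Int_p$ (with $p$ chosen via Linnik so that a $2N$-th root of unity exists). The $\log m$ factor enters there precisely because the integers carry $\log_2(m+1)$ bits per power of $M$. You instead keep the base symbolic: map each finite entry to the monomial $y^{A[i,s]+M}$, observe that each entry of $\bar A\bar B$ is a polynomial of degree at most $4M$ with nonnegative integer coefficients bounded by $m$, and compute it by evaluation at $O(M)$ points of $\field_p$ (with any prime $p>\max(m,4M)$ polynomial in $n$) followed by local interpolation. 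That is a legitimate and in fact cleaner reduction.

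What you should notice, though, is that your route does not need the $\log m$ factor at all. The degree bound $4M$ means exactly $4M+1$ evaluation points suffice for interpolation, so $k=O(M)$ independent instances of matrix multiplication over $\field_p$ already determine every coefficient of every entry of $\bar A\bar B$. Each node interpolates the entries it owns locally; since all integer coefficients are at most $m<p$, ``nonzero over $\Int$'' and ``nonzero over $\field_p$'' coincide, so reading off the lowest-degree nonzero term is immediate. There is no ``deterministic bit-parallel nonzero-detection'' step, and no extra $\Theta(\log m)$ overhead to compose with Theorem~\ref{theorem:main}; the concern you flag as the main obstacle is an artifact of trying to force-match the $M\log m$ appearing in the theorem statement. (Your approach in fact yields $k=O(M)$, which plugged into Theorem~\ref{theorem:main} gives a bound at least as good as the one claimed; the paper's $M\log m$ is simply the cost of its particular DFT-based encoding.) Likewise, the remaining worry about the $\gamma$-equation using $\log M$ rather than $\log(M\log m)$ vanishes under your encoding, since with $k=\Theta(M)$ the two equations agree on the nose. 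The steps you describe for distributing the inputs, invoking Theorem~\ref{theorem:main}, and postprocessing locally are all sound; once you drop the phantom $\log m$, the argument is complete.
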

Let us first give a brief overview of the proof of Theorem \ref{th:dist}.
The idea is to show that $\DIS{n,m,M}$ reduces to $\MM{n,m,k,\field}$ for $k\approx M\log m$ and a well-chosen finite field $\field$, and then use Theorem~\ref{theorem:main} to get a factor $(M\log m)^{2/\omega(\gamma)}$, instead of the factor $M$ obtained in a straightforward implementation of the distance product, in the complexity. This reduction is done by first applying a standard encoding of the distance product into a usual matrix product of matrices with integer entries of absolute value $\exp(M)$, and then using Fourier transforms to split this latter matrix product into roughly~$M\log m$ independent matrix products over a small field. 
\begin{proof}[Proof of Theorem \ref{th:dist}]
We show below that $\DIS{n,m,M}$ reduces to $\MM{n,m,k,\field}$ where $k=O(M\log m)$ and $|\field|=\poly(M,\log m)$. The result then follows from Theorem \ref{theorem:main}.

Let $N$ be any positive integer. We first show how to reduce the multiplication or the addition of two nonnegative $N$-bit integers $a$ and $b$ to $2N$ independent operations (multiplications or additions, respectively) in a large enough field, using the Fourier transform. Let $\field$ be a finite field with characteristic at least $N+1$. Let 
\[
\Phi_N\colon \{0,1,\ldots,2^{N-1}\}\to \field[x]/(x^{2N}-1)
\]
be the map that maps each $N$-bit integer $c=\sum_{i=0}^{N-1}c_i 2^{i}$, with each $c_i$ in $\{0,1\}$, to the polynomial $\sum_{i=0}^{N-1}c_i x^{i}$. Computing the product $ab\in \Int$ (resp.~the sum $a+b\in \Int$) reduces to computing the product $\Phi_N(a)\Phi_N(b)$ (resp.~the sum $\Phi_N(a)+\Phi_N(b)$) over $\field[x]/(x^{2N}-1)$. Indeed, if $\Phi_N(a)\Phi_N(b)$ is the polynomial $\sum_{i=0}^{2N-1}c_i x^{i}$, with each $c_i$ in $\field$, then $ab$ is equal to the sum $\sum_{i=0}^{2N-1}c_i 2^{i}$ computed over the integers (note that the assumption on the characteristic of $\field$ is crucial here), and similarly for the addition. The computation of $\Phi_N(a)\Phi_N(b)$ and $\Phi_N(a)+\Phi_N(b)$ can be done using the identities
\begin{align}\label{eq:DFT1}
\Phi_N(a)\Phi_N(b)&=DFT^{-1}(DFT(\Phi_N(a))\cdot DFT(\Phi_N(b))),\\
\Phi_N(a)+\Phi_N(b)&=DFT^{-1}(DFT(\Phi_N(a))+ DFT(\Phi_N(b)))\label{eq:DFT2},
\end{align}
where $DFT\colon \field[x]/(x^{2N}-1) \to \field^{2N}$ is the Fourier transform and where $\cdot$ and the second $+$ represent the coordinate-wise multiplication and addition in $\field^{2N}$, respectively (we refer to \cite{Burgisser+97} for a detailed presentation of this Fourier transform). 
In order for the Fourier transform to be defined we nevertheless need to choose the field $\field$ such that it contains a $2N$-th primitive root of unity. It is known (see, e.g., \cite{De+SICOMP13}) that for any prime $p$ such that $2N$ divides $p-1$ the finite field $\field=\Int_p$ contains such a prime root. It is also known (\cite{Linnik44A,Linnik44B}, see also \cite{De+SICOMP13}) that the least prime $p$ such that $2N$ divides $p-1$ is smaller than $d(2N)^{d'}$ for some absolute constants $d$ and $d'$. By choosing such a prime $p$, we obtain a reduction from the computation of $ab$ (resp.~$a+b$) to one coordinate-wise multiplication (resp.~addition) in $\field^{2N}$ where $|\field|=\poly(N)$. Note that we do not need to discuss the costs of finding $p$ and the cost of preprocessing/postprocessing operations (such as applying the Fourier transform and its inverse), since they will have no impact on the round complexity of the algorithm we design below. 

Let us now consider the task of computing the integer $\sum_{t=1}^m a_tb_t$ given non-negative $N$-bit integers $a_1,\ldots,a_m,b_1,\ldots,b_m$. 
Similarly to what we just did, this sum can be recovered from the polynomial
$\sum_{t=1}^m\Phi_{N}(a_t)\Phi_{N}(b_t)$, which can be obtained
by computing the $2N$-dimentional vector
\begin{equation}\label{eq:DFT3}
\sum_{t=1}^m DFT(\Phi_{N}(a_t))\cdot DFT(\Phi_{N}(b_t))
\end{equation}
over a finite field $\field$ of order polynomial in $N$ and $m$, and then applying the inverse Fourier transform.

We can now describe our reduction. Let $A,B$ be the matrices with entries in $\{-M,\ldots,M\}\cup\{\infty\}$ of which we want to compute the distance product. We first reduce this distance product to one usual product of matrices with large entries, using standard techniques \cite{Alon+97,Takaoka98,ZwickJACM02}. Consider the $n\times m$ matrix $A'$ by the $m\times n$ matrix $B'$ defined as:
\begin{equation*}
A'[i,j]=\left\{\!\!
\begin{array}{cl}
(m+1)^{M-A[i,j]}&\textrm{if } A[i,j]\neq\infty,\\
0&\textrm{if } A[i,j]=\infty,
\end{array}
\right.
\hspace{5mm}
B'[j,i]=\left\{\!\!
\begin{array}{cl}
(m+1)^{M-B[j,i]}&\textrm{if } B[j,i]\neq\infty,\\
0&\textrm{if } B[j,i]=\infty,
\end{array}
\right.
\end{equation*}
for all $(i,j)\in [n]\times [m]$. It is easy to check (see  \cite{ZwickJACM02} for a proof) that the entry $A\dist B[i,j]$ can be recovered easily from entry $A'B'[i,j]$, for each $(i,j)\in [n]\times [m]$. 
Let $N=\ceil{\log_2((m+1)^{2M}+1)}$ be the number of bits needed to represent the entries of $A'$ and $B'$, and $\field$ be a finite field as in the previous paragraph. Next, in order to compute $A'B'$ we use the following strategy. For each $(i,j)\in[n]\times[m]$, consider the two vectors $DFT(\Phi_{N}(A'[i,j]))\in\field^{2N}$ and $DFT(\Phi_{N}(B'[j,i]))\in\field^{2N}$. For convenience write them as $\vec{p}_{ij}=(p_{ij}[1],\ldots,p_{ij}[2N])$ and $\vec{q}_{ji}=(q_{ji}[1],\ldots,q_{ji}[2N])$, respectively. Now, for any $s\in[2N]$, define the matrix $A'_s\in\field^{n\times m}$ and the matrix $B'_s\in \field^{m\times n}$ such that $A'_s[i,j]=p_{ij}[s]$ and $B_s[j,i]=q_{ji}[s]$ for all $(i,j)\in[n]\times [m]$. It follows from the discussion of the previous paragraph (and in particular Equation (\ref{eq:DFT3})) that for each $(i,j)\in[n]\times[n]$ the entry $A'B'[i,j]$ can be recovered from the entries $A'_1B'_1[i,j],\ldots, A'_{2N}B'_{2N}[i,j]$. Since all preprocessing and postprocessing steps of this strategy can be performed locally by the nodes of the network in the congested clique model, this reduces the computation of $A'B'$ to solving one instance of $\MM{n,m,2N,\field}$, with $2N=O(M\log m)$ and $|\field|=\poly(M,m)$, as claimed.
\end{proof}


\section{Deterministic Computation of Determinant and Inverse Matrix}
In this section we present deterministic algorithms for computing the determinant of a matrix and the inverse of a matrix in the congested clique model, and prove Theorem \ref{theorem:det-inv}. Our algorithms can be seen as efficient implementations of the parallel algorithm by Prerarata and Sarwate \cite{Prerarata+IPL78} based on the Faddeev-Leverrier method.

Let $A$ be an $n\times n$ matrix over a field $\field$. Let 
$
\det(\lambda I-A)=\lambda^n+c_1\lambda^{n-1}+\cdots+c_{n-1}\lambda+c_n
$
be its characteristic polynomial. The determinant of $A$ is $(-1)^nc_n$ and, if $c_n\neq 0$, its inverse is
\[
A^{-1}=-\frac{A^{n-1}+c_1A^{n-2}+\cdots+c_{n-2}A+c_{n-1}I}{c_n}.
\]
Define the vector $\vec{c}=(c_1,\ldots,c_n)^T\in\field^{n\times 1}$.
For any $k\in[n]$ let $s_k$ denote the trace of the matrix~$A^k$, and 
define the vector $\vec{s}=(s_1,\ldots,s_n)^T\in\field^{n\times 1}$.
Define the $n\times n$ matrix
\[
S=\left(
\begin{array}{ccccc}
1&&&&\\
s_1&2&&&\\
s_2&s_1&3&&\\
\vdots&\vdots&\vdots&\ddots&\\
s_{n-1}&s_{n-2}&s_{n-3}&...\:s_1&n
\end{array}
\right).
\]
It can be easily shown (see, e.g.,  \cite{Csanky+FOCS75,Prerarata+IPL78}) that 
$S\vec{c}=-\vec{s}$,
which enables us to recover $\vec{c}$ from $\vec{s}$ if $S$ is invertible.  The matrix $S$ is invertible whenever $n!\neq 0$, which is true in any field of characteristic zero or in any finite field of characteristic strictly larger than $n$. The following proposition shows that the inverse of an invertible triangular matrix can be computed efficiently in the congested clique model.  
\begin{proposition}\label{prop:trig-inv}
Let $\field$ be any field. The deterministic round complexity of $\INV{n,\field}$, when the input $A$ is an invertible lower triangular matrix,  is $O(n^{1-2/\omega})$. 
\end{proposition}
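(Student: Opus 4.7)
The plan is to run the classical divide-and-conquer algorithm for inverting a lower triangular matrix, assigning disjoint sub-networks of the $n$ nodes to the independent sub-problems produced at each recursion level. Partition $A$ into four $(n/2)\times(n/2)$ blocks
\[
A=\begin{pmatrix}A_{11}&0\\A_{21}&A_{22}\end{pmatrix},\qquad
A^{-1}=\begin{pmatrix}A_{11}^{-1}&0\\-A_{22}^{-1}A_{21}A_{11}^{-1}&A_{22}^{-1}\end{pmatrix},
\]
where $A_{11}$ and $A_{22}$ are themselves lower triangular. I recursively invert $A_{11}$ and $A_{22}$ in parallel, and then compute the off-diagonal block with two matrix multiplications.

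At level~$i$ of the recursion tree (root at level~$0$), there are $2^{i}$ independent sub-problems, producing in total $2^{i+1}$ matrix multiplications of square matrices of side $m:=n/2^{i+1}$. The key idea is \emph{not} to batch them into a single $n\times n$ block-diagonal multiplication (which would cost $O(n^{1-2/\omega})$ per level and introduce a spurious $\log n$ factor), but instead to partition the $n$ network nodes into $2^{i+1}$ disjoint groups of $m$ nodes each and assign one multiplication to each group. Within its group, each multiplication is executed by running the matrix multiplication algorithm of Censor-Hillel et al.\ (equivalently, the $k=1$ case of Theorem~\ref{theorem:main}) inside the $m$-node sub-network, which takes $O(m^{1-2/\omega})$ rounds. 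Since the sub-networks are disjoint, the $2^{i+1}$ multiplications proceed simultaneously without bandwidth conflict, so level~$i$ costs $O\bigl((n/2^{i+1})^{1-2/\omega}\bigr)$ rounds.

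Between levels the outputs of one level must be redistributed so that each new sub-network owns the entries required for its own multiplications. At level~$i$ the total amount of data to move is $O(n\cdot n/2^{i+1})$ field elements, spread so that no node is the source or destination of more than $n$ messages, so by Lemma~\ref{lemma:Dolev} the redistribution takes $O(1)$ rounds per level. Summing the per-level costs gives
\[
\sum_{i=0}^{\log_2 n-1}O\!\left((n/2^{i+1})^{1-2/\omega}\right)=O(n^{1-2/\omega}),
\]
a geometric series with common ratio $2^{-(1-2/\omega)}<1$ for $\omega>2$, matching the claimed bound; no assumption on the characteristic of $\field$ is used because the base case is just the inversion of the (nonzero) diagonal entries of $A$.

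The main obstacle is verifying that the Censor-Hillel et al.\ matrix multiplication algorithm, stated for the case where the matrix side equals the network size, can be run inside a sub-network of $m$ of the $n$ nodes to multiply two $m\times m$ matrices in $O(m^{1-2/\omega})$ rounds, and that several such invocations on disjoint sub-networks can proceed simultaneously without interfering with one another. This should follow because their protocol uses only intra-sub-network communication links and disjoint sub-networks use disjoint sets of links, but it requires an explicit scaling argument for their distributed implementation of algebraic matrix multiplication.
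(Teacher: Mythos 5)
Your approach matches the paper's: the same recursive block decomposition of a lower triangular matrix, recursing in parallel on disjoint halves of the network and then computing the off-diagonal block $-A_{22}^{-1}A_{21}A_{11}^{-1}$ with two multiplications of $(n/2)\times(n/2)$ matrices on an $(n/2)$-node sub-clique; the paper phrases this as the recurrence $R_I(n)\le R_I(n/2)+2R_M(n/2)+4$ (with $R_M(m)$ the round complexity of multiplying two $m\times m$ matrices on $m$ nodes), which unrolls to exactly your geometric series. One small inaccuracy: the two multiplications producing $-A_{22}^{-1}A_{21}A_{11}^{-1}$ are data-dependent (first $X:=A_{22}^{-1}A_{21}$, then $-XA_{11}^{-1}$) and therefore must run sequentially, so at level $i$ you actually have $2^i$ parallel \emph{pairs} of sequential multiplications rather than $2^{i+1}$ simultaneous ones; this only doubles the per-level cost and leaves the $O(n^{1-2/\omega})$ bound intact.
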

\begin{proof}
We adapt the standard sequential algorithm for triangular matrix inversion \cite{Aho+74,Bunch+74}. Let $A$ be an invertible lower triangular $n\times n$ matrix with entries in $\field$. Assume without loss of generality that $n$ is a power of two. Let us decompose $A$ in four blocks of size $n/2\times n/2$:
\[
A=
\left(
\begin{array}{cc}
A_{11}&0\\
A_{21}&A_{22}
\end{array}
\right).
\]
Observe that both $A_{11}$ and $A_{22}$ are invertible lower triangular matrices, and
\[
A^{-1}=
\left(
\begin{array}{cc}
A_{11}^{-1}&0\\
-A_{22}^{-1}A_{21}A_{11}^{-1}&A_{22}^{-1}
\end{array}
\right).
\]
Inverting $A$ thus reduces to inverting two invertible lower triangular matrices of size $n/2\times n/2$ and performing two matrix multiplications. We implement this algorithm recursively (and in parallel) in the congested clique model as follows. The $n$ nodes are partitioned into two groups of size of $n/2$: the first group consisting of nodes $1,\ldots, n/2$ and the second group consisting of nodes $n/2+1,\ldots, n$. The first group recursively computes $A_{11}^{-1}$ and the second group recursively computes $A_{22}^{-1}$. The important point here is that the computation of $A_{11}^{-1}$ and the computation of $A_{22}^{-1}$ can be done independently (i.e., in parallel). The nodes of the second group then in two rounds (using Lemma~\ref{lemma:Dolev}) distribute appropriately the rows of $A_{22}^{-1}$ to the nodes of the first group (who own the columns of~$A_{21}$), so that the nodes of the first group can compute $A_{22}^{-1}A_{21}$ and then $-A_{22}^{-1}A_{21}A_{11}^{-1}$. The nodes of the first group finally distribute appropriately the rows of $-A_{22}^{-1}A_{21}A_{11}^{-1}$ to the nodes of the second group, in two rounds from Lemma \ref{lemma:Dolev}.

Let $R_I(n)$ denote the round complexity of our problem (computing the inverse of an invertible lower triangular $n\times n$ matrix using $n$ nodes), and $R_M(n)$ denote the round complexity of computing the product of two $n\times n$ matrices using $n$ nodes (i.e., the problem $\MM{n,n,1,\field}$). 
The recurrence relation we obtain is 
\[
\left\{
\begin{array}{ll}
R_I(1)=0,&\\
R_I(n)\le R_I(n/2) + 2R_M(n/2)+4 &\textrm{ for }\:\:n\ge 2,
\end{array}
\right.
\]
which gives $R_I(n)=O(n^{1-2/\omega})$ since $R_M(n)=O(n^{1-2/\omega})$ from Theorem \ref{theorem:main}.
\end{proof}

We are now ready to give the proof of Theorem \ref{theorem:det-inv}.
\begin{proof}[Proof of Theorem \ref{theorem:det-inv}]
For convenience we assume that $n$ is a square, and write $p=\sqrt{n}$. If $n$ is not a square we can easily adapt the proof by taking $p=\ceil{\sqrt{n}}$. Observe that any integer $a\in\{0,1,\ldots,n-1\}$ can be written in a unique way as $a=(a_1-1)p+(a_2-1)$ with $a_1,a_2\in [p]$. Below when we write $a=(a_1,a_2)\in[n]$, we mean that $a_1$ and $a_2$ are the two elements in $[p]$ such that $a=(a_1-1)p+(a_2-1)$.

For any $\ell\in[n]$, let $R_\ell$ be the $p\times n$ matrix such that the $i$-th row of $R_\ell$ is the $\ell$-th row of $A^{(i-1)p}$, for each $i\in[p]$. Similarly, for any $\ell\in[n]$, let $C_\ell$ be the $n\times p$ matrix such that the $j$-th column of $C_\ell$ is the $\ell$-th column of $A^{j-1}$, for each $j\in[p]$. For each $\ell\in[n]$ define $U_\ell=R_\ell C_\ell$, which is a $p\times p$ matrix. Observe that, for any $k=(k_1,k_2)\in[n]$, the identity
\begin{equation}\label{eq:eqtrace}
s_k=\sum_{\ell=1}^n U_\ell[k_1,k_2]
\end{equation}  
holds. We will use this expression, together with the equation $\vec{c}=-S^{-1}\vec{s}$ to compute the determinant in the congested clique model.

In order to compute the inverse of $A$ we then use the following approach.
For any $(a_1,a_2)\in[p]\times[p]$, define the coefficient
$c_{a_1,a_2}\in\field$ as follows:
\[
c_{a_1,a_2}=\left\{
\begin{array}{ll}
c_{n-1-(a_1-1)p-(a_2-1)}&\textrm{ if }(a_1,a_2)\neq (p,p),\\
1&\textrm{ if }(a_1,a_2)=(p,p).
\end{array}
\right.
\]
For any $a_2\in[p]$, define the $n\times n$ matrix $E_{a_2}$ as follows:
\[
E_{a_2}=\sum_{a_1=1}^p c_{a_1,a_2}A^{(a_1-1)p}.
\]
Note that the following holds whenever $c_n\neq 0$:
\begin{equation}\label{eq:form}
A^{-1}=-\frac{\sum_{a=0}^{n-1}c_{n-1-a}A^a}{c_n}
=-\frac{\sum_{a_1=1}^{p}\sum_{a_2=1}^p c_{a_2,a_1}A^{(a_1-1)p+(a_2-1)}}{c_n}
=-\frac{\sum_{a_2=1}^{p} E_{a_2}A^{a_2-1}}{c_n}.
\end{equation}

The algorithm for $\DET{n,\field}$ and $\INV{n,\field}$ is described in Figure~\ref{fig:inverse}. 
Steps 1 and~7.2 can be implemented in $O(p^{2/\omega}n^{1-2/\omega})$ rounds from Theorem~\ref{theorem:main} (or its simplified version in the introduction). Step 5 can be implemented in $O(n^{1-2/\omega})$ rounds, again from Theorem~\ref{theorem:main}. At Steps 2, 3 and 6 each node receives $n$ elements from the field~$\field$, so each of these three steps can be implemented in two rounds from Lemma \ref{lemma:Dolev}. The other steps (Steps~4, 7.1 and~7.3) do not require any communication. 
The total round complexity of the algorithm is thus 
$
O\left(p^{2/\omega}n^{1-2/\omega}\right)=O\left(n^{1-1/\omega}\right),
$ 
as claimed.
\begin{figure}[ht!]
\begin{center}
\fbox{
\begin{minipage}{15.5 cm} 
\begin{itemize}
\item[1.]
The matrices $A^{(a_1-1)p}$ and $A^{a_2-1}$ are computed for all $a_1,a_2\in[p]$ using the distributed algorithm of Theorem \ref{theorem:main}. At the end of this step node $\ell\in[n]$ has the whole $p\times n$ matrix $R_\ell$ and the whole $n\times p$ matrix $C_\ell$. 
\item[2.]
Node~$\ell\in[n]$ locally computes $U_\ell$, and sends $U_\ell[k_1,k_2]$ to each node $k=(k_1,k_2)\in[n]$. 
\item[3.]
Node $k=(k_1,k_2)\in[n]$, who received $U_\ell[k_1,k_2]$ for all $\ell\in[n]$ at the previous step, locally computes $s_k$ using Equation (\ref{eq:eqtrace}). Node $k$ then sends $s_k$ to all the nodes.
\item[4.]
Node $\ell\in[n]$, who received $\vec{s}$ at Step 3, locally constructs $S[\ell,\ast]$ and $S[\ast,\ell]$.
\item[5.]
The matrix $S^{-1}$ is computed using the algorithm of Proposition~\ref{prop:trig-inv}.
At the end of this step,
node $\ell\in[n]$ has $S^{-1}[\ell,\ast]$ and $S^{-1}[\ast,\ell]$. 
\item[6.]
Node $\ell\in[n]$ locally computes $c_\ell$ from $S^{-1}[\ell,\ast]$ and $\vec{s}$,  and sends $c_\ell$ to all nodes.
\item[7.]
The determinant of $A$ is $(-1)^nc_n$.
If $c_n=0$ the matrix $A$ is not invertible. Otherwise the nodes compute $A^{-1}$ as follows:
\begin{itemize}
\item[7.1]
Node $\ell\in[n]$ computes $E_{a_2}[\ell,\ast]$ for each $a_2\in[p]$ (this can be done locally since $\vec{c}$ and each row $A^{(a_1-1)p}[\ell,\ast]$ are known from Steps 6 and 1, respectively). 
\item[7.2]
The matrices $E_{a_2}A^{a_2-1}$ are computed for all $a_2\in[p]$ using the distributed algorithm of Theorem \ref{theorem:main} (since, besides $E_{a_2}[\ell,\ast]$ obtained at the previous step, each node $\ell\in[n]$ knows $A^{a_2-1}[\ast,\ell]$ from the result of the computation of Step 1). At the end of this step, node $\ell\in[n]$ has the $\ell$-th row and the $\ell$-th column of the matrix $E_{a_2}A^{a_2-1}$ for all $a_2\in[p]$. 
\item[7.3]
Node $\ell\in[n]$ computes locally $A^{-1}[\ell,\ast]$ and $A^{-1}[\ast,\ell]$ using Equation (\ref{eq:form}). 
\end{itemize}
\end{itemize}
\end{minipage}
}
\end{center}\vspace{-4mm}
\caption{Distributed algorithm for computing the determinant of an $n\times n$ matrix $A$ and computing~$A^{-1}$ if $\det(A)\neq 0$. Initially each node $\ell\in[n]$ has as input $A[\ell,\ast]$ and $A[\ast,\ell]$.}\label{fig:inverse}
\end{figure}
\end{proof}

\section{Randomized Algorithms for Algebraic Problems}
In this section we present $O(n^{1-2/\omega})$-round randomized algorithms for computing the determinant, the rank and for solving linear systems of equations in the congested clique model, and prove Theorem \ref{th:rand}. Our approach is based on Wiedemann's method \cite{Wiedemann86} and its parallel implementations \cite{Kaltofen+SPAA91,Kaltofen+FOCS92,Kaltofen+91}. 

Let $A$ be an $n\times n$ matrix over $\field$. The minimal polynomial of $A$, which we denote $\minpoly(A)$, is the monic polynomial $g$ over $\field$ of least degree such that $g(A)=0$. Let $v\in\field^{n\times 1}$ and $w\in \field^{1\times n}$ be any vectors. Consider the sequence $wA^0v,wAv,wA^2v,\ldots,wA^{n-1}v$ consisting of~$n$ elements of $\field$. This sequence is linearly generated over $\field$, and thus also admits a polynomial called the generating polynomial of the sequence, which we denote $\minpoly(A,v,w)$. We refer to \cite{Gathen+03} for the precise definition of the generating polynomial of such a linearly generated sequence and a description of efficient algorithms to compute it --- in this paper we will just need to know that such a polynomial exists. Wiedemann \cite{Wiedemann86} showed that with high probability $\minpoly(A,v,w)=\minpoly(A)$ when~$v$ and $w$ are chosen at random. We use the following characterization of this property proved by Kaltofen and Pan \cite{Kaltofen+FOCS92}.  

\begin{lemma}(\cite{Kaltofen+FOCS92})\label{lemma:kal}
Let $A$ be any $n\times n$ matrix over $\field$.
Let $v\in\field^{n\times 1}$ and $w\in \field^{1\times n}$ be two vectors in which each coordinate is chosen uniformly at random from $\field$. Then
\[
\Pr\big[\minpoly(A)=\minpoly(A,v,w)\big]\ge 1-\frac{2n}{|\field|}.
\]
\end{lemma}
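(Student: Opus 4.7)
The plan is to decompose the event into two parts and bound each by $n/|\field|$, then apply a union bound. Writing $f := \minpoly(A)$ and using that $\minpoly(A,v,w)$ always divides $\minpoly(A,v)$ which in turn divides $f$, it suffices to establish
\begin{align*}
\Pr_v[\minpoly(A,v) \neq f] &\le n/|\field|,\\
\Pr_{v,w}[\minpoly(A,v,w) \neq \minpoly(A,v) \mid \minpoly(A,v) = f] &\le n/|\field|.
\end{align*}

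For the first bound, I would factor $f = \prod_{i=1}^k p_i^{e_i}$ into distinct monic irreducibles. The maximal proper divisors of $f$ are exactly the $k$ polynomials $f/p_i$, so $\minpoly(A,v) \neq f$ iff $v$ lies in one of the kernels $\ker(f/p_i)(A)$. The key dimension bound is $\dim \ker(f/p_i)(A) \le n - \deg p_i$: since $p_i(A)$ annihilates the image of $(f/p_i)(A)$, that image is a nonzero module over the field $\field[x]/(p_i)$, so its $\field$-dimension is a positive multiple of $\deg p_i$. Rank-nullity then gives $\Pr_v[(f/p_i)(A)v=0] \le |\field|^{-\deg p_i} \le \deg(p_i)/|\field|$, and summing over $i$ with $\sum_i \deg p_i \le \deg f \le n$ yields the first bound.

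For the second bound, I would condition on $v$ with $\minpoly(A,v) = f$ and observe that the cyclic $\field[x]$-submodule $U := \mathrm{span}\{A^i v : i \ge 0\}$ is isomorphic to $\field[x]/(f)$ via $x^i \leftrightarrow A^i v$. Then $s_i := w A^i v = \bar w(x^i)$ for the linear form $\bar w := w|_U \in U^*$, and a short computation shows $\minpoly(A,v,w)$ is the smallest divisor $g \mid f$ such that the ideal generated by $g$ in $\field[x]/(f)$ lies in $\ker \bar w$. Hence $\minpoly(A,v,w) \neq f$ iff $\bar w$ annihilates $(f/p_i)$ in $\field[x]/(f)$ for some $i$, which confines $\bar w$ to a subspace of $U^* \cong \field^{\deg f}$ of codimension at least $\deg p_i$. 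Since $w$ is uniform and $\bar w$ is its image under a surjective linear map, the same calculation as in Step~1 gives the desired bound conditionally on~$v$.

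The main obstacle is the reduction in the second step to a clean dimension-counting problem over $U^*$; once the ideal-kernel description of $\minpoly(A,v,w)$ is established, the argument mirrors Step~1, and combining the two bounds by a union bound delivers the claimed $1 - 2n/|\field|$.
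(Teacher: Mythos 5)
The paper does not prove Lemma~\ref{lemma:kal}; it cites it directly from Kaltofen and Pan \cite{Kaltofen+FOCS92}, so there is no in-paper proof to compare against. Your argument is correct and is essentially the standard Wiedemann--Kaltofen--Pan proof: the chain $\minpoly(A,v,w)\mid\minpoly(A,v)\mid\minpoly(A)$, the decomposition over the distinct irreducible factors $p_i$ of $f=\minpoly(A)$, the dimension bound $\dim\ker(f/p_i)(A)\le n-\deg p_i$ via the $\field[x]/(p_i)$-vector-space structure on $\mathrm{im}((f/p_i)(A))$, and the dual argument for $w$ on the cyclic module $\field[x]/(f)$ all match the original. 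Two small remarks: the inequality $|\field|^{-\deg p_i}\le\deg(p_i)/|\field|$ is fine (it is $dq^{d-1}\ge 1$), though you could equally conclude with the cruder $\sum_i|\field|^{-\deg p_i}\le k/|\field|\le\deg f/|\field|\le n/|\field|$; and in Step~2 the subspace of $U^*$ annihilating $(f/p_i)$ has codimension exactly $\deg p_i$, so "at least" is safe but not needed. The surjectivity of $w\mapsto w|_U$ and hence uniformity of $\bar w$ over $U^*$ is correctly used and is the key point that makes the conditional bound work. The union bound over the two stages then gives $2n/|\field|$ as claimed.
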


In general the minimal polynomial of a matrix $A$ is not equal to its characteristic polynomial $\charpoly(A)$. Wiedemann \cite{Wiedemann86} nevertheless showed that the characteristic polynomial can be obtained from the minimal polynomial by preconditioning the matrix. Since it will be more convenient for our purpose to apply a diagonal preconditioner, we will use the following version shown by Chen et al.~\cite{Chen+02}.

\begin{lemma}(\cite{Chen+02})\label{lemma:wie}
Let $A$ be any $n\times n$ matrix over $\field$.
Let $D$ be an $n\times n$ diagonal matrix where each diagonal entry is chosen uniformly at random from $\field\setminus\{0\}$.
Then
\[
\Pr\big[\charpoly(DA)=\minpoly(DA)\big]\ge 1-\frac{n(n-1)}{2(|\field|-1)}.
\]
\end{lemma}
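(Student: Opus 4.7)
My plan is a Schwartz--Zippel argument applied to the determinant of a Krylov matrix. First I would recall the classical equivalence: $\charpoly(M) = \minpoly(M)$ holds iff $M$ is non-derogatory, iff $M$ admits a cyclic vector $v$, that is, a vector with $v, Mv, \ldots, M^{n-1}v$ linearly independent. Thus the lemma reduces to bounding the probability that $DA$ fails to admit a cyclic vector. I would fix a vector $v \in \field^n$ (chosen to depend on $A$ but not on $D$), regard $d_1,\ldots,d_n$ as indeterminates, and form the Krylov matrix
\[
K_v(d_1,\ldots,d_n) \;=\; \bigl[\, v \;\big|\; (DA)v \;\big|\; (DA)^2 v \;\big|\; \cdots \;\big|\; (DA)^{n-1}v \,\bigr].
\]
Whenever its determinant $\Delta_v(d_1,\ldots,d_n) \in \field[d_1,\ldots,d_n]$ is nonzero at the random specialization, $v$ is a cyclic vector of $DA$ and we are done.

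The central observation is that $(DA)^{j-1}v$ has each coordinate expressible as a sum of monomials of total degree exactly $j-1$ in the $d_i$'s, because each of the $j-1$ applications of $D$ contributes one factor $d_i$ per row. By Leibniz's expansion of the determinant, this yields $\deg \Delta_v \le 0 + 1 + \cdots + (n-1) = n(n-1)/2$. Applying the Schwartz--Zippel lemma over the set $\field \setminus \{0\}$ of size $|\field|-1$ for each indeterminate gives
\[
\Pr\bigl[\Delta_v(D)=0\bigr] \;\le\; \frac{n(n-1)}{2(|\field|-1)},
\]
which matches the target bound once we know $\Delta_v \not\equiv 0$.

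The hard part will be choosing $v$ so that $\Delta_v$ is not identically zero as a polynomial in the $d_i$'s. Equivalently, I must exhibit one specialization of the $d_i$'s to nonzero scalars (in $\field$ or an algebraic extension) at which $DA$ is non-derogatory, and then take $v$ to be a cyclic vector at that specialization. I would establish this by a Zariski-genericity argument on the torus of invertible diagonal matrices: the subset of $D$ for which $DA$ is non-derogatory is open, and a direct construction produces a nonempty witness except in highly degenerate cases, which are precisely the regimes where the right-hand side of the lemma is vacuous. Once such a $v$ is fixed, the degree count and Schwartz--Zippel together complete the proof.
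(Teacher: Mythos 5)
The paper does not prove this lemma at all; it is quoted verbatim from Chen et al.\ \cite{Chen+02}, so there is no internal proof to compare yours against. That said, your framework --- reduce to existence of a cyclic vector, form the Krylov matrix $K_v$, note that the $j$-th column is homogeneous of degree $j-1$ in the $d_i$'s so that $\deg\Delta_v\le\binom{n}{2}$, and apply Schwartz--Zippel over $\field\setminus\{0\}$ --- is sound and yields exactly the bound $1-\frac{n(n-1)}{2(|\field|-1)}$, so you are almost certainly on the same track as the cited source.

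The genuine gap is the step you defer to the end: exhibiting a vector $v$ with $\Delta_v\not\equiv 0$, equivalently showing that $DA$ is non-derogatory over the function field $\field(d_1,\dots,d_n)$. Your justification --- that the ``highly degenerate cases'' where no such $v$ exists ``are precisely the regimes where the right-hand side of the lemma is vacuous'' --- is false. Take $A=0$ with $n\ge 2$: then $DA=0$ for every $D$, so $\minpoly(DA)=\lambda\ne\lambda^n=\charpoly(DA)$ and the left side is $0$, yet the right side $1-\frac{n(n-1)}{2(|\field|-1)}$ is strictly positive as soon as $|\field|>1+\binom{n}{2}$. More generally, whenever $\rk(A)\le n-2$ the matrix $DA$ has a kernel of dimension $\ge 2$, hence $0$ is an eigenvalue of geometric multiplicity $\ge 2$ and $DA$ is derogatory for \emph{every} $D$. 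So the lemma as transcribed in this paper is simply false for such $A$; the citation has silently dropped a hypothesis (in \cite{Chen+02} the result is stated for nonsingular $A$, and the paper's downstream use of it is for a situation where singularity is detected separately). Under the correct nonsingularity hypothesis your plan can in principle be completed, but the claim that the generic $DA$ is non-derogatory for every nonsingular $A$ is itself a nontrivial lemma that you neither prove nor reduce to anything concrete, so as written the proof has a real hole at its load-bearing step.
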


Kaltofen and Saunders \cite{Kaltofen+91} showed that the rank of a matrix can be obtained, with high probability, from the degree of the minimal polynomial by using another preconditioning of the matrix. The following lemma follows from the combination of Theorem 1 and Lemma 2 in \cite{Kaltofen+91}.  

\begin{lemma}(\cite{Kaltofen+91})\label{lemma:rank}
Let $A$ be any $n\times n$ matrix over $\field$ such that $\rk(A)<n$.
Let 
\[
U=\left(
\begin{array}{ccccc}
1&u_2&u_3&\cdots&u_n\\
&1&u_2&\cdots&u_{n-1}\\
&&1&\ddots&\vdots\\
&&&\ddots&u_2\\
&&&&1
\end{array}
\right),\hspace{5mm}
V=\left(
\begin{array}{ccccc}
1&&&&\\
v_2&1&&&\\
v_3&v_2&1&&\\
\vdots&&\ddots&\ddots&\\
v_n&v_{n-1}&\cdots&v_2&1
\end{array}
\right)
\]
be two $n\times n$ unit (upper and lower, respectively) triangular random Toepliz matrices. Let $D$ be an $n\times n$ diagonal matrix where each diagonal entry is chosen uniformly at random from $\field$.
Then
\[
\Pr\big[\rk(A)=\deg(\minpoly(UAVD))-1\big]\ge 1-\frac{n(3n+1)}{2|\field|}.
\]
\end{lemma}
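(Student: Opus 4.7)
Let $r=\rk(A)$. The strategy is to show that the two random preconditioners play complementary roles: the Toeplitz matrices $U$ and $V$ push the rank structure into the upper-left corner of the product, while the diagonal matrix $D$ breaks any coincidences in the resulting spectrum, so that the minimal polynomial attains its maximal possible degree $r+1$. To combine the two cleanly one first observes the a priori upper bound $\deg(\minpoly(M))\le \rk(M)+1$ valid for any matrix $M$ of rank $<n$: indeed, since $\mathrm{Im}(M)$ is $M$-invariant of dimension $\rk(M)$, the polynomial $x\cdot \charpoly(M|_{\mathrm{Im}(M)})$ annihilates $M$. Thus we only need to prove the matching \emph{lower} bound for $M=UAVD$ with high probability.

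The first step establishes that, with probability at least $1-\binom{n}{2}/|\field|$ over the random entries $u_i,v_i$ of $U$ and $V$, the matrix $B=UAV$ has its leading $k\times k$ principal submatrix nonsingular for every $k\in[r]$. This is the standard ``generic rank profile'' property of Toeplitz preconditioning; the proof amounts to verifying that the product $\prod_{k=1}^r \det(B_{[k]})$, viewed as a polynomial in $u_2,\dots,u_n,v_2,\dots,v_n$, is not identically zero, and then invoking the Schwartz--Zippel lemma. Nonvanishing of that polynomial is proved by selecting, from among the $\binom{n}{r}^{\!2}$ maximal-rank minors of $A$, the lexicographically earliest one and checking that it survives as a monomial in a carefully chosen monomial order on $u_i,v_i$.

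Conditioned on this event, Schur complementation forces the factorization $B=PQ$ with $P=\bigl(\begin{smallmatrix}I_r\\ X\end{smallmatrix}\bigr)\in\field^{n\times r}$, $X=B_{21}B_{11}^{-1}$, and $Q=(B_{11}\ B_{12})\in\field^{r\times n}$, both of full rank $r$. Set $N=QDP\in\field^{r\times r}$. The classical identity linking $PQ$ and $QP$ gives that the nonzero spectrum of $BD=PQD$ equals that of $N$; more precisely, whenever $N$ is invertible one has $\mathrm{Im}(BD)=\mathrm{Im}(P)$ and $BD$ restricted to this invariant subspace is conjugate via $P$ to $N$. Hence $\minpoly(BD)=x\cdot\minpoly(N)$, and the target degree $r+1$ is achieved exactly when $N$ is invertible and nonderogatory.

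It remains to control $N$ probabilistically in the random diagonal $D$ by two applications of Schwartz--Zippel. First, $\det(N)$ is a polynomial of degree $r$ in $d_1,\dots,d_n$; it is not identically zero because specializing $d_{r+1}=\dots=d_n=0$ yields $\det(B_{11})\,d_1\cdots d_r\neq 0$, so $\Pr[\det(N)=0]\le r/|\field|\le n/|\field|$. Second, one shows that $N$ has $r$ distinct eigenvalues by verifying that the discriminant $\mathrm{disc}(\charpoly(N))$, a polynomial of degree at most $r(r-1)\le n(n-1)$ in $d_1,\dots,d_n$, is not identically zero; this is the main technical obstacle. To handle it we again specialize $D_2=0$, reducing to the statement that for the invertible matrix $B_{11}$ there exists a diagonal $D_1$ for which $B_{11}D_1$ has $r$ distinct eigenvalues. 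This last assertion can be proved by treating the $d_i$ as indeterminates and using the fact that the image of the map $D_1\mapsto B_{11}D_1$ (a coset of the diagonal torus inside $GL_r$) is not contained in the closed subvariety of matrices with repeated eigenvalues, e.g.\ by inspecting the leading monomial in the substitution $D_1=\mathrm{diag}(t,t^2,\dots,t^r)$. Combining the three failure probabilities yields the advertised bound
\[
\binom{n}{2}\frac{1}{|\field|}+\frac{n}{|\field|}+\frac{n(n-1)}{|\field|}=\frac{n(3n+1)}{2|\field|}.
\]
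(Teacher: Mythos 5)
The paper does not supply a proof of this lemma; it cites it to Kaltofen and Saunders, so there is no in-paper argument to compare against. Judged on its own, your skeleton is the right one and matches the idea behind Kaltofen--Saunders: the upper bound $\deg(\minpoly(M))\le\rk(M)+1$ via the invariant subspace $\mathrm{Im}(M)$, the generic-rank-profile property of the Toeplitz preconditioners, the factorization $B=PQ$ by Schur complementation, the reduction of $\minpoly(BD)$ to $x\cdot\minpoly(QDP)$ when $QDP$ is invertible, and the Schwartz--Zippel control of $\det(QDP)$ are all correct and well-organized. Your final arithmetic is also essentially on target (it evaluates to $n(3n-1)/(2|\field|)$ with the constants you wrote; using the sharper Toeplitz bound $n(n+1)/(2|\field|)$ in place of $\binom{n}{2}/|\field|$ recovers exactly $n(3n+1)/(2|\field|)$).

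The gap is the last reduction. What you need is that $N=QDP$ is \emph{nonderogatory}; you replace this by the strictly stronger condition that $N$ has $r$ \emph{distinct eigenvalues}, and control it via $\mathrm{disc}(\charpoly(N))$ as a polynomial in $d_1,\dots,d_n$. For Schwartz--Zippel to give anything you must show this discriminant is not the zero polynomial over $\field$ for every generic-rank-profile invertible $B_{11}$, and this is not established: the argument ``treat $d_i$ as indeterminates and inspect the leading monomial under $D_1=\mathrm{diag}(t,\dots,t^r)$'' is not a proof, and in small characteristic the assertion is genuinely in doubt because the universal discriminant can degenerate. (Already for a $3\times3$ companion-type $B_{11}D_1$ over a field of characteristic $3$ the characteristic polynomial has the shape $x^3-c$ whose discriminant is $-27c^2\equiv0$, so the matrix is \emph{never} separable, yet it \emph{is} nonderogatory whenever $c\neq0$; this shows that the separability route can collapse while the lemma remains true.) The fix is to argue directly about nonderogatoriness — for instance, via nonvanishing of a Krylov determinant $\det[v\,|\,Nv\,|\cdots|\,N^{r-1}v]$ or another cyclicity certificate — which is a polynomial condition in $d$ (and possibly an auxiliary $v$) that does not vanish identically in any characteristic, and which is what the Kaltofen--Saunders preconditioning theorems actually bound. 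Until that step is replaced, the proof is incomplete at what you yourself correctly identify as the main technical obstacle.
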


We are now ready to prove Theorem \ref{th:rand}.

\begin{proof}[Proof of Theorem \ref{th:rand}]
We first show how to compute efficiently, in the congested clique model, the sequence $\tilde A^0u,\tilde Au,\tilde A^2u,\ldots$,$\tilde A^{n-1}u$ given an arbitrary  matrix $\tilde A\in\field^{n\times n}$ and an arbitrary vector $u\in\field^{n\times 1}$. For convenience assume that $n$ is a power of two (otherwise we can simply add zero rows and columns to $\tilde A$ and zero entries to $u$), and write $n=2^k$ for some positive integer $k$.
For each $i\in\{0,\ldots,k\}$, define the $n\times n$ matrix 
\begin{align*}
M^{(i)}&=[u|\tilde Au|\cdots|\tilde A^{2^i-1}u|0|\cdots|0],
\end{align*}
obtained by concatenating the vectors $u,\ldots, \tilde A^{2^i-1}u$ and then adding $2^{k}-2^i$ zero columns. For each $i\in\{0,\ldots,k-1\}$, define the $n\times n$ matrix 
\begin{align*}
N^{(i)}&=[0|\cdots|0|u|\tilde Au|\cdots|\tilde A^{2^i-1}u|0|\cdots|0],
\end{align*}
obtained by concatenating the vectors $u,\ldots, \tilde A^{2^i-1}u$, adding $2^i$ zero columns on the left and $2^{k}-2^{i+1}$ zero columns on the right.
Observe that, for any $i\in\{0,\ldots,k-1\}$, the equality
\begin{align*}
M^{(i+1)}&=M^{(i)} + \tilde A^{2^i}N^{(i)}
\end{align*}
holds. Moreover, for any $i\in\{1,\ldots,k-1\}$, the matrix $\tilde A^{2^{i}}$ can be obtained by multiplying $\tilde A^{2^{i-1}}$ by itself. This enables us to compute the $n$ vectors $\tilde A^0u,\tilde Au,\tilde A^2u,\ldots,\tilde A^{n-1}u$ using only $2k-1$ matrix multiplications. We describe in Figure~\ref{fig:rand} the implementation of this approach in the congested clique model, which uses $O(kn^{1-2/\omega})=O(n^{1-2/\omega}\log n)$ rounds.  

\begin{figure}[ht!]
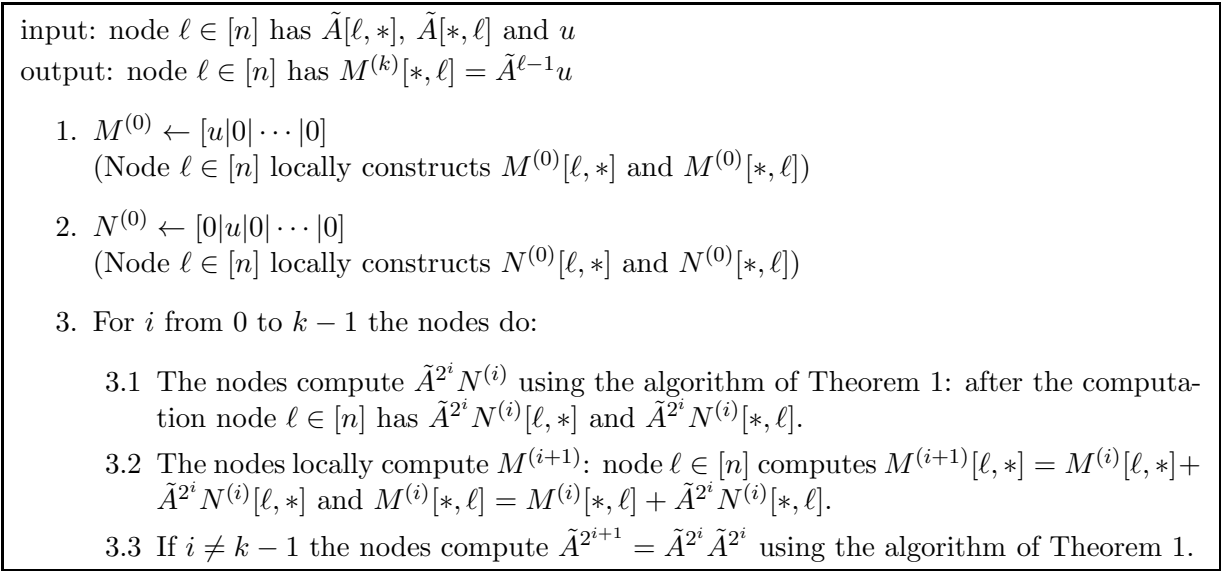

\begin{center}
\fbox{
\begin{minipage}{15.7 cm} 
input: node $\ell\in[n]$ has $\tilde A[\ell,\ast]$, $\tilde A[\ast,\ell]$ and $u$\\\vspace{-4mm}

\noindent 
output: node $\ell\in[n]$ has $M^{(k)}[\ast,\ell]=\tilde A^{\ell-1}u$
\begin{itemize}
\item[1.]
$M^{(0)}\gets [u|0|\cdots|0]$\\
(Node $\ell\in[n]$ locally constructs $M^{(0)}[\ell,\ast]$ and $M^{(0)}[\ast,\ell]$)
\item[2.]
$N^{(0)}\gets [0|u|0|\cdots|0]$\\
(Node $\ell\in[n]$ locally constructs $N^{(0)}[\ell,\ast]$ and $N^{(0)}[\ast,\ell]$)
\item[3.]
For $i$ from 0 to $k-1$ the nodes do:
\begin{itemize}
\item[3.1]
The nodes compute $\tilde A^{2^i}N^{(i)}$ using the algorithm of Theorem \ref{theorem:main}: after the computation node $\ell\in[n]$ has $\tilde A^{2^i}N^{(i)}[\ell,\ast]$ and $\tilde A^{2^i}N^{(i)}[\ast,\ell]$.
\item[3.2]
The nodes locally compute $M^{(i+1)}$: node $\ell\in[n]$ computes $M^{(i+1)}[\ell,\ast]=M^{(i)}[\ell,\ast]+\tilde A^{2^i}N^{(i)}[\ell,\ast]$ and $M^{(i)}[\ast,\ell]=M^{(i)}[\ast,\ell]+\tilde A^{2^i}N^{(i)}[\ast,\ell]$.
\item[3.3]
If $i\neq k-1$ the nodes compute $\tilde A^{2^{i+1}}=\tilde A^{2^{i}}\tilde A^{2^{i}}$ using the algorithm of Theorem~\ref{theorem:main}.
\end{itemize}
\end{itemize}
\end{minipage}
}
\end{center}\vspace{-4mm}
\caption{Distributed randomized algorithm for computing the sequence 
$\tilde A^0u,\tilde Au,\tilde A^2u,\ldots,\tilde A^{n-1}u$ given a matrix $\tilde A\in\field^{n\times n}$ and a vector $u\in\field^{n\times 1}$. Here $n$ is a power of two, written $n=2^k$.}\label{fig:rand}
\end{figure}

We now describe a $O(n^{1-2/\omega}\log n)$-round algorithm in the congested clique model that computes, with high probability, the minimal polynomial $\minpoly(A)$ of an arbitrary matrix $A\in\field^{n\times n}$. More precisely, the matrix~$A$ is initially distributed among the $n$ nodes of the network (node $\ell\in[n]$ receives as input $A[\ell,\ast]$ and $A[\ast,\ell]$) and at the end of the computation we would like a designated node of the network (say, node 1) to have the polynomial $\minpoly(A)$. The algorithm is as follows. First, a designated node (say, node 1 again) takes two random vectors $v,w$ as in Lemma \ref{lemma:kal}. This node then sends $v,w$ to all the nodes of the network in four rounds of communication using the scheme of Lemma~\ref{lemma:Dolev}. Then the nodes of the network apply the algorithm of Figure~\ref{fig:rand} with $\tilde A=A$ and $u=v$. After this, each node $\ell\in[n]$ owns $A^{\ell-1}v$, and can then compute locally the field element $wA^{\ell-1}v$. All nodes then sends their result to node~1 using one round of communication. Node 1 finally computes locally the polynomial $\minpoly(A,v,w)$. The complexity of this algorithm is clearly $O(n^{1-2/\omega}\log n)$ rounds. The correctness follows from Lemma \ref{lemma:kal}, which guarantees that $\minpoly(A)=\minpoly(A,v,w)$ with probability at least $1-2n/|\field|$. 

Our $O(n^{1-2/\omega}\log n)$-round algorithm for $\INV{n,\field}$ (i.e., for solving the linear system $Ax=b$ where $A$ is invertible) in the congested clique model is as follows.
The nodes of the network first use the algorithm of the previous paragraph, so that node 1 obtains with high probability $\minpoly(A)$. Let us write $\minpoly(A)$ as $m_0+m_1\lambda+\cdots +m_{n}\lambda^n$, with $m_n=1$. From the definition of the minimal polynomial, we have
\begin{equation}\label{eq:sys}
x=\frac{m_1b+m_2Ab+\cdots +m_{n}A^{n-1}b}{-m_0}.
\end{equation}
Node 1 then sends the two field elements $m_0$ and $m_{\ell}$ to node $\ell$, for each $\ell\in\{1,\ldots,n\}$, in two rounds. The nodes of the network use the $O(n^{1-2/\omega}\log n)$-round algorithm of Figure~\ref{fig:rand} with $u=b$ and $\tilde A=A$, so that each node $\ell\in[n]$ owns $A^{\ell-1}b$ at the end of the computation. Each node $\ell\in[n]$ then locally computes $-\frac{m_\ell}{m_0}A^{\ell-1}b$, and sends the $\ell'$-th coordinate of this vector to node $\ell'$, for each $\ell'\in[n]$. This can be done in two rounds. Node $\ell\in[n]$ then adds the $n$ elements he receives, which gives $x[\ell]$ from Equation (\ref{eq:sys}).

We now describe our $O(n^{1-2/\omega}\log n)$-algorithm for $\DET{n,\field}$. First, a designated node (say, node 1) takes a random diagonal matrix~$D$ as in Lemma \ref{lemma:wie}. This node then sends $D$ to all the nodes of the network using two rounds of communication (using the scheme of Lemma~\ref{lemma:Dolev}). Each node $\ell\in[n]$ of the network then constructs $DA[\ell,\ast]$ and $DA[\ast,\ell]$. The nodes of the network then apply the above algorithm computing the minimal polynomial, so that Node 1 obtains with high probability $\minpoly(DA)$. Let $m_0$ denote the constant term of $\minpoly(DA)$. Note that the determinant of $DA$ is $(-1)^n m_0$, from Lemma \ref{lemma:wie}, and thus the determinant of $A$ is 
\[
\frac{(-1)^n m_0}{\prod_{i=1}^n D[i,i]}.
\] 
Node 1 sends this value all the nodes of the network in one round of communication. 

Finally, we present our algorithm for $\Rank{n,\field}$. First, we can check with high probability whether $\rk(A)=n$ by computing $\det(A)$ using the algorithm described in the previous paragraph. Therefore we assume below that $\rk(A)<n$, and compute the rank as follows.
A designated node (say, node 1) takes three random matrices $U,V,D$ as in Lemma \ref{lemma:rank}, and sends the three matrices to all the nodes of the network in six rounds using the scheme of Lemma \ref{lemma:Dolev} (note that each matrix is described by at most $n$ coefficients). The nodes of the network then apply the $O(n^{1-2/\omega})$-round algorithm of Theorem \ref{theorem:main} three times to compute $UAVD$. They then use the $O(n^{1-2/\omega}\log n)$-round algorithm computing the minimal polynomial, so that Node 1 obtains $\minpoly(UAVD)$. Node 1 locally computes $\deg(\minpoly(UAVD))-1$ and sends this value to all the nodes of the network in one round. The correctness of this algorithm is guaranteed by Lemma \ref{lemma:rank}.
\end{proof}

\section{Applications to Graph-Theoretic Problems}\label{sec:appli}
In this section we consider applications of our results to graph-theoretic problems in the congested clique model. 

\subsection{The All Pair Shortest Paths problem}\label{sub:APSP}
The All-Pairs Shortest Paths problem (APSP) asks, given a weighted graph $G=(V,E)$, to compute the shortest path between $u$ and $v$ for all pairs of vertices $(u,v)\in V\times V$. For simplicity, but without significant loss of generality, we will assume that the weights are $O(\log n)$-bit integers. When the graph $G$ is undirected, the definition of the APSP requires the weights to be nonnegative (otherwise there would be negative cycles). When the graph $G$ is directed negative weights are allowed but it is implicitly required that the graph has no non-negative cycle. We say that the graph is unweighted if the only allowed weight is one. In this subsection the term ``adjacency matrix of $G$'' refers to the $|V|\times |V|$ matrix in which the entry in the $i$-th row and $j$-th column is the weight of the edge from the $i$-th node to the $j$-th node of the graph if these two nodes are connected and $\infty$ otherwise. 

Let us first describe very briefly the main results concerning the complexity of the APSP in the centralized setting. For undirected unweighted graphs, Seidel \cite{SeidelJCSS95} showed that the APSP reduces to computing the powers of the adjacency matrix (seen as a matrix over the integers) of the graph, and can thus be solved in $\tilde O(n^\omega)$ time. For all other cases, including directed graphs and undirected graphs with arbitrary weights, the standard algebraic way of solving the APSP is to compute the powers of the distance product of the adjacency matrix of the graph. This distance product of an $n\times m$ matrix $A$ by an $m\times n$ matrix $B$ can be trivially solved in time $O(smn^2)$, where $s$ denotes the number of bits needed to represent each entry of $A$ and $B$, which gives a $\tilde O(n^3)$-time algorithm for the APSP.
Despite much research (including recent exciting developments \cite{WilliamsSTOC14}), no significantly better algorithm is known for computing the distance product or solving the general APSP. Faster algorithms for computing the distance product can be nevertheless designed when the entries of $A$ and $B$ are small integers, i.e., the entries are in $\{-M,\ldots,-1,0,1,\ldots,M\}\cup\{\infty\}$ for some integer~$M$. As already mentioned in the proof of Theorem \ref{th:dist}, Alon et al. \cite{Alon+97} and Takaoka~\cite{Takaoka98} (for the square case) and then Zwick \cite{ZwickJACM02} (for the rectangular case) showed that $C$ can be recovered easily from the standard matrix product of the $n\times m$ matrix $A'$ by the $m\times n$ matrix $B'$ where
\begin{equation*}
A'[i,j]=\left\{\!\!
\begin{array}{cl}
(m+1)^{M-A[i,j]}&\textrm{if } A[i,j]\neq\infty,\\
0&\textrm{if } A[i,j]=\infty,
\end{array}
\right.
\hspace{5mm}
B'[j,i]=\left\{\!\!
\begin{array}{cl}
(m+1)^{M-B[j,i]}&\textrm{if } B[j,i]\neq\infty,\\
0&\textrm{if } B[j,i]=\infty,
\end{array}
\right.
\end{equation*}
for all $(i,j)\in [n]\times [m]$. This implies in particular that the distance product can be computed in $\tilde O(M n^{\omega(\log m/\log n)})$ time, and in particular in $\tilde O(M n^{\omega})$ time when $m=n$. Shoshan and Zwick~\cite{Shoshan+FOCS99} used this technique (for the square case) to obtain a $\tilde O(M n^{\omega})$-time algorithm for the APSP in undirected graphs with weights in $\{0,\ldots,M\}$. Zwick used this technique (for the rectangular case) to construct an algorithm for the directed case. He obtained in particular time complexity $O(n^{2.58})$, which has been improved to $O(n^{2.54})$ using the best known upper bound on the exponent of rectangular matrix multiplication \cite{LeGallFOCS12}, when $M$ is constant (and in particular for directed unweighted graphs).

Censor-Hillel et al.~\cite{Censor-Hillel+15} showed how to adapt Seidel's method \cite{SeidelJCSS95} to solve the APSP over undirected unweighted graphs in $\tilde O(n^{1-2/\omega})$ rounds.  They also observed that the centralized matrix algorithms for the distance product discussed above can be implemented in the congested clique model, solving $\DIS{n,n,M}$ in $O(\min\{n^{1/3}\log n,Mn^{1-2/\omega}\})$ rounds. While not explicitly stated in \cite{Censor-Hillel+15}, this result gives a $O(Mn^{1-2/\omega})$-round algorithm for the APSP in undirected graphs with weights in $\{0,1,\ldots,M\}$, by observing that the reduction given in \cite{Shoshan+FOCS99}  from such instances of APSP to the computation of the distance product can be implemented efficiently in the congested clique model. Our improved upper bound (Theorem \ref{th:APSPu} stated in the introduction) follows directly from our improved algorithm for the computation of the distance product in the congested clique model (Theorem \ref{th:dist}). 

We now consider the APSP in directed graphs with small integer weights and prove Theorem~\ref{th:APSPd}. Besides the new bounds of Theorem \ref{th:dist}, we will also use another upper bound on the round complexity of $\DIS{n,m,M}$, which is better for large values of $M$. This is the bound obtained from Theorem \ref{theorem:main} for the choice $\omega(\ell)=2+\ell$ corresponding to the implementation of the trivial matrix multiplication algorithm (which works over any semiring, see Section \ref{sec:prelim}) in the congested clique model. We state this bound in the following proposition.

\begin{proposition}\label{prop:dist}
The round complexity of $\DIS{n,m,M}$ is
\[
\left\{
\begin{array}{ll}
O(\log M)&\textrm{ if }\:\: 0\le m\le \sqrt{n},\\
O(m^{2/3}n^{-1/3}\log M) &\textrm{ if }\:\:m\ge \sqrt{n}.
\end{array}
\right.
\]
\end{proposition}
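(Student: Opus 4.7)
The plan is to apply Theorem \ref{theorem:main} directly over the min-plus (tropical) semiring, using the trivial matrix multiplication algorithm described in Section \ref{sec:prelim}. Recall that this algorithm has coefficients $\alpha_{ij\mu},\beta_{ij\mu},\lambda_{ij\mu}$ all in $\{0,1\}$, so the distributed protocol of Propositions \ref{prop:large-l} and \ref{prop:medium-l} never uses subtraction or any other field-specific operation; at each step it only routes entries and takes sums of the form $\sum_{i,j} A[i,j]$ or block products $S^{(\mu)}T^{(\mu)}$. The same protocol therefore computes $A\ast B$ once scalar addition and multiplication are reinterpreted as $\min$ and integer $+$, respectively. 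Concretely, this amounts to instantiating Theorem \ref{theorem:main} with the trivial exponent $\omega(\gamma)=2+\gamma$.

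Next I would solve the complexity expression of Theorem \ref{theorem:main} with $k=1$ and this choice of $\omega(\gamma)$. Writing $\beta=\log m/\log n$, Equation (\ref{eq:maincond}) reduces to $\gamma=1+(\beta-1)(2+\gamma)$, whose nonnegative solution for $\sqrt{n}\le m<n^2$ is $\gamma=(2\beta-1)/(2-\beta)$. Substituting back gives $\omega(\gamma)=3/(2-\beta)$, hence $1-2/\omega(\gamma)=(2\beta-1)/3$, so the protocol transmits
\[
n^{1-2/\omega(\gamma)}=n^{(2\beta-1)/3}=m^{2/3}n^{-1/3}
\]
semiring elements per node. For $0\le m\le\sqrt{n}$ the first case of Theorem \ref{theorem:main} directly yields $O(1)$ semiring elements per node.

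The final step is to convert ``semiring elements per node'' into actual rounds in the congested clique model. Every semiring quantity that arises during the execution is either an input entry, a pairwise sum $A[i,s]+B[s,j]$, or a minimum of such sums, hence an integer in $\{-2M,\ldots,2M\}\cup\{\infty\}$, encodable in $O(\log M)$ bits. Transmitting one such element therefore costs $O(\lceil \log M/\log n\rceil)\le O(\log M)$ standard $O(\log n)$-bit messages; multiplying the two previous bounds by this factor and invoking Lemma \ref{lemma:Dolev} produces the claimed round complexities $O(\log M)$ for $m\le\sqrt{n}$ and $O(m^{2/3}n^{-1/3}\log M)$ for $m\ge\sqrt{n}$.

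The one conceptual point that needs care, and the main obstacle of the argument, is precisely that the protocol of Theorem \ref{theorem:main} transfers from a field to a semiring for this specific algebraic scheme; one must verify that no intermediate linear combination in Propositions \ref{prop:large-l} and \ref{prop:medium-l} secretly requires additive inverses. With the $\{0,1\}$ coefficients of the trivial algorithm this verification is immediate, since each scalar operation is either discarding a term ($0\cdot x$) or keeping it unchanged ($1\cdot x$); everything else is combinatorial routing, which is oblivious to the underlying algebraic structure.
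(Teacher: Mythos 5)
Your approach is exactly the paper's: the paper proves this proposition in a single sentence by instantiating Theorem~\ref{theorem:main} with the trivial algorithm ($\omega(\gamma)=2+\gamma$), which as noted in Section~\ref{sec:prelim} works over any semiring, and your arithmetic solving Equation~(\ref{eq:maincond}) for $k=1$ and extracting the exponent $m^{2/3}n^{-1/3}$ is correct. Your verification that the $\{0,1\}$-coefficient algorithm of Propositions~\ref{prop:large-l} and~\ref{prop:medium-l} transfers to the min-plus semiring, and the accounting for entries of bit-length $O(\log M)$ being routed in $O(\log n)$-bit messages, are exactly the details the paper leaves implicit.
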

We are now ready to prove Theorem \ref{th:APSPd}.
\begin{proof}[Proof of Theorem \ref{th:APSPd}]
Let $A$ be the adjacency matrix of the graph $G=(V,E)$. The centralized algorithm by Zwick~\cite{ZwickJACM02} works as follows. The algorithm performs $\lceil\log_{3/2}n\rceil$ iteration, while maintaining an $n\times n$ matrix $F$ initially set to $F=A$. In the $k$-th iteration, it sets $s=(3/2)^k$, and takes a set $S\subseteq V$ of $O((n\log n)/s)$ vertices chosen uniformly at random from $V$. The algorithm then construct the submatrix of size $|S|\times n$ of $F$, denoted $F[S,\ast]$, consisting of the rows corresponding to the vertices in~$S$, and the submatrix of size $n\times |S|$ of $F$, denoted $F[\ast,S]$, consisting of the columns corresponding to vertices in $S$. It then puts a cap of $sM$ on the absolute values of the entries of $F[S,\ast]$ and $F[\ast,S]$. The last step of the iteration is to compute the distance product $F'=F[\ast,S]\dist F[S,\ast]$ and, for each $(i,j)\in[n\times n]$, replace the $(i,j)$ entry of $F$ by $F'[i,j]$ if $F'[i,j]<F[i,j]$. It can be shown that after the last iteration the entry $F[i,j]$ is the length of the shortest path between the $i$-th node and the $j$-th node of the graph, for all $(i,j)\in[n\times n]$, if the graph has no negative cycle.

This centralized algorithm can be implemented easily in the congested clique model. The only part that requires communication between the nodes is the computation of the distance product: at step $i$ the nodes need to compute the distance product of a $n\times m$ matrix by a $m\times n$ matrix with entries of absolute values bounded by $\ceil{sM}$ with $m=O((n\log n)/s)$ and $s=(3/2)^i$. We have two strategies to compute the distance product, the algorithm of Proposition \ref{prop:dist} and the algorithm of Theorem \ref{th:dist}. Since there are only $O(\log n)$ iterations, the total round complexity is
\begin{equation}\label{eq:opt}
\tilde O\left(
\min\left\{1+(n/s)^{2/3}n^{-1/3},
s+s^{2/\omega(\gamma)}n^{1-2/\omega(\gamma)}
\right\}\right),
\end{equation}
for the value of $s$ the maximizes this expression,
where $\gamma$ denotes the solution of the equation $
\big(1-\frac{\log s}{\log n}\big)\gamma=1-\frac{\log s}{\log n}+\big(\frac{\log (n/s)}{\log n}-1\big)\omega(\gamma).
$
As in the analysis for the centralized setting given in~\cite{ZwickJACM02}, the left part of (\ref{eq:opt}) is a decreasing function of $s$, while the right part is an increasing function of $s$. Using the best known upper bound on $\omega(\gamma)$ from \cite{LeGallFOCS12} (see also Figure \ref{fig:opt}), we can upper bound the total round complexity by $O(n^{0.2096})$.

Note that the above algorithm only computes the lengths of the shortest paths. As in Zwick's centralized algorithm, the shortest paths can be constructed by using exactly the same strategy, but constructing a matrix of witnesses whenever a distance product is computed (as described in Section 3 of \cite{ZwickJACM02}), which can be done with the same round complexity.  
\end{proof}
While a result similar to Theorem \ref{th:APSPd} can be obtained when $M$ is not a constant, by using exactly the same algorithm but keeping the value $M$ in Equation (\ref{eq:opt}), in this case it is complicated to express the result of the numerical optimization in a closed form, so we omit this generalization. An interesting open question is whether the algorithm of Theorem~\ref{th:APSPd} can be derandomized. While Zwick showed that this 
can be done in the centralized setting by introducing the concept of bridging sets, it does not seem that the algorithm proposed in \cite{ZwickJACM02} for constructing bridging sets can be implemented efficiently in the congested clique model.

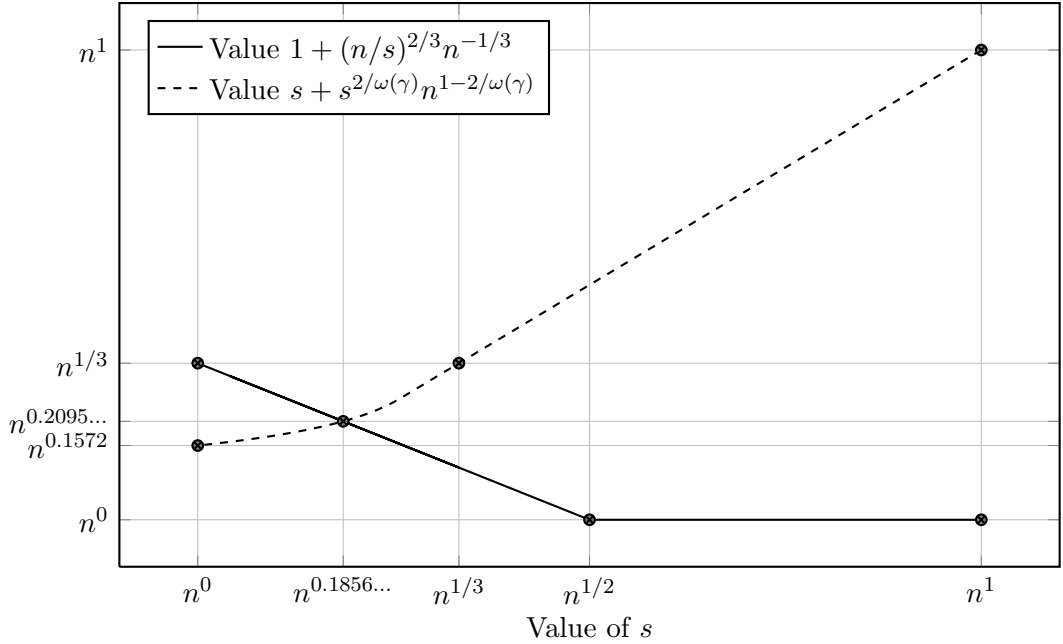
\begin{figure}[ht]
\centering
\begin{tikzpicture}
\begin{axis}[
legend cell align=left,
width=12.5cm, 
height=7.5cm,
xmin=-0.1, xmax=1.1,
thick,
 scale only axis,
xmajorgrids,
ymajorgrids,
 xtick={0,0.185677,0.33333,0.5,1},
 xticklabels={$n^0$,$n^{0.1856\ldots}$,$n^{1/3}$,$n^{1/2}$,$n^1$},
 ytick={0,0.15806,0.209548,0.33333,1},
 yticklabels={$n^0$,$n^{0.1572}$,$n^{0.2095\ldots}$,$n^{1/3}$,$n^1$,$n^2$},
xlabel={Value of $s$},
ylabel={},
legend pos=north west]

\addplot [solid, every mark/.append style={solid, fill=gray}] coordinates {
(0,0.33333)
(0.333333333,0.111111111)
(0.258440798,0.161039468)
(0.256499569,0.162333621)
(0.253696223,0.164202518)
(0.25084811,0.16610126)
(0.247955215,0.168029857)
(0.24501766,0.169988227)
(0.229693646,0.180204236)
(0.213418018,0.191054655)
(0.196333974,0.202444018)
(0.185677315,0.209548457)
(0.178566964,0.21428869)
(0.160219757,0.226520162)
(0.141375527,0.239082982)
(0.12210216,0.251931894)
(0.102455363,0.265029758)
(0.08248137,0.278345754)
(0.062219236,0.291853843)
(0.041700583,0.305532945)
(0.5,0)
(1,0)
};

\addplot [dashed, every mark/.append style={solid, fill=gray}] coordinates {
(1,1)
(0.333333333,0.333333333)
(0.258440798,0.258440798)
(0.256499569,0.256522988)
(0.253696223,0.253834637)
(0.25084811,0.251199672)
(0.247955215,0.248620561)
(0.24501766,0.246099507)
(0.229693646,0.234354513)
(0.213418018,0.223934481)
(0.196333974,0.214664106)
(0.185677315,0.209547403)
(0.178566964,0.206369047)
(0.160219757,0.198901213)
(0.141375527,0.192139844)
(0.12210216,0.185985603)
(0.102455363,0.180357094)
(0.08248137,0.175186305)
(0.062219236,0.170410191)
(0.041700583,0.165988337)
(0.020952887,0.161884519)
(0.0,0.158063833)
};

\addplot [only marks, every mark/.append style={solid, fill=gray}, mark=otimes*] coordinates {
(0,0.1580)
(0,0.333333)
(0.185677,0.209548)
(0.5,0)
(0.333333,0.333333)
(1,0)
(1,1)
(2,1)
(3,2)
};
\legend{Value $1+(n/s)^{2/3}n^{-1/3}$, Value $s+s^{2/\omega(\gamma)}n^{1-2/\omega(\gamma)}$}
\end{axis}
\end{tikzpicture}
\caption{\label{fig:opt}Values of the two parts of Equation (\ref{eq:opt}).}
\end{figure}

\subsection{Diameter computation}\label{sub:diam}
Since computing the diameter of a graph trivially reduces to the APSP problem, we immediately obtain the following result.
\begin{corollary}\label{cor:diameter}
In the congested clique model, the deterministic round complexity of diameter computation in an undirected graph of $n$ vertices with integer weights in $\{0,\ldots,M\}$, where~$M$ is an integer such that $M\le n$, is $\tilde O(M^{2/\omega}n^{1-2/\omega})$.
\end{corollary}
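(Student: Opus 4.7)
The plan is to reduce diameter computation directly to the all-pairs shortest paths problem, since the diameter of a weighted graph is simply the maximum over all pairs $(u,v)\in V\times V$ of the shortest path distance from $u$ to $v$. Thus, after running the APSP algorithm of Theorem \ref{th:APSPu}, I would only need to extract this maximum, and the overhead of doing so should be negligible compared to the APSP cost.

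More concretely, I would first invoke the deterministic $\tilde O(M^{2/\omega}n^{1-2/\omega})$-round algorithm from Theorem \ref{th:APSPu}, at the end of which each node $\ell\in[n]$ holds the $\ell$-th row (and the $\ell$-th column) of the distance matrix $D$, where $D[i,j]$ is the shortest path distance between the $i$-th and $j$-th vertices of $G$. Each node $\ell$ then locally computes $e_\ell = \max_{j\in[n]} D[\ell,j]$, the eccentricity of its corresponding vertex, with no communication.

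Next, the nodes need to aggregate the $n$ values $e_1,\ldots,e_n$ into a single global maximum, which is exactly the diameter. This can be done in a constant number of rounds: each node $\ell$ sends $e_\ell$ to node $1$ (requiring node $1$ to receive only $n$ messages total, which fits within one round by Lemma \ref{lemma:Dolev}); node $1$ then locally computes $\max_\ell e_\ell$ and broadcasts this single value to all nodes in a second round. Since each $e_\ell$ is an $O(\log n)$-bit integer (as $M\le n$), this aggregation requires no more than $O(1)$ rounds.

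The main consideration, rather than an obstacle, is simply observing that these postprocessing steps add only a constant number of rounds on top of the APSP computation, so the total round complexity remains $\tilde O(M^{2/\omega}n^{1-2/\omega})$, matching the claim. No additional ideas beyond Theorem \ref{th:APSPu} and Lemma \ref{lemma:Dolev} are required.
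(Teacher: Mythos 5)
Your proposal is correct and takes essentially the same approach as the paper: the paper simply notes that diameter computation trivially reduces to APSP and cites Theorem~\ref{th:APSPu}, exactly as you do, with the aggregation step being immediate via Lemma~\ref{lemma:Dolev}.
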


The same reduction can be used to obtain a randomized algorithm for computing the diameter of directed graphs with constant integer weights in $O(n^{0.2096})$ rounds, via Theorem \ref{th:APSPd}. This round complexity can nevertheless be improved. In the centralized setting it is known that over directed graphs with integer weights in $\{-M,\ldots,M\}$, but without cycles of negative weights, the diameter can be computed in $O(Mn^\omega)$ time, i.e., faster than the best known centralized algorithm for the corresponding APSP problem. This is a folklore result based on the reductions to the distance product developed in \cite{Shoshan+FOCS99,ZwickJACM02} (see also \cite{Cygan+FOCS12}). A close inspection of this approach shows that it can be implemented efficiently in the congested clique model, and Theorem \ref{th:dist} thus implies that the diameter can be computed in $\tilde O(M^{2/\omega}n^{1-2/\omega})$ rounds in direct graphs (without cycles of negative weights) with integer weights in $\{-M,\ldots,M\}$ as well. 

\subsection{Maximum matchings}\label{sub:match}
Let $G=(V,E)$ be a simple graph (i.e., an undirected and unweighted graph with no loops or multiple edges), and write $V=\{v_1,\ldots,v_n\}$. The Tutte matrix of $G$ is the $n\times n$ symbolic matrix $A$ such that 
\[
A[i,j]=\left\{
\begin{array}{cl}
x_{ij}&\textrm{ if } i> j \textrm{ and  } \{v_i,v_j\}\in E,\\
-x_{ij}&\textrm{ if } i< j \textrm{ and  } \{v_i,v_j\}\in E,\\
0&\textrm{ otherwise},
\end{array}
\right.
\]
for any $(i,j)\in[n]\times [n]$. Let $\nu(G)$ denote the number of edges in a maximum matching of $G$. Lov\'asz~\cite{Lovasz79} showed that $\rk(A)=2\nu(G)$. Rabin and Vazirani \cite{Rabin+89} observed that this equality remains true over any field. As mentioned in \cite{Rabin+89}, this gives a simple randomized algorithm for computing $\rk(A)$, and thus $\nu(G)$, based on Schwartz-Zippel lemma \cite{Schwartz80, Zippel79}: take a prime $p=\Theta(n^4)$, substitute the variables $x_{ij}$ by random elements from the finite field $\Int_p$  to obtain a matrix $\hat A$ over $\Int_p$, and compute the rank of $\hat A$ over~$\Int_p$ (which will be equal to $\rk(A)$ with high probability). This approach can clearly be implemented efficiently in the congested clique mode using the rank algorithm of Theorem \ref{th:rand}, giving the following result.

\begin{theorem}\label{th:mm}
The randomized round complexity of computing the number of edges in a maximum matching of a simple graph is $O(n^{1-2/\omega}\log n)$.
\end{theorem}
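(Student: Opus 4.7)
The plan is to reduce the problem to a single instance of $\Rank{n,\field}$ over a sufficiently large finite field and invoke Theorem \ref{th:rand}. The input distribution (node $\ell$ owns the $\ell$-th row and $\ell$-th column of the adjacency matrix of $G$) already tells each node exactly which edges are incident to its corresponding vertex, so the only obstacle is to materialize a concrete numerical instantiation of the Tutte matrix $A$ that is consistent between the two endpoints of every edge.

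I would first fix a prime $p$ with $p = \Theta(n^4)$ (which satisfies both the requirement $|\field| = \Omega(n^2 \log n)$ of Theorem \ref{th:rand} and the Schwartz--Zippel bound for a symbolic determinant of degree at most $n$) and set $\field = \Int_p$. Each pair $\{v_i, v_j\} \in E$ with $i < j$ corresponds to a variable $x_{ij}$; to pick a random value consistently, I let node $i$ (the smaller-indexed endpoint) draw a uniformly random $r_{ij} \in \field$ for every neighbor $v_j$ with $j > i$ and send $r_{ij}$ to node $j$. Each node sends at most $n-1$ messages and receives at most $n-1$ messages, so Lemma \ref{lemma:Dolev} delivers all of them in two rounds. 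After this exchange, node $i$ knows $r_{ij}$ for every neighbor $v_j$ and can therefore locally construct both $\hat A[i,\ast]$ and $\hat A[\ast,i]$, where $\hat A$ is the numerical matrix obtained from the Tutte matrix by the substitution $x_{ij} \mapsto r_{ij}$.

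Next I run the randomized rank algorithm of Theorem \ref{th:rand} on $\hat A$, which returns $\rk(\hat A)$ to every node in $O(n^{1-2/\omega} \log n)$ rounds. By Lov\'asz~\cite{Lovasz79}, the symbolic rank of the Tutte matrix equals $2\nu(G)$, and as observed in \cite{Rabin+89} this identity holds over any field, so $\rk(A) = 2\nu(G)$ as well. Since the entries of $\hat A$ are obtained by evaluating polynomials of degree one in the $x_{ij}$'s, every $k \times k$ minor of $\hat A$ is a polynomial of degree at most $n$ in these variables, and by Schwartz--Zippel~\cite{Schwartz80,Zippel79} each non-identically-zero minor remains nonzero after substitution with probability at least $1 - n/p = 1 - O(n^{-3})$. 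A union bound over a witnessing family of minors (one showing the rank is at least $2\nu(G)$) then yields $\rk(\hat A) = \rk(A) = 2\nu(G)$ with high probability, so every node can divide the result by two and output $\nu(G)$.

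The round complexity is dominated by the call to the rank subroutine, giving $O(n^{1-2/\omega} \log n)$ rounds in total. The only part that required any care was coordinating the shared randomness for the Tutte matrix across the two endpoints of each edge, but this is dispatched in $O(1)$ rounds by the $\min$-indexed-endpoint convention above; everything else is either local computation or a black-box invocation of Theorem \ref{th:rand}.
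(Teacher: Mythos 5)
Your proof is correct and follows essentially the same approach as the paper: substitute random elements of $\Int_p$ with $p = \Theta(n^4)$ into the Tutte matrix, invoke the randomized rank algorithm of Theorem~\ref{th:rand}, and conclude via Lov\'asz/Rabin--Vazirani and Schwartz--Zippel that $\rk(\hat A) = 2\nu(G)$ with high probability. The only extra content you supply is the explicit $O(1)$-round coordination of the random substitutions between the two endpoints of each edge, which the paper leaves implicit; also note that a single witnessing $2\nu(G)\times 2\nu(G)$ minor suffices (no union bound over a family is needed, since specialization can only decrease rank).
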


Suppose that the graph $G$ has a perfect matching (i.e., $n$ is even and $\nu(G)=n/2)$. We say that an edge of $G$ is allowed if it is contained in a least one perfect matching. Rabin and Vazirani \cite{Rabin+89} further showed that with high probability the following property holds for all edges $\{i,j\}$ of $G$: 
the edge $\{i,j\}$ is allowed if and only if $\hat A^{-1}[i,j]\neq 0$. The set of allowed edges can thus be obtained from the inverse of the matrix $\hat A$, which can be done in $O(n^{1-1/\omega})$ rounds in the congested clique model using the algorithm of Theorem \ref{theorem:det-inv}. 

An interesting open question is whether finding a perfect matching can be done with the same complexity in the congested clique model. While the best centralized algorithms can find a maximum matching with essentially the same complexity as matrix multiplication \cite{Mucha+FOCS04}, they are based on sequential variants of the Gaussian decomposition (e.g., computation of the LUP decomposition) that do not appear to be implementable in parallel.

\subsection{Computing the Gallai-Edmonds decomposition of a graph}\label{sub:GE}
Let $G=(V,E)$ be a simple graph. We say that a vertex $v\in V$ is critical if it appears in at least one maximum matching, otherwise we say that $v$ is non-critical.
Let $D(G)\subseteq V$ denote the set of non-critical vertices, $K(G)$ be the set of vertices in $V\setminus D(G)$ that are adjacent to vertices in $D(G)$, and define $C(G)=V\setminus(D(G)\cap K(G))$. Gallai~\cite{Gallai64} and Edmonds \cite{Edmonds65} showed that the triple $(D(G),K(G),C(G))$, called the Gallai-Edmonds decomposition of the graph, gives fundamental information about the structure of the graph, and in particular about its matchings (see, e.g.,~\cite{Lovasz+09} for a detailed presentation of this theorem). 
Cheriyan \cite{CheriyanSICOMP97} presented efficient algorithms for computing the Gallai-Edmonds decomposition. We adapt this algorithm to the congested clique model to obtain the following result.
\begin{theorem}\label{th:GE}
The randomized round complexity of computing the Gallai-Edmonds decomposition of a simple graph is $O(n^{1-1/\omega})$.
\end{theorem}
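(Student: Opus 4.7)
The plan is to adapt Cheriyan's randomized $\tilde{O}(M(n))$ algorithm \cite{CheriyanSICOMP97} for the Gallai--Edmonds decomposition, in which $M(n)$ is the cost of matrix multiplication. Cheriyan's algorithm uses only a constant number of calls to matrix inversion, rank computation, and matrix multiplication on $n\times n$ matrices derived from the (randomized) Tutte matrix, together with purely local post-processing. Since inversion can be performed in $O(n^{1-1/\omega})$ rounds by Theorem \ref{theorem:det-inv}, and rank and multiplication are at least as fast by Theorem \ref{th:rand} and Theorem \ref{theorem:main}, composing a constant number of these primitives preserves the $O(n^{1-1/\omega})$ bound, with inversion being the bottleneck.

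First I would build the randomized Tutte matrix $\hat{A}$ by substituting each indeterminate $x_{ij}$ (for $\{v_i,v_j\}\in E$) by a uniform element of a prime field $\field$ of order $\Theta(n^c)$ for a constant $c$ large enough to make the Schwartz--Zippel failure probability $1/\mathrm{poly}(n)$, while still meeting the hypotheses of Theorems~\ref{theorem:det-inv} and~\ref{th:rand}. Sampling is distributed: for each edge $\{v_i,v_j\}$ with $i<j$, node $i$ draws $\hat{x}_{ij}$ and sends $\hat{x}_{ij}$ to node $j$. Each node is source and destination of at most $n$ such messages, so Lemma~\ref{lemma:Dolev} delivers all of them in $O(1)$ rounds. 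Afterwards node $\ell$ locally holds $\hat{A}[\ell,\ast]$ and $\hat{A}[\ast,\ell]$, since these rows and columns are supported only on the edges incident to $v_\ell$.

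Next I would execute the algebraic core of Cheriyan's algorithm. Cheriyan characterizes $D(G)$ by a condition on $\hat{A}$ that, after the standard trick of enlarging $\hat{A}$ by a low-rank random perturbation into an invertible matrix $\tilde{A}$, reads off from entries of $\tilde{A}^{-1}$: concretely, vertex $v_\ell$ lies in $D(G)$ iff a specific scalar locally computable from $\tilde{A}^{-1}[\ell,\ast]$ and $\tilde{A}^{-1}[\ast,\ell]$ vanishes (this is the algebraic counterpart of the rank condition $\rk(\hat{A}_{V\setminus\{v_\ell\}})=\rk(\hat{A})$). I would invoke Theorem \ref{theorem:det-inv} a constant number of times to obtain the relevant inverses in $O(n^{1-1/\omega})$ rounds each, and Theorem~\ref{theorem:main} for any intermediate products; after this, each node $\ell$ locally decides whether $v_\ell\in D(G)$. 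Every node then broadcasts this single-bit classification, which takes $O(1)$ rounds via Lemma~\ref{lemma:Dolev}, so that every node knows $D(G)$ in its entirety. Node $\ell$ then declares $v_\ell\in K(G)$ iff $v_\ell\notin D(G)$ and $v_\ell$ has a neighbor in $D(G)$ (information already available from its row of the adjacency matrix), and $v_\ell\in C(G)$ otherwise.

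The main obstacle is to verify that Cheriyan's characterization of $D(G)$ is genuinely \emph{row/column-local} in the inverses produced by Theorem \ref{theorem:det-inv}: that is, node $\ell$'s classification must be computable from its own row and column of a constant number of $n\times n$ inverses, with no further communication beyond the broadcasts handled by Lemma~\ref{lemma:Dolev}. Once this locality is made explicit (essentially by tracking Cheriyan's algebraic identities through the block structure of $\tilde{A}^{-1}$), the entire procedure fits into $O(1)$ invocations of the inversion subroutine plus $O(1)$ rounds of data movement, yielding the claimed $O(n^{1-1/\omega})$ round complexity.
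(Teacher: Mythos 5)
You correctly set up the randomized Tutte matrix, handle the distributed sampling via Lemma~\ref{lemma:Dolev}, and correctly observe that matrix inversion (Theorem~\ref{theorem:det-inv}) will be the bottleneck at $O(n^{1-1/\omega})$ rounds. But the algebraic core of your argument is missing. You assert that after enlarging $\hat{A}$ by an unspecified low-rank random perturbation into an invertible $\tilde{A}$, vertex $v_\ell$ lies in $D(G)$ iff ``a specific scalar locally computable from $\tilde{A}^{-1}[\ell,\ast]$ and $\tilde{A}^{-1}[\ast,\ell]$ vanishes,'' and you yourself flag verifying this row/column-locality as ``the main obstacle.'' You never specify the perturbation, never state the scalar, and never argue why such a local criterion exists. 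As written the proposal identifies the hard step but does not close it, and in fact this is not the route Cheriyan's algorithm, as implemented in the paper, actually takes.

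The paper uses Cheriyan's characterization that $v_i$ is non-critical iff
\[
\rk\begin{bmatrix}\hat{A}\\ e_i\end{bmatrix} > \rk(\hat{A}),
\]
i.e.\ iff $e_i$ does not lie in the row space of $\hat{A}$, i.e.\ iff $e_iM\neq 0$ for any matrix $M$ whose columns span the right null space of $\hat{A}$. The test then reduces to checking which rows of $M$ are nonzero, which needs no ``row/column-local inverse'' at all: once $M$ is known and suitably distributed (two rounds via Lemma~\ref{lemma:Dolev}, since $M$ has at most $n^2$ entries), each node $\ell$ inspects its own row $M[\ell,\ast]$. The genuinely nontrivial step you are missing is the efficient construction of such an $M$. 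The paper preconditions $N=U\hat{A}V$ with random unit triangular Toeplitz matrices (as in Lemma~\ref{lemma:rank}), determines $r=\rk(\hat{A})$ via the randomized rank algorithm of Theorem~\ref{th:rand}, and then uses the fact (from \cite{Kaltofen+SPAA91,Kaltofen+91}) that with high probability the leading $r\times r$ block $N_{11}$ is nonsingular and
\[
M=V
\begin{pmatrix}
I_r & -N_{11}^{-1}N_{12}\\
0 & I_{n-r}
\end{pmatrix}
\begin{pmatrix}
0\\ I_{n-r}
\end{pmatrix}
\]
is a basis for the right null space of $\hat{A}$. This costs one application of Theorem~\ref{theorem:det-inv} (to invert $N_{11}$, padded to size $n\times n$) plus a constant number of matrix multiplications via Theorem~\ref{theorem:main}, giving $O(n^{1-1/\omega})$ rounds overall. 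That explicit null-space construction is precisely the content your sketch gestures at but does not supply.
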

\begin{proof}
We will show how all the nodes of the network can obtain the set of non-critical vertices $D(G)$ in $\tilde O(n^{1-1/\omega})$ rounds. Note that $K(G)$ and $C(G)$ can then be computed (and distributed to all the nodes) in a constant number of rounds using the scheme of Lemma \ref{lemma:Dolev}.

The strategy used in \cite{CheriyanSICOMP97} to compute the set of non-critical vertices $D(G)$ in the centralized setting is as follows. Take a prime $p=\Theta(n^3)$ and substitute the variables $x_{ij}$ in the Tutte matrix of $G$ by random elements from the finite field $\field=\Int_p$. Let $\hat A$ denote the matrix obtained. For any $i\in[n]$, let $e_i\in\field^{1\times n}$ be the row vector with coordinate $1$ at position $i$ and coordinate zero at all other positions. Cheriyan~\cite{CheriyanSICOMP97} showed that with large probability the following property holds for all $i\in[n]$: the vertex $v_i$ is non-critical if and only if 
\begin{equation}\label{eq:rank}
\rk\left(
\left[
\begin{array}{c}
\hat A\\
e_i
\end{array}
\right]
\right)
>
\rk(\hat A).
\end{equation}
Here $\begin{bmatrix}\hat A\\e_i\end{bmatrix}$ denotes the $(n+1)\times n$ matrix obtained by appending the row $e_i$ at the bottom of~$\hat A$. The set $D(G)$ can be computed by checking if Equation (\ref{eq:rank}) holds for each $i\in[n]$, but this is not efficient enough. Observe that Equation (\ref{eq:rank}) holds if and only if $e_i$ is not in the subspace~$S$ spanned by the row vectors of $\hat A$. Let $M\in\field^{n\times (n-\rk(\hat A))}$ be the matrix representation of a basis for the right null space of $\hat A$ (i.e., each column of $M$ is a basis vector of the vector space $S^\perp=\{y\in\field^n\:|\:\hat Ay=0\}$). Observe that for any row vector $u$ we have  $uM=0$ if and only if $u\in (S^\perp)^\perp=S$.
Equation (\ref{eq:rank}) then holds if and only if $e_iM\neq 0$. In order to compute $e_iM$ for all $i\in[n]$, we simply need to compute the product of the $n\times n$ identity matrix $I_n$ by the matrix $\hat A$, and check which rows of the product contain at least one non-zero entry.

In the congested clique model this strategy can be implemented in $O(n^{1-2/\omega})$ rounds assuming that the matrix $M$ is available (and distributed among the nodes). We now explain how to construct this matrix using the ideas from \cite{Kaltofen+SPAA91,Kaltofen+91}. Let $r$ denote the rank of $A$. Let $U,V$ be two triangular random Toeplitz matrices as in Lemma \ref{lemma:rank}, and write $N=U\hat AV$. Decompose $N$ as follows:
\[
N=
\left(
\begin{array}{cc}
N_{11} & N_{12}\\
N_{21}& N_{22}
\end{array}
\right),
\]
where $N_{11}\in\field^{r\times r}$, $N_{12}\in\field^{r\times (n-r)}$,
$N_{21}\in\field^{(n-r)\times r}$ and $N_{22}\in\field^{(n-r)\times (n-r)}$. Theorem 2 from \cite{Kaltofen+91} and the analysis of Section 5 in \cite{Kaltofen+SPAA91} shows that, with probability at least $1-r(r+1)/|\field|$ on the choice of $U$ and $V$, the submatrix $N_{11}$ is non-singular and the columns of the $n\times (n-r)$ matrix
\[
M=V
\left(
\begin{array}{cc}
I_r &-N_{11}^{-1} N_{12}\\
0^{(n-r)\times r}& I_{n-r}
\end{array}
\right)
\left(
\begin{array}{c}
0^{r\times (n-r)}\\
I_{n-r}
\end{array}
\right).
\]
form a basis of the right null space of $\hat A$. The matrix $M$ can thus be computed with high probability in $O(n^{1-1/\omega})$ rounds by using the algorithm of Theorem \ref{theorem:det-inv} for computing the inverse and the algorithm for matrix multiplication of Section \ref{sec:mm}.
\end{proof}

As already mentioned, the Gallai-Edmonds decomposition has many applications. In particular, as pointed out by Cheriyan \cite{CheriyanSICOMP97},  an algorithm computing the Gallai-Edmonds decomposition immediately yields an algorithm computing a minimum vertex cover in a bipartite graph. 
Cheriyan also presented other graph-theoretic problems that can be solved using variants of his approach for computing the Gallai-Edmonds decomposition: finding the canonical partition of an elementary graph, computing the maximum number of vertex disjoint paths between two subsets of vertices of a graph, and computing a minimal separator between two subsets of vertices of a graph. A close inspection of these variants (Sections 3.3 and 4 in \cite{CheriyanSICOMP97}) shows that they can also be implemented efficiently in the congested clique model, giving again algorithms with round complexity $O(n^{1-1/\omega})$.

\section*{Acknowkedgments}
The author is grateful to Arne Storjohann for precious help concerning the computation of the determinant and to anonymous reviewers for their comments. This work is supported by the Grant-in-Aid for Young Scientists~(A) No.~16H05853, the Grant-in-Aid for Scientific Research~(A) No.~16H01705, and the Grant-in-Aid for Scientific Research on Innovative Areas~No.~24106009 of the Japan Society for the Promotion of Science and the Ministry of Education, Culture, Sports, Science and Technology in Japan. 


\end{document}